\begin{document}

\title*{Multi-state asymmetric simple exclusion processes}
\author{Chihiro Matsui}
\institute{Chihiro Matsui \at Department of Mathematical Informatics, The University of Tokyo, 7-3-1 Hongo, Bunkyo-ku, Tokyo 113-8656, Japan, \email{matsui@stat.t.u-tokyo.ac.jp}
}
%

%
\maketitle

\abstract*{
It is known that the Markov matrix of the asymmetric simple exclusion process (ASEP) is invariant under the $U_q(sl_2)$ algebra. This is the result of the fact that the Markov matrix of the ASEP coincides with the generator of the Temperley-Lieb (TL) algebra, the dual algebra of the $U_q(sl_2)$ algebra. 
Various types of algebraic extensions have been considered for the ASEP. In this paper, we considered the multi-state extension of the ASEP, by allowing more than two particles to occupy the same site. We constructed the Markov matrix by dimensionally extending the TL generators and derived explicit forms of particle densities and currents on steady states. Then we showed how decay lengths differ from the original two-state ASEP under closed boundary conditions.
}

\abstract{
It is known that the Markov matrix of the asymmetric simple exclusion process (ASEP) is invariant under the $U_q(sl_2)$ algebra. This is the result of the fact that the Markov matrix of the ASEP coincides with the generator of the Temperley-Lieb (TL) algebra, the dual algebra of the $U_q(sl_2)$ algebra. 
Various types of algebraic extensions have been considered for the ASEP. In this paper, we considered the multi-state extension of the ASEP, by allowing more than two particles to occupy the same site. We constructed the Markov matrix by dimensionally extending the TL generators and derived explicit forms of particle densities and currents on steady states. Then we showed how decay lengths differ from the original two-state ASEP under closed boundary conditions.
}

\keywords{Asymmetric simple exclusion process; Quantum groups; Integrable systems}

\section{Introduction}
The asymmetric simple exclusion process (ASEP) is a one-dimensional exclusion process that describes discrete one-dimensional random walks. Among far-from-equilibrium systems, the ASEP is one of few examples which allows us the exact analysis. 
This model is first introduced in order to discuss dynamics of ribosome translocation along mRNA~\cite{bib:MGP68}. Then the model was applied in the context of traffic and transport systems~\cite{bib:CSS00}. 
As the exactly solvable non-equilibrium systems, the ASEP has been attracted for decades; After being used in the context of a diffusion process~\cite{bib:L85, bib:S91}, the connection to the solvable two-dimensional lattice system, the six-vertex model associated with the $U_q(sl_2)$ algebra, was pointed out in~\cite{bib:KDN90, bib:GS92}. 
Based on the $U_q(sl_2)$-invariance of the Markov matrix, the dynamical exponent was discussed besides the particle-density profiles and the particle currents were computed~\cite{bib:SS94}. 

The standard procedure to analyze the ASEP with general boundary conditions is the matrix product ansatz (MPA)~\cite{bib:DEHP93}. By assigning a matrix to each of the empty and occupied state, the steady state is given by the combination of those matrices which satisfy quadratic relations. 
Employing this method, various physical quantities such as particle-density profiles and particle currents, as well as steady states, were exactly computed. The existence of the phase transition in the particle densities with respect to incoming and outgoing hopping rates was obtained from the exact calculation~\cite{bib:K91, bib:DDM92, bib:DE93, bib:SD93}, as the evidence of strong dependence on boundary conditions of non-equilibrium systems. Matrices casted in a matrix product state satisfies quadratic relations which are understood in the context of the $q$-boson~\cite{bib:S94} and later whose connection with the Askey-Wilson polynomials was pointed out~\cite{bib:S99}. 

The Bethe ansatz method, first introduced to solve many body problems~\cite{bib:B31}, was also applied to the computation of physical quantities on the steady state of the ASEP. The periodic boundary condition~\cite{bib:GS92*}, the system with infinite length~\cite{bib:K95, bib:S97}, and the open boundary case~\cite{bib:GE06} were discussed in each literature. The relation to the Kardar-Parisi-Zhang (KPZ) universality classes was also discussed through the dynamical exponent derived from the analysis of the Bethe-ansatz equations for the periodic boundary case~\cite{bib:GS92*} and through the asymptotic behavior of correlation functions for the closed boundary case~\cite{bib:SS94}. 
Even though the connection between the MPA and the Bethe vectors have not been clear for years, this question was unveiled in~\cite{bib:GM06}, by expressing creation and annihilation operators of the algebraic Bethe ansatz in terms of the operators in the matrix product states.

The model was algebraically extended to be associated with various algebras. The possible realizations of the Hecke algebra were pointed out which describe the time evolution of reaction-diffusion processes~\cite{bib:AR93}. 
Among various algebraic extensions, the ASEP with multi-species has been closely studied~\cite{bib:ADR98, bib:K99, bib:FJ99, bib:FM07, bib:AKSS09, bib:EFM09, bib:PEM09, bib:AAMP11, bib:AM13}. The steady state was constructed by the MPA, whose matrix elements satisfy similar quadratic relations as in the one-species case~\cite{bib:EFM09}. The spectrum of the Markov matrix was also studied on the periodic system, in which the dynamical exponent was found to be the same as the one-species case~\cite{bib:AKSS09}. 
It has been also pointed out that various models are mapped onto the ASEP. The interesting example is the zero-range process~\cite{bib:E97}, which often appears as another traffic model with hopping rates depending on the number of particles in each box.

In this paper, another algebraic extension of the ASEP, the ASEP associated with the higher-dimensional representation of the $U_q(sl_2)$ algebra is proposed. The original two-state ASEP is characterized by the two-dimensional basis associated with the fundamental representations of the $U_q(sl_2)$ algebra defined by the empty state and the one-particle state. The idea is based on the dimensional extension of the Markov matrix, which results in the multi-state extension of the ASEP. Although it is not straightforward how to extend the Markov matrix itself to the higher-dimensional one by keeping integrability of the model, we first considered the dimensional extension to the associated TL generator and then deformed the extended TL generators to be a Markov matrix. 
The dimensional extension of the TL generator was proposed in~\cite{bib:Z07}, by using the notion of fusion. Applying the projection operator, $\ell$-fold tensor product of two-dimensional vector spaces are $q$-symmetrized, which allows us to take out the $(\ell+1)$-dimensional representation, the largest irreducible subspace. 
In general, the fused TL generators satisfy neither the probability conservation nor the positivity of probability. The probability conservation is satisfied if a matrix has zero-sum columns. We have found the similarity transformation for the fused TL generators to have zero-sum columns. 
In order to ensure the positivity of probability, we focused on the fact that $\ell$ different types of TL generators are constructed by the fusion procedure. Thus, even if each of the generators does not satisfy the positivity condition, which is given by the positiveness of off-diagonal matrix elements, there would be a linear combination of different types of TL generators which makes all the off-diagonal elements positive. We showed the existence of the parameter regimes for the positivity condition and conjectured the restrictions among the coefficients for the linearly combined TL generators to satisfy the positivity condition. 
In order to obtain characterizing feature of the multi-state ASEPs, two special limits, the symmetric case and the totally asymmetric case, were discussed. In these limits, the physical meanings of the coefficients of combined generators become clear. 
The virtue of employing the fusion procedure is that, the TL generator is dimensionally extended by keeping the commutativity with the $U_q(sl_2)$ generators. Consequently, the Markov matrices associated with higher-dimensional TL generators, which describe multi-state stochastic processes, keep the $U_q(sl_2)$-invariance. 
This construction of the multi-state ASEPs suggests the existence of a new family of integrable stochastic models associated with the higher-dimensional representations of quantum groups. 

It is an interesting question how the number of states in the model affects on physical behaviors. For example, we would expect different exponents in decay lengths for the ASEPs with different number of states. Due to the possible variety of incoming and outgoing hopping rates, the rich structure of phases in the particle-density profiles is expected under general boundary conditions. 
This paper focuses on phenomena in the steady state under the closed boundary conditions, in which the whole system is invariant under the $U_q(sl_2)$ algebra. We first derive the steady states of the multi-state ASEPs, which are given by the basis of the $U_q(sl_2)$ algebra. Although computations are cumbersome if one works with high-dimensional representations, we show how calculations can be proceeded on the two-dimensional representations by means of the projection operators. 
We derive the explicit forms of particle-density profiles, which show the transition from the zero-particle domain to the high-density domain. Then, we prove that the interval of the transition domain depends on the number of states of the system. Consequently, we also show that the existence of right-moving currents and left-moving currents, although they compensate with each other, at the transition domain from the zero density to the high density.

This paper is organized as follows. In Section~2, we review the algebraic aspects of the ASEP. The minimum necessary facts about the $U_q(sl_2)$ algebra are also given for the construction of multi-state ASEPs and the calculation of particle densities and currents. 
Section~3 is devoted to the model settings of the multi-state ASEPs, including the construction of the higher-dimensional TL generators. We give detailed explanation about how the fused TL generators are deformed to satisfy the probability conservation and the positivity of probability. The symmetric limit and totally-asymmetric limit of the multi-state ASEPs are discussed to give physical meanings to the coefficients of a linear combination of the fused TL generators. 
In Section~4, we show the exact calculations of particle-density profiles and  currents as examples of physical quantities that can be computed on the multi-state ASEPs. 
In Section~5, the large-volume limit is considered. The asymptotic behaviors of density profiles are carefully analyzed and the density transition from the low-density phase to the high-density phase was obtained. It was also found that the decay length of the density transition depends on the number of states of the process. 
The last section is devoted to concluding remarks and open problems. 

\section{A brief review of ASEP}

The asymmetric simple exclusion process (ASEP) is a stochastic process defined on a one-dimensional lattice consisting of $N$ sites with a variable $\tau_i \in \{0, 1\}$ attached to each site $i$. This variable $\tau_i$ is, in a physical sense, considered as the number of particles admitted in the $i$th box. The transition rules is determined by the local transition rates defined for the configuration of variables on two neighboring sites $(\tau_i, \tau_{i+1})$; the transition $(1,0) \to (0,1)$ occurs with a rate $p_R$, while the transition $(0, 1) \to (1,0)$ with a rate $p_L$. 

The time evolution of the configuration of variables is given by the differential-difference equation. By writing the state of the whole system as 
\begin{equation} \label{state-def1}
 |\tau_1, \dots, \tau_N \rangle := |\tau_1 \rangle \otimes \cdots \otimes |\tau_N \rangle,  
\end{equation} 
where $|\tau_i \rangle \in \mathbb{C}^2$, the vector of configurations at time $t$ is expressed as 
\begin{equation} \label{prob_vec}
 |P(t) \rangle = \sum_{\tau_i \in \{0, 1\}} p(t; \tau_1, \dots, \tau_N) |\tau_1, \dots, \tau_N \rangle 
\end{equation}
with the probabilities $p(t; \tau_1, \dots, \tau_N)$ to obtain each configuration. Then the time evolution of $|P(t) \rangle$ is simply given by the differential-difference equation of $p(t; \tau_1, \dots, \tau_N)$: 
\begin{equation} \label{time-evol}
\begin{split}
 \frac{d}{dt} p(t; \tau_1, \dots, \tau_N) = &\sum_{i=1}^N \Theta (\tau_{i+1} - \tau_i)\, p(t; \tau_1, \dots, \tau_{i+1}, \tau_i, \dots, \tau_N) \\
 &- \sum_{i=1}^N \Theta(\tau_i - \tau_{i+1})\, p(t; \tau_1, \dots, \tau_i, \tau_{i+1}, \dots, \tau_N), 
\end{split}
\end{equation}
where 
\begin{equation}
 \Theta(x) = 
  \begin{cases}
   -p_R & x<0 \\
   0 & x=0 \\
   p_L & x>0. 
   \end{cases}
\end{equation}

In the physics realm, the time evolution of $|P(t) \rangle$ (\ref{time-evol}) is often written in a matrix form, which is known as the master equation: 
\begin{equation}
 \frac{d}{dt} |P(t) \rangle = M |P(t) \rangle, 
\end{equation}
where $M$ is the Markov matrix obtained from the update rules (\ref{time-evol}): 
\begin{align}
&M = \sum_{i=1}^{N-1} M_{i,i+1}, \\
&M_{i,i+1} = \bm{1}_1\otimes \bm{1}_2\otimes \cdots \bm{1}_{i-1}\otimes
\begin{pmatrix}
0 & 0 & 0 & 0\\
0 & -p_L & p_R & 0\\
0 & p_L & -p_R & 0\\
0 & 0 & 0 & 0
\end{pmatrix}_{i,i+1}
\otimes \bm{1}_{i+2}\otimes \cdots \otimes \bm{1}_N, \label{update}
\end{align}
where the matrix $M_{i,i+1}$ nontrivially acts only on the $i$th and $(i+1)$th spaces of the $N$-fold tensor product of the fundamental representations. 
The equation (\ref{update}) indeed describes the process of particle-hopping to the right with a rate $p_R$ and to the left with a rate $p_L$ under the choice of the one-site state $|\tau_i \rangle$ as  
\begin{equation} \label{2state-basis}
 |0 \rangle_{i} = \begin{pmatrix} 1 \\ 0 \end{pmatrix}_{i}, 
\qquad
 |1 \rangle_{i} = \begin{pmatrix} 0 \\ 1 \end{pmatrix}_{i}. 
\end{equation}
Let us remark that under the closed boundary conditions, no incoming or outgoing particle is obtained on the system. 

\subsection{$U_q(sl_2)$-invariance of the Markov matrix}
The ASEP is an integrable stochastic process since the update operator $M_{i,i+1}$ is identified with a Temperley-Lieb (TL) generator: 
\begin{equation}
e_i = 
\begin{pmatrix}
0 & 0 & 0 & 0\\
0 & q^{-1} & -1 & 0 \\
0 & -1 & q & 0\\
0 & 0 & 0 & 0
\end{pmatrix}_{i,i+1}, 
\end{equation}
\ifx10
\begin{align}
& P_{i,i+1} M_{i,i+1} = -M_{i,i+1}\\
& M_{i,i+1}^2 = -M_{i,i+1}\\
& M_{i,i+1} M_{i+1,i+2} M_{i,i+1} = pq M_{i,i+1}\\
& M_{i,i+1} M_{i-1,i} M_{i,i+1} = pq M_{i,i+1}\\
& M_{i,i+1} M_{j,j+1} = M_{j,j+1} M_{i,i+1} \qquad |i-j| \ge 2, 
\end{align}
\fi 
which satisfies the following algebraic relations (Fig.~\ref{fig:TLrel}): 
\begin{equation} \label{TL-relations}
\begin{split}
& e_i^2 = (q+ q^{-1}) e_i, \\
& e_i e_{i+1} e_i = e_i, \\
& e_i e_j = e_j e_i, \qquad |i-j| \ge 2. 
\end{split}
\end{equation}
Applying the following similarity transformation:  
\begin{equation} \label{TL-ASEP}
 U = \otimes_{i=1}^N U_i= \otimes_{i = 1}^N 
	\begin{pmatrix}
		1 & 0 \\
		0 & q^{i-1}
	\end{pmatrix}_i, 
	\qquad 
	q = \sqrt{\frac{p_R}{p_L}} > 0 
\end{equation}
leads to the update operator (\ref{update}) written by the TL generator; 
the update operator is related to the TL generator via 
\begin{equation}
  M_{i,i+1} = -\sqrt{p_R p_L}\, U_{i,i+1} e_i U_{i,i+1}^{-1}, 
\end{equation}
for which one can easily check that the relations (\ref{TL-relations}) hold up to overall factors. 

Let us introduce the spin operators given by
\begin{equation}
 S^+ = \begin{pmatrix} 0 & 1 \\ 0 & 0 \end{pmatrix}, \quad
 S^- = \begin{pmatrix} 0 & 0 \\ 1 & 0 \end{pmatrix}, \quad
 q^{S^z} = \begin{pmatrix} q^{\frac{1}{2}} & 0 \\ 0 & q^{-\frac{1}{2}} \end{pmatrix}. 
\end{equation}
The spin operators are known to generate the $U_q(sl_2)$ algebra. 
The TL algebra is a dual algebra of the $U_q(sl_2)$ algebra in the sense that the TL generators commute with those of the $U_q(sl_2)$ algebra: 
\begin{equation}
 [e_i,\, \Delta(X)] = 0, 
  \qquad 
  X \in \{S^{\pm}, q^{S^z}\}, 
\end{equation}
where $\Delta$ represents the coproduct defined in Appendix~1. 
The commutativity of these generators allows us to compute the exact steady state of the ASEP, and consequently, the exact physical quantities. 

In our notations (\ref{2state-basis}), the empty state $|0 \rangle$, which is naturally invariant under the time development, is identified with the highest weight vector of the $U_q(sl_2)$ algebra. Namely, the steady state of the ASEP with no particle is given by the highest weight vector of the $U_q(sl_2)$ algebra: 
\begin{equation}
 \frac{d}{dt} |0 \rangle = M |0 \rangle = 0. 
\end{equation}
Taking into account that the $U_q(sl_2)$ generators commute with the TL generators, and consequently, with the Markov matrix up to the similarity transformation, we have 
\begin{equation}
 M U (\Delta^{(N)}(S^-))^n U^{-1} |0 \rangle
  = U (\Delta^{(N)}(S^-))^n U^{-1} M |0 \rangle
  = 0, 
\end{equation}
where $\Delta^{(N)}$ is the $N$th coproduct (Appendix~1). 
Thus, a series of steady states is obtained as the vector basis of the $U_q(sl_2)$ algebra constructed by applying the generator $S^-$ to the highest weight vector.

\section{Multi-state ASEP}
As we reviewed in the previous section, the update operators of the ASEP satisfy the TL relations. Based on this fact, the extension of the integrable stochastic process to multi-state cases is, taking into account that a state of each box is given by (\ref{2state-basis}), achieved by constructing the higher-dimensional update matrices which still commute with the $U_q(sl_2)$ generators. 
The dimensional extension of the TL generators have been discussed by P. Zinn-Justin in~\cite{bib:Z07}, in which they have constructed fused TL generators from $\ell$-fold tensor products of the fundamental representations with the use of a projection operator. Although the fused TL generators themselves cannot be update operators, as they enjoy neither the probability conservation nor the positivity of probability in nature, we found similarity transformations, which make the fused TL generators satisfy the probability conservation. Moreover, we show that a proper linear combination of different types of the fused TL generators has only positive off-diagonal matrix elements, which ensures the positivity of probability.  

\subsection{Higher-spin TL generators}
The TL generator of a higher-dimensional representation is constructed from tensor products of fundamental representations. Since the fundamental representation of the TL generator is the two-dimensional representation, we consider the $\ell$-fold tensor products to construct the $(\ell+1)$-dimensional TL generators, which are associated with the $(\ell+1)$-state ASEP. 
Let us consider the simplest example, {\it i.e.} the construction of three-dimensional TL generators. Following the above statement, we consider the two-fold tensor product of the fundamental representations. As in the spin composition, the irreducible decomposition leads us to obtain the totally $q$-symmetric subspace, which has the three-dimensional representation, and the totally $q$-asymmetric subspace with the one-dimensional representation. Then, we take out the totally $q$-symmetric subspace by using the projection operator introduced later. 
Similarly, the largest irreducible subspace of the $\ell$-fold tensor product of the fundamental representations is totally $q$-symmetric, which has the $(\ell+1)$-dimensional representation, and thus can be taken out by the projection operator. 

The projection operator is recursively constructed from the TL generators: 
\begin{equation}
Y^{(k+1)}(e_j,\dots,e_{j+k-1}) = Y^{(k)}(e_j,\dots,e_{j+k-2}) \left(1 - \frac{U_{k-1}(\tau)}{U_{k}(\tau)} e_{k+\ell-1} \right) Y^{(k)}(e_j,\dots,e_{j+k-2}), 
\end{equation}
with the initial condition $Y^{(1)} = 1$. The functions $U_k(\tau)$ are the Chebyshev polynomials of the second kind with a parameter $\tau$ given by $\tau=(q+q^{-1})/2$. The superscripts of the projection operator denote how many spaces the operator acts on, that is, $Y^{(\ell)}$ takes out the largest irreducible subspace from $\ell$-fold tensor product of the fundamental representations. For instance, the projection operator $Y^{(2)}$, which acts on the $k$th and $(k+1)$th spaces, is obtained as 
\begin{equation}
 Y^{(2)}(e_i) = Y^{(1)} 
 \left(
 1 - \frac{U_0(\tau)}{U_1(\tau)} e_i
 \right)
 Y^{(1)}
 = \begin{pmatrix}
     1 & 0 & 0 & 0 \\
     0 & \frac{q}{q+q^{-1}} & \frac{1}{q+q^{-1}} & 0 \\
     0 & \frac{1}{q+q^{-1}} & \frac{q^{-1}}{q+q^{-1}} & 0 \\
     0 & 0 & 0 & 1
    \end{pmatrix}_{i,i+1}. 
\end{equation}
Using this projection operator, there are two possible ways to fuse the TL operator. From now on, we simply write $Y^{(k)}(e_j,\dots,e_{j+k-2})$ by $Y^{(k)}_j$ and introduce a new notation $Y^{(k)} = \prod_{j=1}^N Y^{(k)}_j$. Then the three-dimensional fused TL generators are given by 
\begin{align}
 &e_i^{(2;1)} = Y^{(2)} e_{2(i-1) + 2} Y^{(2)}, \label{TL-3dim1} \\
 &e_i^{(2;2)} = Y^{(2)} e_{2(i-1) + 2} e_{2(i-1) + 1} e_{2(i-1) + 3} e_{2(i-1) + 2} Y^{(2)}. \label{TL-3dim2}
\end{align}
The graphical representations of $e_i^{(2;1)}$ and $e_i^{(2;2)}$ are given in Fig.~\ref{fig:TL-3dim} (Appendix~2). These operators are known to satisfy the $SO(3)$ Birman-Murakami-Wenzl (BMW) algebra \cite{bib:M87, bib:BW89, bib:FR02} given in Appendix~3. 

The fused TL generator of an $e_i^{(\ell;r)}$ is given by 
\begin{equation} \label{TL-const}
e^{(\ell;r)}_i = Y^{(\ell)}\, e_{\ell i}\, e_{\ell i-1}\, e_{\ell i+1} \cdots e_{\ell i-r+1} \cdots e_{\ell i+r-1} \cdots e_{\ell i-1}\, e_{\ell i+1}\, e_{\ell i}\, Y^{(\ell)}_{\ell i}, 
\end{equation}
where $r=1,2,\dots,\ell$ indicates the type of the TL generators (Fig.~\ref{fig:TL}). In general, $\ell$ different kinds of the TL generators can be constructed by fusing $\ell$ fundamental representations. 
It is important to note that each of $\ell$ kinds of the TL generators constructed in this way commutes with the $U_q(sl_2)$ generators of the $(\ell+1)$-dimensional representations: 
\begin{equation} \label{TL-Uq}
 [e_i^{(\ell;r)},\, \Delta(X)] = 0, 
  \qquad
  X \in \{S^{\pm}, q^{S^z}\}. 
\end{equation}
\begin{figure}
\begin{center}
 \includegraphics[scale=0.65]{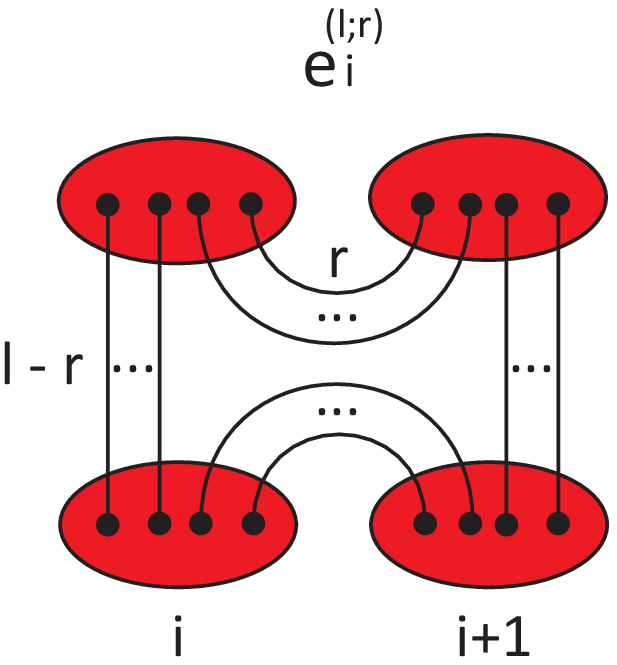}
 \caption{The TL generator of the $(\ell+1)$-dimensional representation. $\ell$ spaces are fused by the projection operator $Y^{(\ell)}$, denoted by a red circle. The operator $e_i^{(\ell;r)}$ includes $r$ arches. } \label{fig:TL}
\end{center}
\end{figure}

Due to the existence of a parameter $q$, the projection operator $Y_k^{(\ell)}$ does not simply symmetrize $\ell$ spaces, as in the $SU(2)$ case, but $q$-symmetrizes them. For instance, the following three vectors are $q$-symmetric under the transpose of two spaces: 
\begin{equation}
 \begin{pmatrix} 1 \\ 0 \end{pmatrix} 
 \otimes 
  \begin{pmatrix} 1 \\ 0 \end{pmatrix},
  \quad
   q^{1/2} \begin{pmatrix} 1 \\ 0 \end{pmatrix} 
 \otimes 
  \begin{pmatrix} 0 \\ 1 \end{pmatrix}
  +
   q^{-1/2} \begin{pmatrix} 0 \\ 1 \end{pmatrix} 
 \otimes 
  \begin{pmatrix} 1 \\ 0 \end{pmatrix},
  \quad
\begin{pmatrix} 0 \\ 1 \end{pmatrix} 
 \otimes 
  \begin{pmatrix} 0 \\ 1 \end{pmatrix}. 
\end{equation}
One obtains later in (\ref{np-vector}) that these vectors are the zero-, one-, and two-particle local states of the three-state ASEP, respectively.

\subsection{Update operators of multi-state ASEPs}
Now we construct update operators of the multi-state ASEPs. In order to be an update operator, the following two conditions should be satisfied: 
\begin{enumerate}
 \item[(i)] The sum of each column should be zero ({\it the principle of probability preservation}). 
 \item[(ii)] The off-diagonal elements should be negative values, while the diagonal elements should be positive values ({\it positivity of probability}). 
\end{enumerate}

\subsubsection{Probability conservation}
As the simplest example, we first show how the three-dimensional TL generator of the type-1 $e^{(2;1)}_i$ is modified to satisfy the probability conservation. The operator is given by the $9$-by-$9$ matrix since it acts on a two-fold tensor product of three-dimensional vector spaces: 
\begin{equation}
 e^{(2;1)}_i = 
\left(
\begin{smallmatrix}
 0 & 0 & 0 & 0 & 0 & 0 & 0 & 0 & 0 \\
 0 & \frac{q^{-2}}{q+q^{-1}} & 0 & -\frac{1}{q+q^{-1}} & 0 & 0 & 0 & 0 & 0 \\
 0 & 0 & \frac{1}{q} & 0 & -\frac{q}{\left(q+q^{-1}\right)^2} & 0 & 0 & 0 & 0 \\
 0 & -\frac{1}{q+q^{-1}} & 0 & \frac{q^2}{q+q^{-1}} & 0 & 0 & 0 & 0 & 0 \\
 0 & 0 & -q & 0 & \frac{q^2-1+q^{-2}}{q+q^{-1}} & 0 & -\frac{1}{q} & 0 & 0 \\
 0 & 0 & 0 & 0 & 0 & \frac{q^{-2}}{q+q^{-1}} & 0 & -\frac{1}{q+q^{-1}} & 0 \\
 0 & 0 & 0 & 0 & -\frac{q^{-1}}{\left(q+q^{-1}\right)^2} & 0 & q & 0 & 0 \\
 0 & 0 & 0 & 0 & 0 & -\frac{1}{q+q^{-1}} & 0 & \frac{q^2}{q+q^{-1}} & 0 \\
 0 & 0 & 0 & 0 & 0 & 0 & 0 & 0 & 0 
\end{smallmatrix}
\right)_{i,i+1}. 
\end{equation}
This matrix apparently does not satisfy the probability conservation and, in general, the same is true for the other TL fused generators. However, we found for any dimensional representation the existence of a similarity transformation which makes the TL generators satisfy the probability preservation. 
\begin{proposition}
The TL generators of the $(\ell+1)$-dimensional representation satisfy the principle of probability conservation after a similarity transformation given by the following matrix: 
\begin{align} \label{U-mat}
 U^{(\ell)} = \otimes_{i=1}^N U_i = \otimes_{i = 1}^N
\left(
\begin{smallmatrix}
	a_0^{(i)} & & & & & \\
	& a_1^{(i)} & & & & \\
	& & \ddots & & & \\
	& & & a_k^{(i)} & & \\
	& & & & \ddots & \\
	& & & & & a_\ell^{(i)}
\end{smallmatrix}
\right)_i 
\end{align}
whose matrix elements are given by $a_k^{(i)} = q^{k \ell (i-1)}$. 
\end{proposition}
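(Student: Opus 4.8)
The plan is to reduce the statement about the $(\ell+1)$-dimensional fused generators $e_i^{(\ell;r)}$ to a statement about the underlying two-dimensional TL generators $e_j$, for which the corresponding similarity transformation $U$ of \eqref{TL-ASEP} is already known. First I would observe that, by construction \eqref{TL-const}, each $e_i^{(\ell;r)}$ is built from products of the elementary generators $e_j$ conjugated by the projectors $Y^{(\ell)}$, and that the projectors $Y^{(\ell)}$ are themselves polynomials in the $e_j$'s (the recursive definition). Hence it suffices to find a diagonal similarity transformation on the $\ell$-fold tensor product of $\mathbb{C}^2$ that renders \emph{every} elementary $e_j$ (for all $j$ in the relevant block) column-sum zero, and then to check that this transformation descends to the $(\ell+1)$-dimensional symmetric subspace as the claimed $U^{(\ell)}$ with $a_k^{(i)}=q^{k\ell(i-1)}$.

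The key steps, in order, are: (1) Recall from \eqref{TL-ASEP} that $V_j := \bigotimes_j \mathrm{diag}(1,q^{j-1})$ turns $e_j$ into a column-sum-zero matrix (up to the overall sign/scale that is irrelevant for probability conservation, since a scalar multiple of a column-sum-zero matrix is again column-sum-zero). So on the full $\ell N$-fold tensor product of $\mathbb{C}^2$, the transformation $\tilde U = \bigotimes_{j=1}^{\ell N}\mathrm{diag}(1,q^{j-1})$ makes all elementary generators $e_j$ column-sum-zero simultaneously. (2) Note that $\tilde U$ commutes with the $U_q(sl_2)$ action only up to the structure that is compatible with the coproduct, and more importantly that $\tilde U$ preserves the image of each local projector $Y^{(\ell)}_{\ell i}$, since conjugation by a diagonal matrix of this graded form is exactly the operation relating the "staircase" normalization of $U_q(sl_2)$ across sites. (3) Restrict $\tilde U$ to the $(\ell+1)$-dimensional $q$-symmetric subspace sitting inside the $i$-th block of $\ell$ fundamental spaces: a basis vector of weight $k$ (i.e.\ the $q$-symmetrization of a tensor with $k$ factors equal to $|1\rangle$) is an eigenvector of the block of $\tilde U$ on sites $\ell(i-1)+1,\dots,\ell i$, and one computes the eigenvalue by summing the exponents $\ell(i-1), \ell(i-1)+1,\dots$ contributed by the $k$ slots carrying $|1\rangle$; after factoring out a common normalization this yields precisely $a_k^{(i)}=q^{k\ell(i-1)}$ (the within-block shifts $0,1,\dots,\ell-1$ being absorbed into an overall per-site constant that cancels in a similarity transformation). (4) Conclude that $U^{(\ell)} e_i^{(\ell;r)} (U^{(\ell)})^{-1}$ is column-sum-zero for every $i$ and every $r$, hence so is the sum defining the multi-state Markov matrix.

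I expect the main obstacle to be step (3): verifying carefully that the restriction of the global diagonal transformation $\tilde U$ to the fused $(\ell+1)$-dimensional subspace is again diagonal with \emph{exactly} the stated entries, rather than merely diagonal with some weight-dependent entries. This requires checking that $\tilde U$ maps the $q$-symmetric subspace of the $i$-th block to itself (so that the restriction makes sense) and that the eigenvalue attached to the weight-$k$ vector is a pure power $q^{k\ell(i-1)}$ once the $i$-independent part is normalized away. The cleanest way to do this is to work on the two-dimensional representation via the projection operator, exactly as the paper announces it will do for other computations: write the weight-$k$ basis vector of the $i$-th fused site explicitly as $Y^{(\ell)}_{\ell i}$ applied to $|0\rangle^{\otimes(\ell-k)}\otimes|1\rangle^{\otimes k}$ (in the appropriate $q$-symmetrized order), apply $\tilde U$, and read off the scalar; the compatibility $[\tilde U, Y^{(\ell)}_{\ell i}]$-type relation needed here follows from the fact that each $Y^{(\ell)}$ is a polynomial in the $e_j$'s together with the known conjugation behaviour of $e_j$ under the graded diagonal matrices. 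A secondary, purely bookkeeping obstacle is tracking the overall scalar factors (powers of $q$ and of $q+q^{-1}$) produced by conjugating the $e_j$'s and the $Y^{(\ell)}$'s, but since probability conservation is insensitive to an overall scalar on each $e_i^{(\ell;r)}$, these can be discarded at the end.
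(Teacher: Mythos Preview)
Your approach is conceptually appealing—reduce to the two-dimensional case where the similarity transformation is already known—but it contains a genuine gap at steps (2) and (3). You claim that $\tilde U = \bigotimes_{j=1}^{\ell N}\mathrm{diag}(1,q^{j-1})$ preserves the image of each local projector $Y^{(\ell)}_{\ell i}$, so that restricting $\tilde U$ to the fused subspace makes sense and yields $U^{(\ell)}$. This is false. For $\ell=2$, the $q$-symmetric weight-one vector in block $i$ is $q^{1/2}|01\rangle + q^{-1/2}|10\rangle$; applying $\tilde U$ on sites $2i-1,2i$ gives $q^{2i-3/2}\bigl(q\,|01\rangle + q^{-1}|10\rangle\bigr)$, which lies \emph{outside} $\ker e_{2i-1}=\mathrm{Im}\,Y^{(2)}_i$ for $q\neq 1$. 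The within-block shifts $0,1,\dots,\ell-1$ that you propose to absorb into an overall constant are not a constant at all: they produce a weight-dependent distortion that moves vectors out of the $q$-symmetric subspace. Consequently the ``restriction'' in step (3) is not well defined, and the column-sum-zero property of $\tilde U\,e_i^{(\ell;r)}\tilde U^{-1}$ on the big space does not transfer to $U^{(\ell)} e_i^{(\ell;r)} (U^{(\ell)})^{-1}$ on the fused space by the mechanism you describe.

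The paper takes a completely different and much more direct route. It first observes that conjugation by $U^{(\ell)}$ multiplies the matrix element $\langle r-x|\otimes\langle s+x|\;e_i^{(\ell;k)}\;|r\rangle\otimes|s\rangle$ by the site-independent factor $q^{\ell x}$ (since $a^{(i)}_{r-x}a^{(i+1)}_{s+x}\big/a^{(i)}_{r}a^{(i+1)}_{s}=q^{\ell x}$), and then simply invokes the explicit closed-form expressions for these matrix elements derived in Appendix~4 (equation \eqref{TL-ldim}) to verify by direct calculation that each column sums to zero. There is no lift to the $2^{\ell N}$-dimensional space. Your strategy could perhaps be repaired by exploiting total particle-number conservation of $e_i^{(\ell;r)}$ to relate the left null vectors $\langle\mathbf{1}_{\text{big}}|\tilde U$ and $\langle\mathbf{1}_{\text{fused}}|U^{(\ell)}$ weight-sector by weight-sector, but that is a different argument from the one you outline, and it is not obviously simpler than the paper's direct verification.
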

\begin{proof}
The matrix elements of $e_i^{(\ell;k)}$ is obtained by calculating $\langle r-x| \otimes \langle s+x|\; e_i^{(\ell;k)} \;|r \rangle \otimes |s \rangle$. After transformed by $U^{(\ell)}$, the matrix elements are transformed as $q^{\ell x} \langle r-x| \otimes \langle s+x| e_i^{(\ell;k)} |r \rangle \otimes |s \rangle$. This implies that the deformation of the TL generator does not depend on the site number $i$. 
Using the explicit forms of the fused TL generators (\ref{TL-ldim}) in Appendix~4, one obtains zero-sum columns after the transformation by $U^{(\ell)}$ from direct calculation. 
\end{proof}
The important property is that the matrix $U^{(\ell)}$ simultaneously transforms any kinds of the TL generators, as long as they belong to the same dimensional representation, in order to satisfy the probability conservation. As an illustration, let us give the explicit forms of the transformed three-dimensional TL generators: 
\begin{equation} \label{e21-def}
 U_{i,i+1}^{(2)} e_i^{(2;1)} (U_{i,i+1}^{(2)})^{-1}
 = 
\left(
\begin{smallmatrix}
 0 & 0 & 0 & 0 & 0 & 0 & 0 & 0 & 0 \\
 0 & \frac{q^{-2}}{q+q^{-1}} & 0 & -\frac{q^2}{q+q^{-1}} & 0 & 0 & 0 & 0 & 0 \\
 0 & 0 & \frac{1}{q} & 0 & -\frac{q^3}{\left(q+q^{-1}\right)^2} & 0 & 0 & 0 & 0 \\
 0 & -\frac{q^{-2}}{q+q^{-1}} & 0 & \frac{q^2}{q+q^{-1}} & 0 & 0 & 0 & 0 & 0 \\
 0 & 0 & -\frac{1}{q} & 0 & \frac{q^2-1+q^{-2}}{q+q^{-1}} & 0 & -q & 0 & 0 \\
 0 & 0 & 0 & 0 & 0 & \frac{q^{-2}}{q+q^{-1}} & 0 & -\frac{q^2}{q+q^{-1}} & 0 \\
 0 & 0 & 0 & 0 & -\frac{q^{-3}}{\left(q+q^{-1}\right)^2} & 0 & q & 0 & 0 \\
 0 & 0 & 0 & 0 & 0 & -\frac{q^{-2}}{q+q^{-1}} & 0 & \frac{q^2}{q+q^{-1}} & 0 \\
 0 & 0 & 0 & 0 & 0 & 0 & 0 & 0 & 0 
\end{smallmatrix}
\right)_{i,i+1}, 
\end{equation}
\begin{equation} \label{e22-def}
 U_{i,i+1}^{(2)} e_i^{(2;2)} (U_{i,i+1}^{(2)})^{-1}
 =
\left(
\begin{smallmatrix}
 0 & 0 & 0 & 0 & 0 & 0 & 0 & 0 & 0 \\
 0 & 0 & 0 & 0 & 0 & 0 & 0 & 0 & 0 \\
 0 & 0 & \frac{1}{q^2} & 0 & -\frac{q}{q+q^{-1}} & 0 & q^4 & 0 & 0 \\
 0 & 0 & 0 & 0 & 0 & 0 & 0 & 0 & 0 \\
 0 & 0 & -q^{-3}(q+q^{-1}) & 0 & 1 & 0 & -q^3 \left(q+q^{-1}\right) & 0 & 0 \\
 0 & 0 & 0 & 0 & 0 & 0 & 0 & 0 & 0 \\
 0 & 0 & \frac{1}{q^4} & 0 & -\frac{q^{-1}}{q+q^{-1}} & 0 & q^2 & 0 & 0 \\
 0 & 0 & 0 & 0 & 0 & 0 & 0 & 0 & 0 \\
 0 & 0 & 0 & 0 & 0 & 0 & 0 & 0 & 0 
\end{smallmatrix} 
\right)_{i,i+1}, 
\end{equation}
both of which satisfy the probability conservation. 
\subsubsection{Positivity of probability}
Besides the principle of probability preservation, the update matrix should satisfy the positivity of probability which is realized by such conditions that the off-diagonal elements of the update matrix are positive values and the diagonal elements are negative. 
For instance, the transformed operator $M_{i,i+1}^{(2;1)} = -U_{i,i+1}^{(2)} e^{(2;1)}_i (U_{i,i+1}^{(2)})^{-1}$ in nature satisfies the posivitivity of probability, as is obtained from (\ref{e21-def}), since we have $q \geq 0$ (\ref{TL-ASEP}). However, the operator $M_{i,i+1}^{(2;2)} = -U_{i,i+1}^{(2)} e^{(2;2)}_i (U_{i,i+1}^{(2)})^{-1}$ (\ref{e22-def}) has negative off-diagonal elements at the $(3,7)$ and $(7,3)$-elements. 
\begin{proposition}
Consider the following linear combination of the fused TL generators: 
\begin{equation}
 M^{(2)}_{i,i+1} = b_1^{(2)} M_{i,i+1}^{(2;1)} + b_2^{(2)} M_{i,i+1}^{(2;2)}. 
\end{equation}
The matrix $M_{i,i+1}^{(2)}$ satisfies the positivity condition as long as $\beta = b_2^{(2)} / b_1^{(2)}$ satisfies the following conditions: 
\begin{equation} \label{cond_comb}
\begin{split}
		&-\frac{q^2}{q+q^{-1}} < \beta < 0 \qquad (0 < q \leq 1) \\
		&-\frac{q^{-2}}{q+q^{-1}} < \beta < 0 \qquad (1 \leq q). 
\end{split}
\end{equation}
\end{proposition}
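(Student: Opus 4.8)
The plan is to reduce the claim to a finite check of sign conditions on the off‑diagonal entries of the $9\times 9$ matrix $M^{(2)}_{i,i+1} = b_1^{(2)} M_{i,i+1}^{(2;1)} + b_2^{(2)} M_{i,i+1}^{(2;2)}$, using the explicit forms already displayed in~(\ref{e21-def}) and~(\ref{e22-def}). First I would fix the normalization $b_1^{(2)} > 0$ (this is harmless, since the Markov property is insensitive to a positive overall scale, and we will see that $\beta = b_2^{(2)}/b_1^{(2)} < 0$ is forced anyway), so that $M^{(2)}_{i,i+1} = b_1^{(2)}\bigl(M_{i,i+1}^{(2;1)} + \beta\, M_{i,i+1}^{(2;2)}\bigr)$ and it suffices to analyze the bracketed matrix. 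Reading off the two displayed matrices, $M_{i,i+1}^{(2;1)} = -U^{(2)}e^{(2;1)}U^{(2)-1}$ already has all off‑diagonal entries nonnegative (each is a monomial in $q$ with a positive coefficient, using $q>0$), while $M_{i,i+1}^{(2;2)} = -U^{(2)}e^{(2;2)}U^{(2)-1}$ has its nonzero off‑diagonal entries confined to row/column indices $\{3,5,7\}$: explicitly the off‑diagonal entries $(3,5),(5,3),(5,7),(7,5)$ of $-e^{(2;2)}$ are positive while $(3,7)$ and $(7,3)$ are negative (equal to $-q^4$ and $-q^{-4}/(q+q^{-1})\cdot$ up to the sign already built into the display).

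The second step is the entrywise sign bookkeeping. For every off‑diagonal position $(a,b)$ I would write the combined entry as $[M_{i,i+1}^{(2;1)}]_{ab} + \beta\,[M_{i,i+1}^{(2;2)}]_{ab}$ and demand it be $\geq 0$ (and that the diagonal entries be $\leq 0$, which will follow automatically from zero‑column sums once the off‑diagonals are nonnegative, by Proposition~1). The positions where $[M^{(2;2)}]_{ab}=0$ impose nothing new beyond the already‑valid positivity of $M^{(2;1)}$. The positions where $[M^{(2;2)}]_{ab}>0$ (namely among $(3,5),(5,3),(5,7),(7,5)$, where $M^{(2;1)}$ also has a positive or zero entry) give lower bounds of the form $\beta > -(\text{positive})/(\text{positive})$, i.e. they only restrict how negative $\beta$ can be. The binding positions are $(3,7)$ and $(7,3)$, where $[M^{(2;1)}]_{37}=[M^{(2;1)}]_{73}=0$ (these positions are outside the support of $e^{(2;1)}$) and $[M^{(2;2)}]_{37}<0$, $[M^{(2;2)}]_{73}<0$. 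For these two entries the requirement $\beta\,[M^{(2;2)}]_{ab}\geq 0$ with a negative coefficient forces $\beta\leq 0$; combined with at least one strict positive entry forcing $\beta<0$, we get $\beta<0$. Then comparing $(3,7)$ against the other constraints pins down the exact upper bound: the $(3,5)$ and $(5,7)$ type constraints, together with the monomials appearing, produce the stated thresholds $-q^2/(q+q^{-1})$ for $0<q\leq 1$ and $-q^{-2}/(q+q^{-1})$ for $q\geq 1$, which is simply $-\min(q^2,q^{-2})/(q+q^{-1})$. I would verify that at the threshold value the corresponding entry vanishes (a boundary case, hence the strict inequality), and that for $\beta$ strictly inside the interval every off‑diagonal entry is strictly positive or identically zero.

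Concretely, the interval endpoints come from balancing the two contributions at the "competing" entries. Writing $c = q + q^{-1} > 0$, the relevant off‑diagonal entries of $-e^{(2;2)}$ at positions $(3,5)$ and $(5,7)$ are $q/c$ and $q^3 c$ (resp. $q^{-1}/c$ and $q^{-3}c$ for the mirrored pair), and those of $-e^{(2;1)}$ at the same positions are $q^3/c^2$ and $q$ (resp. their mirrors). The inequalities $q^3/c^2 + \beta\, q/c \geq 0$ and $q + \beta\, q^3 c \geq 0$, taken with $\beta<0$, reduce to $|\beta| \leq q^2/c^2$ and $|\beta| \leq q^{-2}/c$, and the more restrictive of these across the two pairs is exactly $|\beta| < q^{\mp 2}/c$ as in~(\ref{cond_comb}); I would double‑check the arithmetic against the two displayed $9\times 9$ matrices to make sure I have read off every nonzero entry and its monomial correctly. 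Finally, since scaling by $b_1^{(2)}>0$ does not change signs, and since switching the sign of $b_1^{(2)}$ just flips all signs (making diagonal entries positive, which is disallowed), the condition on $\beta$ is exactly as claimed.

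The main obstacle is not conceptual but bookkeeping: one must correctly enumerate all nonzero off‑diagonal positions of both $9\times 9$ matrices, track each as a signed monomial in $q$, and confirm that the only obstruction to positivity is the single mirror pair $(3,7),(7,3)$, so that the feasible set for $\beta$ is genuinely an interval whose endpoints are the two stated ratios rather than something smaller cut out by an overlooked entry. A secondary subtlety is making sure the diagonal entries land with the correct sign after the linear combination; this is handled by invoking Proposition~1 (zero column sums are preserved under linear combinations) so that, once all off‑diagonal entries in a column are nonnegative, the diagonal entry in that column is $\leq 0$ automatically.
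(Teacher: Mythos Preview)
Your approach is essentially the same as the paper's: both argue by direct inspection of the entries of the $9\times 9$ matrix and reduce the positivity requirement to a finite list of scalar inequalities in $\beta$ and $q$. The paper simply writes out the combined matrix $M_{i,i+1}^{(2)}$ in full (its equation~(\ref{3state-markov})), lists six inequalities coming from the nontrivial entries, and then identifies the binding one via Figure~\ref{fig:multi-markov}.

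Two small points. First, an arithmetic slip: from $q^3/c^2 + \beta\, q/c \geq 0$ one gets $|\beta|\leq q^2/c$, not $q^2/c^2$; with this corrected, the two pairs of constraints collapse exactly to $\beta > -q^2/(q+q^{-1})$ and $\beta > -q^{-2}/(q+q^{-1})$, and the tighter one switches at $q=1$ as stated. Second, your observation that the diagonal sign conditions are automatic once the off-diagonals are nonnegative (because each column sums to zero by Proposition~1) is a clean shortcut; the paper's proof does not use this and instead tracks the three diagonal inequalities $\beta>-q$, $\beta>-q^{-1}$, $\beta>-(q^2-1+q^{-2})/(q+q^{-1})$ explicitly alongside the off-diagonal ones before concluding they are never the binding constraints.
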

\begin{proof}
 We prove the above proposition from the direct calculation of the matrix elements. The update matrix $M_{i,i+1}^{(2)}$ is given by 
\begin{equation} \label{3state-markov}
 M_{i,i+1}^{(2)} = 
\left(
 \begin{smallmatrix}
 0 & 0 & 0 & 0 & 0 & 0 & 0 & 0 & 0 \\
 0 & -\frac{1}{q^2(q+q^{-1})} & 0 & \frac{q^2}{q+q^{-1}} & 0 & 0 & 0 & 0 & 0 \\
 0 & 0 & -\frac{q+\beta}{q^2} & 0 & \frac{q^3 + \beta q(q+q^{-1})}{(q+q^{-1})^2} & 0 & -q^4 \beta & 0 & 0 \\
 0 & \frac{1}{q^2 (q+q^{-1})} & 0 & -\frac{q^2}{q+q^{-1}} & 0 & 0 & 0 & 0 & 0 \\
 0 & 0 & \frac{q^3 + \beta q(q+q^{-1})}{q^4} & 0 & -\frac{q^4 - q^2 + 1 + \beta q^2 (q+q^{-1})}{q^2 (q+q^{-1})} & 0 & q \left(\beta q^3+\beta q+1\right) & 0 & 0 \\
 0 & 0 & 0 & 0 & 0 & -\frac{1}{q^2 (q+q^{-1})} & 0 & \frac{q^2}{q+q^{-1}} & 0 \\
 0 & 0 & -\frac{\beta}{q^4} & 0 & \frac{\beta q^2 (q+q^{-1}) + 1}{q^3 (q+q^{-1})^2} & 0 & -q (q \beta+1) & 0 & 0 \\
 0 & 0 & 0 & 0 & 0 & \frac{1}{q^2 (q+q^{-1})} & 0 & -\frac{q^2}{q+q^{-1}} & 0 \\
 0 & 0 & 0 & 0 & 0 & 0 & 0 & 0 & 0 
 \end{smallmatrix}
\right)_{i,i+1}. 
\end{equation}
In order to make all off-diagonal elements positive at the same time with making all diagonal elements negative values, $\beta$ needs to satisfy the following six conditions: 
\begin{equation} \label{ineq}
\begin{split}
 &q (q^3 + \beta + q^2 \beta) > 0, \hspace{25mm}
  -q\beta > 0, \\
 &1 + q\beta + q^3\beta > 0, \hspace{31mm}
 q (q + \beta) > 0, \\
 &1 - q^2 + q^4 + q\beta + q^3\beta > 0, \hspace{15mm}
 1 + q\beta > 0. 
\end{split}
\end{equation}
Taking into account that our $q$ takes only positive values (\ref{TL-ASEP}), the inequalities (\ref{ineq}) can be solved as 
\begin{equation}
\begin{split}
 &\beta < 0, \\
 &\beta > {\rm max.}\left\{ -\tfrac{q^2}{q+q^{-1}},\, -\tfrac{q^{-2}}{q+q^{-1}},\, -q,\, -q^{-1},\, -\tfrac{q^2-1+q^{-2}}{q+q^{-1}} \right\}. 
\end{split}
\end{equation}
From Fig.~\ref{fig:multi-markov}, we finally obtain the condition (\ref{cond_comb}) for $\beta$. 
\end{proof}
\begin{figure}
\begin{center}
 \includegraphics[scale=0.75]{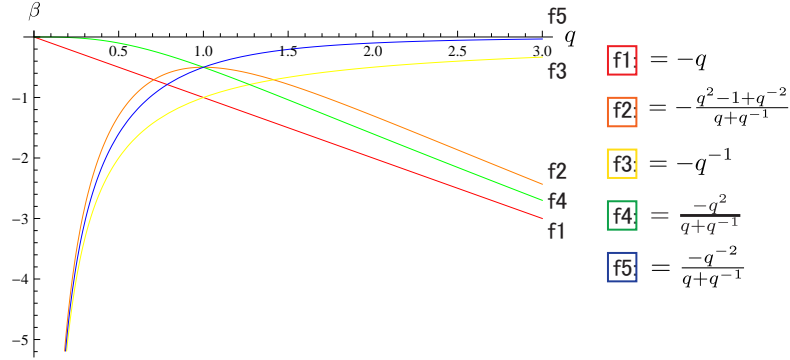}
 \caption{Behaviors of each function in (\ref{ineq}) with respect to the parameter $q$. Since the green line crosses with the blue line at $q = 1$, we have $-q^{2}/(q+q^{-1}) < \beta < 0$ for $0 \leq q \leq 1$ and $q^{-2}/(q+q^{-1}) < \beta < 0$ for $1\leq q$. } \label{fig:multi-markov}
\end{center}
\end{figure}

For the multi-state ASEP with arbitrary $\ell$, we consider the following combination of the fused TL generators: 
\begin{equation} \label{comb-l}
 M_{i,i+1}^{(\ell)} = \sum_{r=1}^{\ell} b_r^{(\ell)} M_{i,i+1}^{(\ell;r)}, 
\end{equation}
where $M_{i,i+1}^{(\ell;r)} = -U_{i,i+1}^{(\ell)} e_i^{(\ell;r)} (U_{i,i+1}^{(\ell)})^{-1}$. 
In order to find conditions among $b_r^{(\ell)}$ for $M_{i,i+1}^{(\ell)}$ to satisfy the positivity condition, one needs to know all the elements of the fused Temperley-Lieb generators $e_i^{(\ell;r)}$. We give the explicit forms of matrix elements of $e_i^{(\ell;r)}$ in Appendix~4. The expressions (\ref{elements-1})-(\ref{elements-3}) show that the operator $M_{i,i+1}^{(\ell;1)}$ have only positive off-diagonal elements, which ensures the existence of the multi-state ASEP for arbitrary $\ell$. 
For arbitrary combination of $e_i^{(\ell;r)}$, we have the following conjecture: 
\begin{conjecture} \label{conj:posi}
 Let us assume the update matrix takes the form given by (\ref{comb-l}). 
We normalize the parameters $b_r^{(\ell)}$ as $b_r^{(\ell)} = \begin{bmatrix} \ell \\ r \end{bmatrix} \widetilde{b}_r^{(\ell)}$. $\begin{bmatrix} x \\ y \end{bmatrix}$ is a $q$-binomial defined by $\begin{bmatrix} x \\ y \end{bmatrix} = \frac{[x]!}{[x-y]! [y]!}$, where $[x]! = [x][x-1] \cdots [1]$. Here we introduced a $q$-integer $[x] = \frac{q^x-q^{-x}}{q-q^{-1}}$. 
Since the update matrix is symmetric under $q \leftrightarrow q^{-1}$, we restrict our attention to the case $0 \leq q \leq 1$. Then, the restrictions among the parameters $\widetilde{b}_r^{(\ell)}, \dots, \widetilde{b}_{r+x}^{(\ell)}$ are obtained from the positivity condition of the matrix element $\langle 0| \otimes \langle r+x|\; M_{i,i+1}^{(\ell)} \;|r \rangle \otimes |x \rangle$ ($r=1,\dots,\ell-x$ and $x=0,\dots,\ell-1$), which leads to 
\begin{align}
 &(-1)^{r-1} \widetilde{b}_r^{(\ell)} > 0, \label{posi1}\\
 &\sum_{k=0}^s q^{(s-1)k}\; 
 \begin{bmatrix}
  s \\ k
 \end{bmatrix}
 \; \widetilde{b}_{k+i}^{(\ell)} (-1)^{i-1} > 0, \label{posi2}
  \quad
  i=1,\dots,\ell-s, 
\end{align}
for $s = 1,\dots,\ell-1$. 
\end{conjecture}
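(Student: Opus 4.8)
The plan is to prove the conjecture in two stages: first reduce the positivity of the full update matrix $M^{(\ell)}_{i,i+1}$ to the positivity of the distinguished family of entries $\langle 0|\otimes\langle r+x|\,M^{(\ell)}_{i,i+1}\,|r\rangle\otimes|x\rangle$, and then evaluate that family explicitly. For the first stage, write $V_\ell$ for the $(\ell+1)$-dimensional irreducible $U_q(sl_2)$-module and recall that $V_\ell\otimes V_\ell\cong\bigoplus_{s=0}^{\ell}V_{2s}$, each isotypic component occurring with multiplicity one, so that the commutant of $U_q(sl_2)$ in $\mathrm{End}(V_\ell\otimes V_\ell)$ is spanned by the orthogonal projectors $P_{2s}$. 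By (\ref{TL-Uq}) every fused generator lies in this commutant, so $M^{(\ell;r)}_{i,i+1}=\sum_{s}c^{(\ell)}_{r,s}\,\widehat P_{2s}$ and hence $M^{(\ell)}_{i,i+1}=\sum_{s}\mu_s\,\widehat P_{2s}$ with $\mu_s=\sum_{r=1}^{\ell}b^{(\ell)}_r c^{(\ell)}_{r,s}$, where $\widehat P_{2s}:=U^{(\ell)}_{i,i+1}P_{2s}(U^{(\ell)}_{i,i+1})^{-1}$ and the scalars $c^{(\ell)}_{r,s}$ are read off from the explicit matrix elements of Appendix~4 (the remark preceding Conjecture~\ref{conj:posi}, that $M^{(\ell;1)}_{i,i+1}$ has only positive off-diagonal entries, already fixes the signs that are needed). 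Since $M^{(\ell)}_{i,i+1}$ commutes with $\Delta(q^{S^z})$ it is block-diagonal in the total particle number $n=a+b$, and within the weight-$n$ block every off-diagonal entry is a fixed $q$-Clebsch--Gordan expression multiplied by the $\mu_s$. The structural claim to establish is that the entries with row label $\langle 0|\otimes\langle n|$ are extremal --- the first tensor factor carries the highest weight there --- so that their $q$-Clebsch--Gordan prefactors are sign-definite for $0<q\le1$, and that positivity of these particular entries over all admissible $(r,x)$ already forces every $\mu_s$-combination into the sign that makes the remaining off-diagonal entries of the block positive. Negativity of the diagonal then follows from the zero-column-sum property of Proposition~1, once one checks that no column of $M^{(\ell)}_{i,i+1}$ vanishes identically.

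\textbf{Evaluating the distinguished entries.} For the second stage I would compute $\langle 0|\otimes\langle r+x|\,M^{(\ell;r')}_{i,i+1}\,|r\rangle\otimes|x\rangle$ directly from (\ref{TL-ldim}) and (\ref{elements-1})--(\ref{elements-3}), using the fact established in the proof of Proposition~1 that conjugation by $U^{(\ell)}$ multiplies this entry by $q^{\ell r}$. Substituting $b^{(\ell)}_{r'}=\begin{bmatrix}\ell\\r'\end{bmatrix}\widetilde b^{(\ell)}_{r'}$, summing over $r'$, and extracting the common positive prefactor, the sign of $\langle 0|\otimes\langle r+x|\,M^{(\ell)}_{i,i+1}\,|r\rangle\otimes|x\rangle$ becomes the sign of an alternating $q$-binomial sum in the $\widetilde b^{(\ell)}_{r'}$. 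For $x=0$ only $r'=r$ contributes a genuine constraint, which gives $(-1)^{r-1}\widetilde b^{(\ell)}_r>0$, i.e.\ (\ref{posi1}); for $x=s\ge1$ the surviving terms should collapse, by a $q$-Chu--Vandermonde summation of the products of $q$-binomials produced by the projector $Y^{(\ell)}$, to the sum displayed in (\ref{posi2}) with $i=r$. Performing this summation while tracking the powers of $q$ faithfully through the $U^{(\ell)}$-twist and the $q$-binomial prefactors is the lengthy but routine core of the computation, and it is the step that produces the detailed exponents in (\ref{posi2}).

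\textbf{Main obstacle.} I expect the hard part to be the extremality reduction in the first stage --- proving that control of the $\langle 0|\otimes\langle n|$ row of each weight block suffices to control the whole matrix. A priori, the lowering recursion generated by $\Delta(S^-)$ that relates entries inside a weight block carries coefficients of mixed sign, so one must verify that the $q$-Clebsch--Gordan coefficients attached to the non-extremal entries preserve, rather than disturb, the sign pattern dictated by the extremal ones throughout $0<q\le1$; equivalently, one needs a sufficiently explicit closed form for $\langle a|\otimes\langle b|\,e^{(\ell;r)}_i\,|c\rangle\otimes|d\rangle$ together with a sign/monotonicity analysis of it in $(a,b,c,d)$ and $q$. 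Closely related, and part of why the statement is stated only as a conjecture, is the need to confirm that no entry outside the distinguished family contributes an independent inequality, something the $\ell=2$ computation in Proposition~2 makes transparent but which becomes delicate for general $\ell$.
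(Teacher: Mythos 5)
This statement is left as a \emph{conjecture} in the paper; no proof is supplied there, so there is nothing to match your argument against. What you have written is a plausible plan of attack, but as it stands it contains two genuine gaps, both of which you partly acknowledge yourself. First, the ``extremality reduction'' --- the claim that positivity of the entries $\langle 0|\otimes\langle r+x|\,M^{(\ell)}_{i,i+1}\,|r\rangle\otimes|x\rangle$ forces positivity of every other off-diagonal entry in the same weight block --- is the entire content of the sufficiency direction, and you give no mechanism for it beyond the hope that the relevant $q$-Clebsch--Gordan coefficients are sign-definite for $0<q\le 1$. The decomposition $M^{(\ell)}_{i,i+1}=\sum_s \mu_s \widehat P_{2s}$ is correct (the tensor square is multiplicity-free, so the commutant is abelian and spanned by the projectors), but positivity of off-diagonal entries is a statement in a fixed basis, not an intrinsic statement about the $\mu_s$, so knowing the spectrum does not by itself control the signs of individual matrix elements; you would still need the explicit closed form of $\langle a|\otimes\langle b|\,e^{(\ell;r)}_i\,|c\rangle\otimes|d\rangle$ from (\ref{TL-ldim3}) together with a sign analysis in all four labels, which is precisely the computation the paper declines to carry out for general $\ell$.

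Second, even the necessity direction --- that the distinguished entries evaluate to the specific alternating $q$-binomial sums (\ref{posi1})--(\ref{posi2}), with the exponent $q^{(s-1)k}$ exactly as displayed --- is asserted via an unperformed ``$q$-Chu--Vandermonde collapse.'' Your observation that only $r'=r$ contributes when $x=0$ is correct and does yield (\ref{posi1}), but for $x=s\ge 1$ the sum over $r'$ involves the full multi-index expression (\ref{TL-ldim3}) twisted by $U^{(\ell)}$, and nothing in your sketch verifies that the powers of $q$ assemble into $q^{(s-1)k}\bigl[\begin{smallmatrix}s\\k\end{smallmatrix}\bigr]$ rather than some other $q$-weighting. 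Since both the reduction step and the explicit evaluation are exactly what separate this statement from a theorem, your proposal does not close the gap; it restates the conjecture's content as a roadmap. A convincing contribution would be to carry out at least the $\ell=3$ case entry by entry (as Proposition~2 does for $\ell=2$) and exhibit the collapse explicitly, which would also test whether the distinguished entries really exhaust the independent inequalities.
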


The explicit forms of $e_i^{(\ell;r)}$ are required also in the construction of matrix product states, which allow us to deal with open boundaries admitting particle incoming and outgoing. 
Recently, $T$-systems and $Y$-systems were derived for the fused Temperley-Lieb algebra~\cite{bib:MPR14}. We expect that these algebraic relations make it possible to obtain relations among the fused generators, as is the BMW algebra for the $\ell=3$ case. 
\subsection{Special limits of multi-state ASEPs}
Now we consider two special limits of multi-state ASEPs; $q \to 1$ and $q \to 0$. The first one is known as the symmetric simple exclusion process (SSEP) for the two-state case, which has the same hopping rates to the right and to the left. On the other hand, the second limit is called the totally asymmetric simple exclusion process (TASEP), in which each particle is allowed to move only to the left. The special care is required to take the $q\to 0$ limit for avoiding singularities. 

\subsubsection{$q \to 1$ limit}
First, we consider the $q \rightarrow 1$ limit of the multi-state ASEP. In this limit, the original ASEP becomes SSEP in which each particle hops to the right and to the left with the same rate as long as neighboring sites are unoccupied. 
In the multi-state case, for instance in the three-state case, we have the following update matrix: 
\begin{equation} \label{qto1}
 \underset{q \rightarrow 1}{\lim}\;M_{i,i+1}^{(2)}
  =
  \begin{pmatrix}
   0 & 0 & 0 & 0 & 0 & 0 & 0 & 0 & 0 \\
   0 & -\frac{1}{2} & 0 & \frac{1}{2} & 0 & 0 & 0 & 0 & 0 \\
   0 & 0 & -\beta -1 & 0 & \frac{1 + 2\beta}{4} & 0 & -\beta  & 0 & 0 \\
   0 & \frac{1}{2} & 0 & -\frac{1}{2} & 0 & 0 & 0 & 0 & 0 \\
   0 & 0 & 2 \beta +1 & 0 & -\beta -\frac{1}{2} & 0 & 2 \beta +1 & 0 & 0 \\
   0 & 0 & 0 & 0 & 0 & -\frac{1}{2} & 0 & \frac{1}{2} & 0 \\
   0 & 0 & -\beta  & 0 & \frac{1 + 2\beta}{4} & 0 & -\beta -1 & 0 & 0 \\
   0 & 0 & 0 & 0 & 0 & \frac{1}{2} & 0 & -\frac{1}{2} & 0 \\
   0 & 0 & 0 & 0 & 0 & 0 & 0 & 0 & 0 
  \end{pmatrix}_{i,i+1},
\end{equation}
from which one easily obtains the symmetric hopping rates as in the two-state ASEP. 
In this limit, the similarity transformation matrix (\ref{U-mat}) becomes an identity matrix. On the other hand, the positivity condition (\ref{cond_comb}) is still valid; in $q \rightarrow 1$, the condition (\ref{cond_comb}) becomes $-\frac{1}{2} < \beta < 0$. Since we always have $\frac{1 + 2\beta}{4} < 2\beta + 1$ for $-\frac{1}{2} < \beta < 0$, the process of this limit forces to reduce the number of boxes occupied by two particles. However, for $-\frac{1}{2} < \beta < -\frac{1}{3}$, we have $2\beta + 1 < -\beta$, which implies that two particles in the same box tend to move together. 

Now let us discuss the arbitrary $(\ell+1)$-state case. Taking into account the fact that $q$-integers and $q$-binomials become ordinary integers and binomials in the $q \to 1$ limit, the matrix elements of fused TL generators (\ref{TL-ldim}) are computed as 
\begin{equation} \label{ssep-elements}
\begin{split}
 \langle r-x| \otimes \langle s+x|\; e_i^{(\ell;k)} \;|r \rangle \otimes |s \rangle
 &=
 (-1)^{x}
 \begin{pmatrix} \ell \\ r \end{pmatrix}^{-1}
 \begin{pmatrix} \ell \\ s \end{pmatrix}^{-1} \\
 &\cdot \sum_{j=x}^k 
 \begin{pmatrix} \ell-k \\ r-k+j \end{pmatrix}
 \begin{pmatrix} k \\ j-x \end{pmatrix}
 \begin{pmatrix} k \\ k-j \end{pmatrix}
 \begin{pmatrix} \ell-k \\ s-j  \end{pmatrix},  
\end{split}
\end{equation}
where $\begin{pmatrix} x \\ y \end{pmatrix}$ is an ordinary binomial. 
The positivity conditions are obtained from Conjecture \ref{conj:posi} by taking the $q \to 1$ limit as 
\begin{align}
 &(-1)^{r-1} \widetilde{b}_r^{(\ell)} > 0, \\
 &\sum_{k=0}^s 
 \begin{pmatrix}
  s \\ k
 \end{pmatrix}
 \; \widetilde{b}_{k+i}^{(\ell)} (-1)^{i-1} > 0, 
  \quad
  i=1,\dots,\ell-s, 
\end{align}
for $s = 1,\dots,\ell-1$.

\subsubsection{$q \rightarrow 0$ limit}
Next, we consider the $q \to 0$ limit of the multi-state ASEP. In order to take this limit, one needs to take care of singularities. By setting $q = x/y$ and then taking $x \to 0$ after removing singular parts, we obtain the well-defined $q \to 0$ limit of $M_{i,i+1}^{(\ell)}$. By rescaling the parameter $\beta$ as $\beta = \frac{q^{2}}{q + q^{-1}} \widetilde{\beta}$, we obtain the following update matrix: 
\begin{equation}
 \underset{q \rightarrow 0}{\lim}\; q^3 (q+q^{-1})^2 M_{i,i+1}^{(2)}
  =
  \begin{pmatrix}
   0 & 0 & 0 & 0 & 0 & 0 & 0 & 0 & 0 \\
   0 & -1 & 0 & 0 & 0 & 0 & 0 & 0 & 0 \\
   0 & 0 & -1 & 0 & 0 & 0 & 0 & 0 & 0 \\
   0 & 1 & 0 & 0 & 0 & 0 & 0 & 0 & 0 \\
   0 & 0 & \widetilde{\beta} +1 & 0 & -1 & 0 & 0 & 0 & 0 \\
   0 & 0 & 0 & 0 & 0 & -1 & 0 & 0 & 0 \\
   0 & 0 & -\widetilde{\beta}  & 0 & 1 & 0 & 0 & 0 & 0 \\
   0 & 0 & 0 & 0 & 0 & 1 & 0 & 0 & 0 \\
   0 & 0 & 0 & 0 & 0 & 0 & 0 & 0 & 0 
  \end{pmatrix}_{i,i+1}, 
\end{equation}
which describes the TASEP-like transition, namely, particles are allowed to move only to the left. The positivity condition is satisfied if $\widetilde{\beta}$ obeys $-1 < \widetilde{\beta} < 0$. This parameter determines the strength of the coupling of two particles in the same box; for larger $|\widetilde{\beta}|$, more hoppings of two particles at the same time, while for smaller $|\widetilde{\beta}|$, we obtain more hoppings of each particle separately. 

The result indicates that the parameter $\widetilde{\beta}$ acts as friction between particles. That is, small $\widetilde{\beta}$ describes dynamics of particles with small friction, while large $\widetilde{\beta}$ describes particle dynamics with large friction. Thus, one application of the multi-state TASEP would be a granular model with controllable friction parameters. 

For the well-definedness of the $(\ell+1)$-state TASEP, we give the following conjecture: 
\begin{conjecture}
 The following reparametrization of $b_r^{(\ell)}$ in (\ref{comb-l}) gives a proper TASEP limit of the multi-state ASEP: 
\begin{equation}
 M_{i,i+1}^{(\ell)} = \sum_{r=1}^{\ell} q^{r \ell} 
  \begin{bmatrix} \ell \\ r \end{bmatrix}
  \widetilde{b}_r^{(\ell)} M_{i,i+1}^{(\ell;r)}. 
\end{equation}
This choice of parametrization leads to the multi-state TASEP process with hopping rates $w(i,j|i-x,j+x) = (-1)^{x} \widetilde{b}_{x}^{(\ell)}$, where $x=1,\dots,\ell$. The positivity condition for $\widetilde{b}_r^{(\ell)}$ is given by $0 < \widetilde{b}_{x}^{(\ell)} < 1$ for even $x$ and $-1 < \widetilde{b}_x^{(\ell)} < 0$ for odd $x$. 
\end{conjecture}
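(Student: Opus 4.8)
The plan is to prove the conjecture by the same device used in the two Propositions --- direct evaluation of matrix elements --- but carried out asymptotically as $q\to0$. First I would take the closed expressions for $\langle r-x|\otimes\langle s+x|\,e^{(\ell;r')}_i\,|r\rangle\otimes|s\rangle$ of Appendix~4 (formula~(\ref{TL-ldim})) and apply the transformation $U^{(\ell)}$: by the computation in the proof of the first Proposition this is site-independent and multiplies the amplitude of the two-site hop transferring $x$ particles by $q^{\ell x}$, so rightward hops are damped and leftward hops enhanced, which is already the germ of the left--right asymmetry. This yields explicit entries of $M^{(\ell;r')}_{i,i+1}=-U^{(\ell)}_{i,i+1}e^{(\ell;r')}_i(U^{(\ell)}_{i,i+1})^{-1}$, and substituting the reparametrization $b^{(\ell)}_{r'}=q^{r'\ell}\begin{bmatrix}\ell\\ r'\end{bmatrix}\widetilde b^{(\ell)}_{r'}$ into~(\ref{comb-l}) writes every entry of $M^{(\ell)}_{i,i+1}$ as a finite double sum over the channel $r'$ and the internal index of~(\ref{TL-ldim}), each term a monomial in $q$ times $q$-binomials times one $\widetilde b^{(\ell)}_{r'}$. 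The whole statement then reduces to the $q\to0$ asymptotics of these sums and to the choice of an overall scalar $c(q)=q^{\gamma_1(\ell)}(q+q^{-1})^{\gamma_2(\ell)}$ generalizing the factor $q^3(q+q^{-1})^2$ used for $\ell=2$.

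The key steps, in order, are: (i) for each leftward transition $|r,s\rangle\to|r+x,s-x\rangle$, $x\ge1$, locate the minimal power of $q$ in the $\widetilde b$-weighted double sum and show it is the same monomial $q^{-\gamma_1(\ell)}$ for all such positions, which fixes $c(q)$; (ii) extract the leading coefficients: for the extremal transitions $|0,x\rangle\to|x,0\rangle$ the channels $r'<x$ do not contribute at all and the channels $r'>x$ only at strictly higher order, so only $r'=x$ and one value of the internal index survive and the weight $q^{x\ell}\begin{bmatrix}\ell\\ x\end{bmatrix}$ is tuned to cancel the $q$-binomial prefactor of~(\ref{TL-ldim}), leaving $(-1)^x\widetilde b^{(\ell)}_x$, while for the other leftward positions the coefficient is a combination $\sum_{y\ge x}m_y\,\widetilde b^{(\ell)}_y$ with nonnegative integer multiplicities $m_y$; (iii) show that every rightward hop carries a strictly larger power of $q$ and so disappears in the limit --- here the extra factor $q^{r'\ell}$, which breaks the $q\leftrightarrow q^{-1}$ symmetry that $M^{(\ell)}_{i,i+1}$ has at generic $q$, is precisely what produces a one-directional limit; (iv) conclude that $\lim_{q\to0}c(q)M^{(\ell)}_{i,i+1}$ is supported only on leftward hops; its columns still sum to zero by continuity from the first Proposition, so its diagonal equals minus the sum of the off-diagonal entries in the column, and the limit is a genuine Markov generator of a left-only, i.e.\ TASEP-type, $(\ell+1)$-state process with rates $w(i,j|i-x,j+x)=(-1)^x\widetilde b^{(\ell)}_x$ read off from (ii).

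For the positivity statement, the extremal rates demand $(-1)^x\widetilde b^{(\ell)}_x>0$, i.e.\ $\widetilde b^{(\ell)}_x>0$ for even $x$ and $<0$ for odd $x$; the non-extremal rates $\sum_{y\ge x}m_y\widetilde b^{(\ell)}_y$, once $\widetilde b^{(\ell)}_1$ is normalized to $-1$, take the shape ``leading term $(-1)^x\widetilde b^{(\ell)}_x$ plus corrections of magnitude below $1$'', and their positivity is exactly what forces the complementary bounds $|\widetilde b^{(\ell)}_x|<1$; negativity of the diagonal is then automatic. As a consistency check, for $\ell=2$ one gets $\widetilde\beta=\widetilde b^{(2)}_2/\widetilde b^{(2)}_1$, the conditions collapse to $-1<\widetilde\beta<0$, and the limiting matrix is exactly the one displayed just before the conjecture, with $\widetilde b^{(2)}_1$ at the value $-1$ implicit there.

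The hard part is steps (i)--(iii): the ``tropical'' bookkeeping of $q$-exponents in the double sum. One must show that, after inserting the weights $q^{r'\ell}\begin{bmatrix}\ell\\ r'\end{bmatrix}$, the minimal exponent over the two summation indices is genuinely attained, with no whole entry collapsing to higher order by cancellation; that it is uniform across all leftward positions; and that it is strictly smaller than the exponent of every rightward position; and then that the surviving coefficients can be evaluated finely enough to determine the $m_y$ and confirm the bounds $|\widetilde b^{(\ell)}_x|<1$. Since the expressions~(\ref{TL-ldim}) are intricate, I would attempt this by induction on $\ell$ based on the recursion for the projector $Y^{(\ell)}$, or by a Newton-polygon / generating-function analysis of the internal sum; everything else --- probability conservation in the limit, the sign pattern, and the reduction to a one-way process --- then follows essentially formally from the first Proposition and the explicit leading terms.
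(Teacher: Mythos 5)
This statement is labelled a \emph{conjecture} in the paper and is given no proof there; the only evidence supplied is the explicit $\ell=2$ computation (the rescaling $\beta=\tfrac{q^2}{q+q^{-1}}\widetilde\beta$ and the limit of $q^3(q+q^{-1})^2M^{(2)}_{i,i+1}$) displayed just above it. Your proposal is a reasonable and internally consistent \emph{strategy} for proving it, and it correctly identifies the relevant ingredients: the $q^{\ell x}$ twist from $U^{(\ell)}$, the role of the extra $q^{r\ell}\bigl[\begin{smallmatrix}\ell\\ r\end{smallmatrix}\bigr]$ in breaking the $q\leftrightarrow q^{-1}$ symmetry and killing the rightward channels, and the cancellation of the $q$-binomial prefactor of (\ref{TL-ldim}) against the reparametrization on the extremal transitions. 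But it is not a proof. Everything that makes the statement nontrivial is deferred to steps (i)--(iii), which you yourself flag as ``the hard part'' and propose only to ``attempt'' by induction on $\ell$ or a Newton-polygon analysis: the uniformity of the minimal $q$-exponent over all leftward positions, the absence of leading-order cancellations inside the double sum, the strict exponent gap separating rightward from leftward transitions, and the evaluation of the surviving coefficients. None of these is established, so the proposal reduces the conjecture to a list of unproven claims rather than resolving it.

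Two of those claims also need repair even at the level of the $\ell=2$ check. First, the asserted form of the non-extremal rates, $\sum_{y\ge x}m_y\,\widetilde b^{(\ell)}_y$ with \emph{nonnegative} integer $m_y$, does not match the displayed limit matrix: with $\widetilde b^{(2)}_1=-1$ the $(5,3)$ entry $\widetilde\beta+1$ equals $-\widetilde b^{(2)}_1-\widetilde b^{(2)}_2$, i.e.\ $m_1=m_2=-1$, so the sign bookkeeping of the ``multiplicities'' (and hence the inequalities you would extract from positivity) is not yet right as stated. Second, the claim that positivity of these combined rates ``is exactly what forces'' the bounds $|\widetilde b^{(\ell)}_x|<1$ is precisely the content of the conjecture for general $\ell$; asserting it is not deriving it, and it is not obvious that the finitely many non-extremal positivity constraints are equivalent to those box constraints rather than to some smaller polytope (compare Conjecture~\ref{conj:posi}, where the constraints at generic $q$ are genuinely coupled). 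So the gap is concrete: the tropical analysis of (\ref{TL-ldim}) under the reparametrization, and the identification of the limiting rate polytope, are exactly what remains open --- which is why the paper states this as a conjecture.
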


 Thus, the parameters $\widetilde{b}_r^{(\ell)}$ are independently controllable within $(0,1)$ for even $r$ and $(-1,0)$ for odd $r$. The independent controllability is a property not only for $q \to 0$ but also for arbitrary $q$, as long as $\widetilde{b}_r^{(\ell)}$ satisfy the conditions (\ref{posi1}) and (\ref{posi2}).

\section{Steady states}
In this section, we discuss the steady state of multi-state ASEPs constructed in the previous section. Due to invariance of the TL algebra with respect to a transformation $q \leftrightarrow q^{-1}$, we discuss only the case of $q \geq 1$. This condition is physically interpreted as more hopping to the right than to the left.


A steady state and physical quantities of the closed two-state ASEP have been intensively studied in~\cite{bib:SS94} based on the $U_q(sl_2)$ algebraic relations. Starting from the zero-particle state, which is obviously a steady state of the ASEP, they subsequently obtained an $n$-particle state for arbitrary $n$ by applying the $S^-$-operator $n$ times to the zero-particle state. 
A series of steady sates of the multi-state ASEP is derived in a similar way from the zero-particle state, as the update matrix still commutes with the $U_q(sl_2)$ generators. 
However, multi-state extension requires much more cumbersome computation than the two-state case even for the norms. In order to resolve this difficulty, we use the property of the projection operator $Y^{(\ell)}|0\rangle = |0\rangle$ and a map from $(\ell+1)$-dimensional representations onto the fundamental representations of the $U_q(sl_2)$ algebra, which allows us to proceed all calculations of the $(\ell+1)$-state ASEP in terms of the two-state system with $\ell$ times the length. 

%
\begin{proposition}
Let us consider the $(\ell+1)$-state ASEP on $N$ sites. The vacuum (zero-particle) state is a steady state of the $(\ell+1)$-state ASEP and mapped onto the vacuum of the two-state ASEP with length $\ell N$: 
\begin{equation} \label{zero_vector}
\begin{split}
 &|\ell; 0 \rangle_{1,\dots,N} \mapsto \otimes^{\ell N} \begin{pmatrix} 1 \\ 0 \end{pmatrix} = |0 \rangle_{1,\dots,\ell N}, \\
 &{_{1,\dots,N}\langle} \ell; 0| \mapsto \otimes^{\ell N} \begin{pmatrix} 1 & 0 \end{pmatrix} = {_{1,\dots\ell N}\langle} 0|. 
\end{split}
\end{equation}
\end{proposition}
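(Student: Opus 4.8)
The statement has two parts --- that $|\ell;0\rangle$ is annihilated by the update matrix $M^{(\ell)}$, and that under the fusion realization it coincides with the all-empty vector $|0\rangle^{\otimes\ell N}$ of the two-state chain --- and I would handle the identification first, since it is little more than unwinding the construction of Section~3, and then deduce the steady-state property from it.

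For the identification: by construction the state space of the $(\ell+1)$-state ASEP on $N$ sites is the image of $Y^{(\ell)}=\prod_{j=1}^N Y^{(\ell)}_j$ inside $(\mathbb{C}^2)^{\otimes\ell N}$, each physical site carrying the $q$-symmetric (largest irreducible) subspace of an $\ell$-fold tensor product of fundamental representations, and the map in (\ref{zero_vector}) is exactly the induced inclusion $(\mathbb{C}^{\ell+1})^{\otimes N}\hookrightarrow(\mathbb{C}^2)^{\otimes\ell N}$. I would note that $|\ell;0\rangle$ is the top-weight vector on every site, and that the already-established relation $Y^{(\ell)}|0\rangle=|0\rangle$ puts the all-empty configuration $|0\rangle^{\otimes\ell N}$ in the image of $Y^{(\ell)}$, where it is the unique top-weight vector; hence $|\ell;0\rangle_{1,\dots,N}\mapsto|0\rangle^{\otimes\ell N}$, and the bra statement follows by transposition. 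So the whole content of this half is the block-wise fact $Y^{(\ell)}|0\rangle=|0\rangle$.

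For the steady-state property it then suffices to show that each summand $M^{(\ell;r)}_{i,i+1}=-U^{(\ell)}_{i,i+1}e^{(\ell;r)}_i(U^{(\ell)}_{i,i+1})^{-1}$ kills $|0\rangle^{\otimes\ell N}$. Since $U^{(\ell)}$ is diagonal with entry $a^{(i)}_0=q^0=1$ in the all-empty slot (see (\ref{U-mat})), both $U^{(\ell)}_{i,i+1}$ and its inverse fix $|0\rangle^{\otimes\ell N}$, so I only need $e^{(\ell;r)}_i|0\rangle^{\otimes\ell N}=0$; reading the product (\ref{TL-const}) from the right, the operators first hitting the ket are $Y^{(\ell)}_{\ell i}$, which again fixes the all-empty vector, and then the fundamental generator $e_{\ell i}$, and $e_{\ell i}(|0\rangle\otimes|0\rangle)=0$ on sites $\ell i,\ell i+1$. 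This gives $M^{(\ell)}|\ell;0\rangle=0$ (and, reading from the left, $\langle 0|^{\otimes\ell N}e^{(\ell;r)}_i=0$ as well). I do not expect a genuine obstacle here; the only things that need care are pinning down the inclusion $(\mathbb{C}^{\ell+1})^{\otimes N}\hookrightarrow(\mathbb{C}^2)^{\otimes\ell N}$ so that tensor-factor orderings and the block-wise action of $Y^{(\ell)}$ match the definition of $M^{(\ell)}$, and observing that neither the similarity transformation nor the linear combination over $r$ can destroy the annihilation once each $e^{(\ell;r)}_i$ is known to vanish on the vacuum. An equivalent, slicker route for the steady-state part is to note that $|\ell;0\rangle$ is the highest-weight vector of the $(\ell+1)$-dimensional $U_q(sl_2)$-module on $N$ sites and rerun the two-state argument of Section~2 verbatim, again using $e_{\ell i}(|0\rangle\otimes|0\rangle)=0$.
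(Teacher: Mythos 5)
Your proposal is correct and follows essentially the same route as the paper: the identification rests on the block-wise fact $Y^{(\ell)}|0\rangle=|0\rangle$ applied to the $\ell$-fold tensor product of highest-weight vectors at each site, with the bra obtained by transposition. The only difference is that you explicitly verify the steady-state property via $e_{\ell i}(|0\rangle\otimes|0\rangle)=0$ and the fact that $U^{(\ell)}$ and $Y^{(\ell)}$ fix the vacuum, whereas the paper simply declares this part obvious; your added detail is sound.
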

\begin{proof}
It is obvious that the zero-particle state is invariant under time development. 
Since we defined the empty state and the occupied state of the two-state ASEP as in (\ref{TL-ASEP}), the zero-particle state of the two-state ASEP with length $N$ is given by an $N$-fold tensor product of the two-dimensional highest weight vectors: 
\begin{equation}
 |0 \rangle_{1,\dots,N} = \otimes^N \begin{pmatrix} 1 \\ 0 \end{pmatrix}. 
\end{equation}
Similarly, the vacuum of the $(\ell+1)$-state ASEP is given by an $N$-fold tensor product of the $(\ell+1)$-dimensional highest weight vectors. Taking into account that the $(\ell+1)$-dimensional highest weight vector is expressed by an $\ell$-fold tensor product of the two-dimensional highest weight vectors with the projection operator, the following relation holds: 
\begin{equation}
 |\ell; 0 \rangle_{1,\dots,N} \mapsto \otimes^N \left(Y^{(\ell)} \otimes^{\ell} \begin{pmatrix} 1 \\ 0 \end{pmatrix}\right). 
\end{equation}
Since the projection operator $Y^{(\ell)}$ trivially acts on the highest-weight vector, we obtain the relation (\ref{zero_vector}). 

The dual vector of the zero-particle state is obtained in the same way. 
\end{proof}

From now on, we abbreviate the subscripts $1,\dots,N$ attached to each vector unless confusion occurs. An $n$-particle steady state of the two-state ASEP is then subsequently created from the vacuum state by applying the operator $U^{(\ell)} \Delta^{(N)}(S^{\pm}) (U^{(\ell)})^{-1}$~\cite{bib:SS94}. Since the similarity transformation operator trivially acts on the vacuum $(U^{(\ell)})^{-1} |\ell;0\rangle = |\ell;0\rangle$, we define the following ``untwisted'' $n$-particle state: 
\begin{equation} \label{np-vector}
 |n \rangle = \frac{1}{[n]!} (\Delta^{(N)} (S^-))^n |0 \rangle, 
  \qquad
  \langle n| = \frac{1}{[n]!} \langle 0| (\Delta^{(N)}(S^+))^n. 
\end{equation}
Here we used the $q$-factorials $[n]! = [n] [n-1] \dots [1]$ consisting of the $q$-integers defined by 
\begin{equation}
 [n] = \frac{q^n - q^{-n}}{q - q^{-1}}. 
\end{equation}
\begin{proposition} \label{prop:basis}
 An $n$-particle steady state of the $(\ell+1)$-state ASEP is mapped onto a tensor product of two-dimensional vectors: 
\begin{equation} \label{map:vector}
 |\ell; n \rangle \mapsto \frac{1}{[n]!} (\Delta^{(\ell N)} (S^-))^{n} |0 \rangle,
  \qquad
  \langle \ell; n| \mapsto \frac{1}{[n]!} \langle 0| (\Delta^{(\ell N)} (S^+))^{n}. 
\end{equation}
\end{proposition}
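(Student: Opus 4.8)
The plan is to realise the $(\ell+1)$-state chain on $N$ sites inside a two-state chain on $\ell N$ sites through the fusion projector, and then to transport the creation operator $\Delta^{(N)}(S^{-})$ across this realisation by combining coassociativity of the $U_q(sl_2)$ coproduct with the Temperley--Lieb/$U_q(sl_2)$ duality.

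First I would make the identification precise. The $(\ell+1)$-dimensional representation attached to the $j$-th site is, by construction, the image of the projector $Y^{(\ell)}_j$ acting on the two-dimensional factors $\ell(j-1)+1,\dots,\ell j$, and on that image every $X\in\{S^{\pm},q^{S^z}\}$ acts as the restriction of $\Delta^{(\ell)}(X)$. Because $Y^{(\ell)}_j$ is a polynomial in the ordinary TL generators $e_i$ with $\ell(j-1)+1\le i\le \ell j-1$, and each such $e_i$ commutes with the full coproduct (the TL--$U_q(sl_2)$ duality already used in Section~2), we get $[Y^{(\ell)}_j,\Delta^{(\ell N)}(X)]=0$, hence $[Y^{(\ell)},\Delta^{(\ell N)}(X)]=0$ for $Y^{(\ell)}=\prod_{j=1}^N Y^{(\ell)}_j$. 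Coassociativity then gives that $\Delta^{(N)}(X)$ built from $(\ell+1)$-dimensional factors equals $\Delta^{(\ell N)}(X)$ built from two-dimensional factors, once both are restricted to ${\rm Im}(Y^{(\ell)})$; the commutation just noted ensures that $\Delta^{(\ell N)}(S^{-})$ maps ${\rm Im}(Y^{(\ell)})$ into itself, so the restriction is well defined.

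Next I would feed in the preceding Proposition, which sends $|\ell;0\rangle$ to $|0\rangle_{1,\dots,\ell N}$ (eq.~(\ref{zero_vector})). Under the realisation map, applying $\Delta^{(N)}(S^{-})$ on the $(\ell+1)$-state side amounts to applying $\Delta^{(\ell N)}(S^{-})$ on the two-state side; iterating $n$ times, dividing by the common scalar $[n]!$, and using $Y^{(\ell)}|0\rangle=|0\rangle$ together with $[Y^{(\ell)},\Delta^{(\ell N)}(S^{-})]=0$, we obtain
\[
 |\ell;n\rangle=\frac{1}{[n]!}\bigl(\Delta^{(N)}(S^{-})\bigr)^{n}|\ell;0\rangle \;\longmapsto\; \frac{1}{[n]!}\bigl(\Delta^{(\ell N)}(S^{-})\bigr)^{n}|0\rangle_{1,\dots,\ell N},
\]
and the right-hand side already lies in ${\rm Im}(Y^{(\ell)})$ since $Y^{(\ell)}(\Delta^{(\ell N)}(S^{-}))^{n}|0\rangle=(\Delta^{(\ell N)}(S^{-}))^{n}Y^{(\ell)}|0\rangle$. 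The dual identity follows verbatim with $S^{+}$ and bra vectors. That $U^{(\ell)}|\ell;n\rangle$ is a bona fide steady state is then inherited from the two-state case: every $e_i^{(\ell;r)}$ kills the all-empty vector (its first row and column vanish), so $M^{(\ell)}|\ell;0\rangle=0$; and since each $M^{(\ell)}_{i,i+1}$ commutes with $\Delta^{(N)}(X)$ up to the site-independent twist $U^{(\ell)}$ by virtue of $[e_i^{(\ell;r)},\Delta(X)]=0$ (eq.~(\ref{TL-Uq})), one concludes $M^{(\ell)}U^{(\ell)}|\ell;n\rangle=0$ exactly as in the two-state review.

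The delicate point---the part I expect to need the most care---is the bookkeeping in the identification step: one must verify that regrouping the $\ell N$ elementary coproducts into $N$ blocks of $\ell$ reproduces precisely $\Delta^{(N)}$ of the $(\ell+1)$-dimensional representation (coassociativity, plus $\Delta^{(\ell)}(q^{S^z})=(q^{S^z})^{\otimes\ell}$), and that $Y^{(\ell)}$ commutes with $\Delta^{(\ell N)}(S^{-})$ so that passing to the fused subspace is legitimate. Once these are in hand, everything else is a direct, scalar-preserving manipulation, and no representation-theoretic computation in dimension $\ell+1$ is required.
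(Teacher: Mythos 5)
Your proposal is correct and follows essentially the same route as the paper: realise each $(\ell+1)$-dimensional site as the image of $Y^{(\ell)}_j$ inside $\ell$ two-dimensional factors, identify $S^{\pm,(\ell)}$ with $\Delta^{(\ell)}(S^{\pm})$, and use coassociativity plus the vacuum identification to regroup $(\Delta^{(N)}(S^{-,(\ell)}))^n|\ell;0\rangle$ into $(\Delta^{(\ell N)}(S^{-}))^n|0\rangle$. Your derivation of $[Y^{(\ell)},\Delta^{(\ell N)}(X)]=0$ from the TL--$U_q(sl_2)$ duality is a slightly more explicit justification of the projector-invariance step that the paper simply asserts as $Y^{(\ell)}\Delta^{(\ell)}(S^{\pm})Y^{(\ell)}=\Delta^{(\ell)}(S^{\pm})$, and the closing steady-state check is a harmless addition already covered elsewhere in the text.
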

\begin{proof}
An untwisted $n$-particle state of the $(\ell+1)$-state ASEP is generated by applying the operator $S^{\pm}$ to the vacuum: 
\begin{equation} \label{lstate-nparticle}
 |\ell;n \rangle = \frac{1}{[n]!} (\Delta^{(N)} (S^{-,(\ell)}))^n |\ell;0 \rangle, 
  \qquad
  \langle \ell;n| = \frac{1}{[n]!} \langle \ell;0| (\Delta^{(N)} (S^{+,(\ell)}))^n. 
\end{equation}
Here $S^{\pm, (\ell)}$ are the $U_q(sl_2)$ generator of $(\ell+1)$-dimensional representations. 

Let us remind that $U_q(sl_2)$ generators have spatial extension called co-multiplication: 
\begin{equation}
 \Delta^{(\ell)}(S^{\pm}) = \sum_{j=1}^{\ell} q^{S_1^z} \otimes \cdots \dots \otimes q^{S_{j-1}^z} \otimes S_j^{\pm} \otimes q^{-S_{j+1}^z} \otimes \dots \otimes q^{-S_{\ell}^z}. 
\end{equation}
From the definition, $\Delta^{(\ell)}(S^{\pm})$ $q$-symmetrically acts on the $\ell$-fold tensor product spaces. 
On the other hand, an $(\ell+1)$-dimensional vector space associated with the $U_q(sl_2)$ algebra is mapped onto a $q$-symmetric $\ell$-fold tensor product of two-dimensional vector spaces. Therefore, the operators $S^{\pm, (\ell)}$ admit the following map: 
\begin{equation} \label{lstate-operator}
S^{\pm, (\ell)} \mapsto \Delta^{(\ell)}(S^{\pm}). 
\end{equation}
Subsequently, coproduct of $U_q(sl_2)$ generators are invariant under the $q$-symmetrizer, {\it i.e.} the projection operator: 
\begin{equation}
 Y^{(\ell)} \Delta^{(\ell)}(S^{\pm}) Y^{(\ell)} = \Delta^{(\ell)}(S^{\pm}). 
\end{equation}

Combining (\ref{lstate-nparticle}) and (\ref{lstate-operator}), we obtain
\begin{equation}
\begin{split}
 |\ell;n \rangle 
 &\mapsto 
 \frac{1}{[n]!}\Big( (\underbrace{ \Delta^{(N)} \circ \dots \circ \Delta^{(N)}}_{\ell}) \, 
 \Delta^{(\ell)}(S^-)
 \Big)^n |0\rangle 
\\
 &= \frac{1}{[n]!} (\Delta^{(\ell N)}(S^-))^n |0\rangle, 
\end{split}
\end{equation}
which reads (\ref{map:vector}). Here we have chosen the normalization constant as $1/[n]!$. 
\end{proof}
%
\subsection{Norms of the steady states}
In derivation of steady states, we showed that an $n$-particle state of the $(\ell+1)$-state ASEP is expressed in terms of two-dimensional $U_q(sl_2)$ basis. 
Although it is often difficult to proceed analytical computation on higher-dimensional representations, now we can use various formulae having already achieved for the two-state case. 

For instance, the norm of an $n$-particle steady state of the two-state ASEP was computed in~\cite{bib:SS94}: 
\begin{equation}
 \langle n|U U^{-1}|n \rangle = \langle n|n \rangle 
    = \begin{bmatrix} N \\ n \end{bmatrix} = \frac{[N]!}{[N-n]! [n]!}. 
\end{equation}
Using relations (\ref{map:vector}), we straightforwardly obtain the multi-state extension of the norm. 
\begin{proposition}
The norm of an $n$-particle steady state of the $(\ell+1)$-state ASEP is given by  
\begin{equation} \label{norm}
\begin{split}
 \langle \ell; n|U^{(\ell)} (U^{(\ell)})^{-1}|n; \ell \rangle 
 &= \langle \ell; n|\ell; n \rangle \\
 &= \frac{1}{[n]![n]!} \langle 0| (\Delta^{(\ell N)}(S^+))^n (\Delta^{(\ell N)}(S^-))^n |0 \rangle \\
 &= \begin{bmatrix} \ell N \\ n \end{bmatrix}. 
\end{split}
\end{equation}
\end{proposition}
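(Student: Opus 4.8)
The plan is to transport the computation, via the map of Proposition~\ref{prop:basis}, down to the two-state ASEP on $\ell N$ sites, where the norm of an $n$-particle state is already known from~\cite{bib:SS94}, and then simply substitute $N\mapsto\ell N$. Since $U^{(\ell)}(U^{(\ell)})^{-1}=\mathrm{id}$, the left-hand side of (\ref{norm}) is literally $\langle\ell;n|\ell;n\rangle$, so it suffices to evaluate the untwisted norm. Applying the map (\ref{map:vector}) to the ket and its dual to the bra rewrites it as
\begin{equation}
 \langle\ell;n|\ell;n\rangle=\frac{1}{[n]!\,[n]!}\,\langle 0|\bigl(\Delta^{(\ell N)}(S^+)\bigr)^n\bigl(\Delta^{(\ell N)}(S^-)\bigr)^n|0\rangle ,
\end{equation}
which is the middle line of (\ref{norm}); the content of the proof is (a) justifying that this rewriting preserves the inner product, and (b) evaluating the resulting $U_q(sl_2)$ matrix element.

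For (a) I would argue that the embedding underlying Proposition~\ref{prop:basis}, sending the fused chain into $(\bC^2)^{\otimes\ell N}$, is isometric. The generator $e_i$ is a symmetric matrix, hence so is every $Y^{(k)}_j$ (the recursion sends a symmetric $A$ to $A(1-ce)A$, which is again symmetric), and therefore $Y^{(\ell)}=\prod_jY^{(\ell)}_j$ is a symmetric idempotent with respect to the standard inner product on $(\bC^2)^{\otimes\ell N}$ in which the local states (\ref{2state-basis}) are orthonormal; thus $Y^{(\ell)}$ is an orthogonal projection onto the $q$-symmetrized subspace realizing the tensor power of the $(\ell+1)$-dimensional representation, and the inner product carried by that representation in all earlier formulas is exactly the restriction of the ambient one. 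Next, $\Delta^{(\ell N)}(S^-)$ is a morphism of $U_q(sl_2)$-modules, so it preserves this subrepresentation, and $Y^{(\ell)}|0\rangle=|0\rangle$; hence, as in the proof of Proposition~\ref{prop:basis}, the image of $|\ell;n\rangle$ is precisely $\frac{1}{[n]!}(\Delta^{(\ell N)}(S^-))^n|0\rangle$ with no projector correction. Composing these facts gives the displayed identity. As a sanity check, with the coproduct convention of Appendix~1 one has $(\Delta^{(\ell N)}(S^+))^\dagger=\Delta^{(\ell N)}(S^-)$ for $q>0$, so the right-hand side is the manifestly nonnegative quantity $\frac{1}{[n]!^2}\|(\Delta^{(\ell N)}(S^-))^n|0\rangle\|^2$.

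For (b), the matrix element $\langle 0|(\Delta^{(\ell N)}(S^+))^n(\Delta^{(\ell N)}(S^-))^n|0\rangle$ is exactly the quantity computed for the two-state ASEP in~\cite{bib:SS94}, now on a chain of length $\ell N$; quoting that result and dividing by $[n]!^2$ yields $\begin{bmatrix}\ell N\\ n\end{bmatrix}$. If a self-contained derivation is wanted, I would induct on $n$ using the $U_q(sl_2)$ relation $[\Delta^{(\ell N)}(S^+),\Delta^{(\ell N)}(S^-)]=[2\Delta^{(\ell N)}(S^z)]$: commuting one $\Delta^{(\ell N)}(S^+)$ rightward through the $n$ factors of $\Delta^{(\ell N)}(S^-)$ and using that $2\Delta^{(\ell N)}(S^z)$ acts on $|0\rangle$ by the scalar $\ell N$ produces a telescoping product of $q$-integers assembling into $[\ell N]!/([\ell N-n]!\,[n]!)$.

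The main obstacle is really step (a): one must be confident that the inner product attached to the $(\ell+1)$-dimensional representation throughout the paper is the one induced from the tensor-product embedding (rather than a rescaled $U_q(sl_2)$-invariant form), and that $Y^{(\ell)}$ is genuinely idempotent and symmetric in the conventions of Appendix~1. Once the isometry is secured, step (b) is the routine $q$-binomial bookkeeping already present in~\cite{bib:SS94}.
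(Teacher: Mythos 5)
Your proposal is correct and follows essentially the same route as the paper: reduce via Proposition~\ref{prop:basis} to the two-state norm on a chain of length $\ell N$, then evaluate $\langle 0|(\Delta^{(\ell N)}(S^+))^n(\Delta^{(\ell N)}(S^-))^n|0\rangle$ by commuting $\Delta(S^+)$ through the $\Delta(S^-)$ factors with the $U_q(sl_2)$ relation, which is exactly the recursion carried out in Appendix~5. Your step (a), checking that $Y^{(\ell)}$ is a symmetric idempotent so the embedding is isometric, is a point the paper leaves implicit, but it is a justification of the same map rather than a different argument.
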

\begin{proof}
The first equation is proved from Proposition \ref{prop:basis}. The second equation is derived from the commutation relations among $U_q(sl_2)$ generators: 
\begin{equation}
 [\Delta(S^+),\, \Delta(S^-)] = \frac{\Delta(q^{2S^z}) - \Delta(q^{-2S^z})}{q - q^{-1}}. 
\end{equation}
Detail proof is given in Appendix~5.  
\end{proof}
Here we define the normalized vector by 
\begin{equation} \label{normal}
 |\ell; n\rangle_{\rm norm} \equiv \begin{bmatrix} \ell N \\ n \end{bmatrix}^{-1} |\ell; n\rangle. 
\end{equation}
in order to have a normalization condition: $\langle \ell; n|\ell; n \rangle_{\rm norm} = 1$.

\subsection{Particle-density profiles}
In this subsection, we derive particle-density profiles of an $n$-particle steady state of the multi-state ASEP. Particle-density profiles of the two-state ASEP have been closely studied and known to show transitions under general boundary conditions~\cite{bib:K91, bib:DDM92, bib:DE93, bib:SD93}. We show, although our boundary conditions are closed ones on which the number of particles is conserved, how particle-density profiles depend on the number of states of the process. 

Since we have expressed an arbitrary steady state of the $(\ell+1)$-state ASEP by that of the two-state ASEP, correlation functions of the multi-state ASEP are also computed on the fundamental representations if physical quantities can be also mapped to the two-dimensional representations. 

Therefore, the first aim of this subsection is to write an $(\ell+1)$-by-$(\ell+1)$ matrix by a tensor product of two-by-two matrices. By means of the method used in~\cite{bib:DM08}, the following proposition holds for an $(\ell+1)$-by-$(\ell+1)$ single-entry matrix with the $(r,s)$-entry $1$. 
\begin{proposition} \label{prop:reduction}
An $(\ell+1)$-by-$(\ell+1)$ single-entry matrix with $1$ at the $(r,s)$-entry is written by a vector basis and its dual $|\ell;r-1\rangle \otimes \langle \ell;s-1|$. Using Proposition \ref{prop:basis}, vector basis are decomposed into an $\ell$-fold tensor product of the two-dimensional vector basis, and thus a single-entry matrix of an $(\ell+1)$-dimensional representation is expressed by an $\ell$-fold tensor product of the fundamental representations. 
\end{proposition}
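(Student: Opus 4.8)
The plan is to realize the single-entry matrix as an outer product of the single-site steady-state basis vectors and then to transport it through the map of Proposition~\ref{prop:basis}. Specialize (\ref{map:vector}) to $N=1$, so that the $(\ell+1)$ vectors $|\ell;0\rangle,\dots,|\ell;\ell\rangle$ and their duals $\langle\ell;0|,\dots,\langle\ell;\ell|$ form a basis and a dual basis of $\mathbb{C}^{\ell+1}$; by construction this is precisely the standard basis in which the single-entry matrix is described. The first step is to record the pairing
\begin{equation}
 \langle\ell;m|\ell;n\rangle = \delta_{mn}\begin{bmatrix}\ell\\n\end{bmatrix},
\end{equation}
whose diagonal value is the $N=1$ case of (\ref{norm}), and whose vanishing for $m\neq n$ follows because $(\Delta^{(\ell)}(S^-))^n|0\rangle$ and $\langle0|(\Delta^{(\ell)}(S^+))^m$ sit in weight subspaces that pair to zero unless $m=n$, the weight decomposition of the $(\ell+1)$-dimensional representation being multiplicity-free.

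Granting this, the single-entry matrix $E_{r,s}$, i.e.\ the matrix sending $|\ell;s-1\rangle\mapsto|\ell;r-1\rangle$ and annihilating every other basis vector, is
\begin{equation}
 E_{r,s} = \begin{bmatrix}\ell\\s-1\end{bmatrix}^{-1}\,|\ell;r-1\rangle\langle\ell;s-1|,
\end{equation}
as one checks directly on each $|\ell;m\rangle$ using the pairing above. This is the identity asserted in the statement; it is the single-site higher-spin analogue of the device of~\cite{bib:DM08}.

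Next I apply Proposition~\ref{prop:basis} with $N=1$, which gives $|\ell;r-1\rangle\mapsto\frac{1}{[r-1]!}(\Delta^{(\ell)}(S^-))^{r-1}|0\rangle$ and $\langle\ell;s-1|\mapsto\frac{1}{[s-1]!}\langle0|(\Delta^{(\ell)}(S^+))^{s-1}$ inside $(\mathbb{C}^2)^{\otimes\ell}$, whence
\begin{equation}
 E_{r,s}\ \longmapsto\ \begin{bmatrix}\ell\\s-1\end{bmatrix}^{-1}\frac{1}{[r-1]!\,[s-1]!}\,(\Delta^{(\ell)}(S^-))^{r-1}\,\bigl(|0\rangle\langle0|\bigr)\,(\Delta^{(\ell)}(S^+))^{s-1}
\end{equation}
as an operator on $(\mathbb{C}^2)^{\otimes\ell}$ preserving the $q$-symmetric subspace; the flanking projectors $Y^{(\ell)}$ may be reinstated at no cost since $Y^{(\ell)}\Delta^{(\ell)}(S^{\pm})Y^{(\ell)}=\Delta^{(\ell)}(S^{\pm})$ and $Y^{(\ell)}|0\rangle=|0\rangle$. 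Finally, the right-hand side is manifestly a sum of $\ell$-fold tensor products of $2\times2$ matrices: $|0\rangle\langle0|=\bigl(\begin{smallmatrix}1&0\\0&0\end{smallmatrix}\bigr)^{\otimes\ell}$ is a pure tensor, each $\Delta^{(\ell)}(S^{\pm})$ is a sum of pure tensors of the matrices $S^{\pm}$ and $q^{\pm S^z}$, and products and sums of pure tensors stay pure tensors because $(A_1\otimes\cdots\otimes A_\ell)(B_1\otimes\cdots\otimes B_\ell)=A_1B_1\otimes\cdots\otimes A_\ell B_\ell$. This yields the claimed expression.

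The argument is otherwise routine bookkeeping; the only points that need care are the normalization — establishing the orthogonality relation above so as to fix the scalar in front of $|\ell;r-1\rangle\langle\ell;s-1|$ — and the observation that adjoining (or deleting) the flanking $q$-symmetrizers $Y^{(\ell)}$ leaves the tensor-factorized form intact, since $Y^{(\ell)}$ is itself a polynomial in the $2\times2$-tensor-structured generators $e_i$. Everything else follows from multiplicativity of the tensor product.
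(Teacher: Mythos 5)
Your argument is correct and follows the same route the paper sketches (the paper gives no formal proof, only the one-line outline in the proposition itself plus a pointer to the method of~\cite{bib:DM08}): write the single-entry matrix as the outer product $|\ell;r-1\rangle\langle\ell;s-1|$ against the $U_q(sl_2)$ basis and push it through Proposition~\ref{prop:basis}. Your additions — the orthogonality $\langle\ell;m|\ell;n\rangle=\delta_{mn}\bigl[\begin{smallmatrix}\ell\\ n\end{smallmatrix}\bigr]$ fixing the normalization (which the paper hides in the ``norm'' subscript, cf.\ (\ref{tensor:dense})), and the observation that the flanking $Y^{(\ell)}$ can be reinstated for free — are exactly the details the paper omits, and they are handled correctly.
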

Using this proposition, we compute particle-density profiles of the multi-state ASEP. For simplicity, we first discuss the three-state case and then show a general case later.

\subsubsection{Density profiles on the three-state ASEP}
Under the presence of $n$ particles, particle density at the $x$th site is given by the following quantity: 
\begin{equation} \label{def:dense}
\begin{split}
 \rho^{(2)}_n(x) &= \langle 2; n| U^{(2)}
  {\rm diag.} (0,1,2)_x
(U^{(2)})^{-1} |2; n \rangle_{\rm norm} \\
 &= \langle 2; n| 
  {\rm diag.} (0,1,2)_x
 |2; n \rangle_{\rm norm}, 
\end{split}
\end{equation}
where ${\rm diag.}(0,1,2)_x$ is a three-by-three diagonal matrix which acts nontrivially on the $x$th space of an $N$-fold tensor product of three-dimensional vector spaces. 

Using Proposition \ref{prop:reduction}, the matrix ${\rm diag.}(0,1,2)$, which counts the number of particles at the $x$th site in a steady state, is written by 
\begin{equation} \label{tensor:dense}
\begin{split}
&{\rm diag.} (0,1,2)
 = 1 \cdot |2; 1 \rangle_{\rm norm} \otimes \langle 2; 1|
	+ 2 \cdot |2; 2 \rangle_{\rm norm} \otimes \langle 2; 2|. 
\end{split}
\end{equation}
Using the relations (\ref{map:vector}), the right-hand side of (\ref{tensor:dense}) is expressed in terms of the two-dimensional $U_q(sl_2)$ vector basis. For instance, $|2;2\rangle_{\rm norm} \otimes \langle 2;2|$ is expressed as 
\begin{equation}
\begin{split}
|2;2\rangle_{x; {\rm norm}} \otimes {_x \langle} 2;2| 
 &\mapsto |1 \rangle_{2x-1; {\rm norm}} |1 \rangle_{2x; {\rm norm}} \otimes {_{2x-1}}\langle 1| {_{2x}}\langle 1| \\
 &= {\rm diag.}(0,1)_{2x-1} \otimes {\rm diag.}(0,1)_{2x} \\
 &= n_{2x-1} n_{2x}. 
\end{split}
\end{equation}
Hence we have 
\begin{equation} \label{2states:density}
\begin{split}
\rho^{(2)}_n(x) =& \langle n| \frac{1}{[2]} \Big(
	q (1 - n_{2x-1}) n_{2x} + S_{2x-1}^+ S_{2x}^- + S_{2x-1}^- S_{2x}^+ + q^{-1} n_{2x-1} (1 - n_{2x})
\Big)
|n \rangle_{\rm norm} \\
	&+ 2\cdot \langle n|
	n_{2x-1} n_{2x}
|n \rangle_{\rm norm}. 
\end{split}
\end{equation}
The first line of (\ref{2states:density}) is written only by the particle-counting operators using the relations (\ref{creation}): 
\begin{equation}
\begin{split}
&\frac{q}{[2]} \cdot \langle n| n_{2x} |n \rangle_{\rm norm} + \frac{q^{-1}}{[2]} \cdot \langle n| n_{2x-1} |n \rangle_{\rm norm}
	- \langle n| n_{2x-1} n_{2x} |n \rangle_{\rm norm} \\
	&+ \frac{2}{[2]} \frac{[n] q^{-N+2\cdot (2x - 1)}}{[N-n+1]} \cdot 
		\langle n-1| ( 1 - n_{2x-1} )( 1 - n_{2x} ) |n-1 \rangle_{\rm norm}. 
\end{split}
\end{equation}
The second line of (\ref{2states:density}) is decomposed into a summation of one-point functions using the formula (\ref{n-point}). Then, we obtain that particle-density profiles of the three-state ASEP are expressed in terms of one-point functions: 
\begin{equation} \label{2p-corr}
\begin{split}
\rho^{(2)}_n(x) = \langle n| n_{2x-1} |n \rangle_{\rm norm} 
	+ \langle n| n_{2x} |n \rangle_{\rm norm}, 
\end{split}
\end{equation}
whose explicit expression is evaluated from the formula (\ref{1p-corr}): 
\begin{equation}
\rho^{(2)}_n(x) = \begin{bmatrix} 2N \\ n \end{bmatrix}^{-1}
	\sum_{k=0}^{n-1} (-1)^{n-k+1} \begin{bmatrix} 2N \\ k \end{bmatrix} q^{-(n-k) (2N-4x+2)} 
	\Big(
		q^{n-k} + q^{-(n-k)}
	\Big). 
\end{equation}
The resulting relation (\ref{2p-corr}) is understood as follows; We have constructed three-dimensional basis of the $U_q(sl_2)$ algebra by $q$-symmetrizing a two-fold tensor product of two-dimensional $U_q(sl_2)$ vector spaces. It is equivalent to work on the double length of the two-state ASEP but with $q$-symmetrizing $(2x-1)$th and $2x$th sites, instead of working on three-state ASEP. As a result, particle density at the $x$th site of the three-state model is given by a summation of particle densities at the $(2x-1)$th site and the $2x$th site of the two-state ASEP. 
\subsubsection{Density profiles on the $(\ell+1)$-state ASEP}
As the analogue of the three-state ASEP, particle-density profiles of the $(\ell+1)$-state ASEP is also expressed by a summation of one-point functions of the two-dimensional representations. Now we prove the following proposition: 
\begin{proposition}
Particle-density profiles of the $(\ell+1)$-state ASEP are evaluated through the following expression:  
\begin{equation} \label{density}
\rho^{(\ell)}_n(x) = \sum_{j=1}^{\ell} \langle n|(U^{(\ell)})^{-1} n_{\ell (x-1)+j} U^{(\ell)} |n \rangle_{\rm norm}
 = \sum_{j=1}^{\ell} \langle n| n_{\ell (x-1)+j}|n \rangle_{\rm norm}. 
\end{equation} 
\end{proposition}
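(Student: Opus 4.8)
The plan is to transport the three-state calculation of (\ref{2p-corr}) to general $\ell$ by treating the particle-number matrix on one $(\ell+1)$-dimensional site as the total number operator on the corresponding block of $\ell$ two-dimensional sites. First I would note that $\rho^{(\ell)}_n(x)$ is, by definition (the $\ell=2$ case being (\ref{def:dense})), equal to $\langle \ell;n|\,U^{(\ell)}\,{\rm diag.}(0,1,\dots,\ell)_x\,(U^{(\ell)})^{-1}\,|\ell;n\rangle_{\rm norm}$; since $U^{(\ell)}$ and ${\rm diag.}(0,1,\dots,\ell)$ are simultaneously diagonal, the twist drops out and $\rho^{(\ell)}_n(x)=\langle \ell;n|\,{\rm diag.}(0,1,\dots,\ell)_x\,|\ell;n\rangle_{\rm norm}$. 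Then, exactly as in (\ref{tensor:dense}), Proposition~\ref{prop:reduction} gives ${\rm diag.}(0,1,\dots,\ell)=\sum_{k=1}^{\ell}k\,|\ell;k\rangle_{\rm norm}\otimes\langle\ell;k|$ (the $k=0$ term vanishing), with $|\ell;k\rangle$ the single-site basis vector, and Proposition~\ref{prop:basis} replaces the $x$th $(\ell+1)$-dimensional site by the two-dimensional sites $\ell(x-1)+1,\dots,\ell x$, sending $|\ell;k\rangle\mapsto\frac1{[k]!}(\Delta^{(\ell)}(S^-))^{k}|0\rangle$ (and dually). This produces an operator $\mathcal D_x$ on $(\mathbb{C}^2)^{\otimes\ell}$ supported on that block; using $(\Delta^{(\ell)}(S^\pm))^{k}|0\rangle=[k]!\,|\ell;k\rangle$, the norm $\langle\ell;k|\ell;k\rangle=\begin{bmatrix}\ell\\k\end{bmatrix}$ (the $N=1$ case of (\ref{norm})), and the orthogonality of the $|\ell;k\rangle$ for distinct $k$ (distinct $q^{S^z}$-weights), one finds $\mathcal D_x|\ell;m\rangle=m\,|\ell;m\rangle$, while $\mathcal D_x$ annihilates the orthogonal complement of the $q$-symmetric subspace $\mathcal V_x$ of the block.

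The core step is to show that inside the steady-state matrix element $\mathcal D_x$ may be replaced by the block number operator $\mathcal N_x:=\sum_{j=1}^{\ell}n_{\ell(x-1)+j}$. I would first observe that the untwisted two-dimensional state $|n\rangle=\frac1{[n]!}(\Delta^{(\ell N)}(S^-))^{n}|0\rangle$ lies in the image of the blockwise projector $Y^{(\ell)}=\prod_{x}Y^{(\ell)}_x$: each $Y^{(\ell)}_x$ is a polynomial in the $e_i$ with $i$ in the $x$th block, these commute with $\Delta^{(\ell N)}(S^-)$ by the Temperley--Lieb/$U_q(sl_2)$ duality, and $Y^{(\ell)}|0\rangle=|0\rangle$, so $Y^{(\ell)}|n\rangle=|n\rangle$. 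On $\mathcal V_x=\mathrm{Im}\,Y^{(\ell)}_x$ the vector $|\ell;m\rangle$ is an $m$-particle state, hence $\mathcal N_x|\ell;m\rangle=m\,|\ell;m\rangle$, the same eigenvalue as for $\mathcal D_x$; since the $|\ell;m\rangle$ span $\mathcal V_x$ this yields the operator identity $\mathcal D_xY^{(\ell)}_x=\mathcal N_xY^{(\ell)}_x$. As $\mathcal D_x$ (having disjoint support) commutes with $Y^{(\ell)}_{x'}$ for $x'\ne x$, it follows that $\mathcal D_xY^{(\ell)}=\mathcal N_xY^{(\ell)}$, and therefore
\[
\langle n|\mathcal D_x|n\rangle=\langle n|\mathcal D_xY^{(\ell)}|n\rangle=\langle n|\mathcal N_xY^{(\ell)}|n\rangle=\langle n|\mathcal N_x|n\rangle=\sum_{j=1}^{\ell}\langle n|n_{\ell(x-1)+j}|n\rangle .
\]
Dividing by the norm and, if one wishes to match the intermediate expression in (\ref{density}) literally, reinstating the diagonal (hence commuting) twist $(U^{(\ell)})^{-1}(\cdot)U^{(\ell)}$, gives (\ref{density}).

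I expect the only genuinely delicate point to be this last identification. The operator $\mathcal D_x$ obtained from Proposition~\ref{prop:reduction} is a priori only pinned down on $\mathcal V_x$, whereas $\mathcal N_x$ is defined on the whole tensor product; one must therefore be careful to compare the two only after composing with $Y^{(\ell)}_x$, and to use that the steady states are fixed by every $Y^{(\ell)}_x$ so that the non-symmetric part of each block is never probed. Everything else — the expansion of ${\rm diag.}(0,\dots,\ell)$, the weight-orthogonality and norm bookkeeping, and the triviality of the diagonal twist — is routine and already carried out for $\ell=2$.
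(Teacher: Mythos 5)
Your proof is correct, but it reaches (\ref{density}) by a genuinely different route from the paper. The paper also starts from ${\rm diag.}(0,\dots,\ell)_x=\sum_j j\,|\ell;j\rangle_{\rm norm}\otimes\langle\ell;j|$, but then converts each $|\ell;j\rangle_{\rm norm}\otimes\langle\ell;j|$ into an explicit sum over subsets $\{x_1,\dots,x_j\}$ of products $\prod(1-n_{\ell(x-1)+k})\prod n_{\ell(x-1)+x_i}$ (this is where the auxiliary proposition rewriting $\langle n|S_{x_1}^{\pm}S_{x_2}^{\mp}|n\rangle_{\rm norm}$ in terms of particle-counting operators is needed, to absorb the off-diagonal cross terms of the $q$-symmetric states), reorganizes the result into $p$-point functions, and finally kills every $p\ge 2$ contribution with the binomial identity $\sum_{r=0}^{p-1}(-1)^r(p-r)\binom{p}{p-r}=0$. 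You bypass all of that: the operator identity $\mathcal D_xY^{(\ell)}_x=\mathcal N_xY^{(\ell)}_x$ on the image of the $q$-symmetrizer, combined with $Y^{(\ell)}|n\rangle=|n\rangle$ (which the paper also invokes, via $Y^{(\ell)}|0\rangle=|0\rangle$ and the commutativity of $Y^{(\ell)}$ with $\Delta^{(\ell N)}(S^-)$), replaces the combinatorial cancellation by the observation that the block density operator and the block number operator have the same spectral decomposition on the symmetric subspace. Your argument is shorter and makes it transparent \emph{why} only one-point functions survive; the paper's expansion is heavier but produces as a by-product the explicit formula for the probability of finding exactly $j$ particles at a site (its first Lemma), which is of independent use and is the same machinery later needed for the current computations. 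Your own flagged caveat --- that $\mathcal D_x$ and $\mathcal N_x$ may only be compared after composing with $Y^{(\ell)}_x$ --- is indeed the one point requiring care, and you handle it correctly.
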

In the proof of this proposition, we need the following proposition besides Proposition \ref{prop:n-point}: 
\begin{proposition}
An expectation value of a pair of the spin operators $S^{\pm}$ on the $n$-particle steady state is written by particle-counting operators in two different ways: 
\begin{align}
\langle n| S_{x_1}^{\pm} S_{x_2}^{\mp} |n \rangle_{\rm norm}
&= q^{x_2 - x_1} \cdot \langle n| n_{x_1} (1- n_{x_2}) |n \rangle_{\rm norm} \label{Spm-1}\\
&= q^{-(x_2 - x_1)} \cdot \langle n| (1 - n_{x_1}) n_{x_2} |n \rangle_{\rm norm}. \label{Spm-2}
\end{align}
\end{proposition}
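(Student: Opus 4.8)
The plan is to work throughout in the orthonormal configuration basis $\{|i_1,\dots,i_n\rangle\}$ (a particle at each of the sites $i_1<\dots<i_n$) of the two‑state chain of length $L:=\ell N$, which by Proposition~\ref{prop:basis} and (\ref{np-vector}) is the space on which $|n\rangle$ and $\langle n|$ actually live. Note that with the conventions of (\ref{np-vector}) one has $\langle n|=|n\rangle^{\dagger}$, since $(\Delta(S^-))^{\dagger}=\Delta(S^+)$ and $[n]!$ is real, and all coefficients of $|n\rangle$ are positive reals because $q>0$; so $\langle n|$ has the \emph{same} real expansion coefficients as $|n\rangle$. Since $|n\rangle_{\rm norm}=\begin{bmatrix}\ell N\\ n\end{bmatrix}^{-1}|n\rangle$ and this factor appears identically on both sides of (\ref{Spm-1})--(\ref{Spm-2}), it suffices to prove, for $x_1\neq x_2$,
\begin{equation}
 \langle n|S^+_{x_1}S^-_{x_2}|n\rangle=q^{x_2-x_1}\langle n|n_{x_1}(1-n_{x_2})|n\rangle=q^{-(x_2-x_1)}\langle n|(1-n_{x_1})n_{x_2}|n\rangle ,
\end{equation}
together with the version in which $S^+_{x_1}S^-_{x_2}$ is replaced by $S^-_{x_1}S^+_{x_2}$.

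The essential input is the closed configuration‑sum form of the steady state established in~\cite{bib:SS94},
\begin{equation}
 |n\rangle=q^{-n(L+1)/2}\!\!\sum_{1\le i_1<\dots<i_n\le L}\!\! q^{\,i_1+\dots+i_n}\,|i_1,\dots,i_n\rangle ,
\end{equation}
i.e. the coefficient $w(C)$ of a configuration $C$ depends on $C$ only through $q^{\sum_{i\in C}i}$, with a $C$‑independent prefactor. I would either cite this directly or re‑derive it by induction on $n$ from $|n\rangle=\tfrac{1}{[n]}\Delta^{(L)}(S^-)|n-1\rangle$, using $S^-q^{S^z}=q\,q^{S^z}S^-$ and $(S^-)^2=0$; the normalization $1/[n]!$ is precisely what collapses the order‑dependence of the $n$‑fold coproduct into a single symmetric product. (Consistency with the norm (\ref{norm}) is then a standard Gaussian‑binomial identity for $e_n(q^2,q^4,\dots,q^{2L})$.)

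With this in hand the computation is a short expansion in configurations. Because $S^-$ creates a particle on an empty site (and annihilates an occupied one) while $S^+$ does the reverse, for $x_1\neq x_2$ the operator $S^+_{x_1}S^-_{x_2}$ has a nonzero matrix element only out of a configuration $C$ with $x_1\in C$ and $x_2\notin C$, where it yields $C'=(C\setminus\{x_1\})\cup\{x_2\}$ with matrix element $1$. Hence $\langle n|S^+_{x_1}S^-_{x_2}|n\rangle=\sum_{C:\,x_1\in C,\ x_2\notin C}w(C)\,w(C')$, and the weight formula gives $w(C')=q^{x_2-x_1}w(C)$ since $\sum_{i\in C'}i=\sum_{i\in C}i-x_1+x_2$. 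Substituting $w(C')=q^{x_2-x_1}w(C)$ produces $q^{x_2-x_1}\sum w(C)^2=q^{x_2-x_1}\langle n|n_{x_1}(1-n_{x_2})|n\rangle$; reindexing the same sum by the image configuration $C'$, which runs bijectively over those with $x_1\notin C'$, $x_2\in C'$, and using $w(C)=q^{-(x_2-x_1)}w(C')$ instead, produces $q^{-(x_2-x_1)}\langle n|(1-n_{x_1})n_{x_2}|n\rangle$. Starting from $S^-_{x_1}S^+_{x_2}$ (equivalently, interchanging $x_1\leftrightarrow x_2$) gives the remaining $\pm/\mp$ case in the same way, which finishes the proof after dividing through by $\begin{bmatrix}\ell N\\ n\end{bmatrix}$.

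The only genuinely non‑routine point is the closed product form of $|n\rangle$ — i.e. that the configuration weight is $q^{\sum_{i\in C}i}$ up to a $C$‑independent constant; once that is granted, everything reduces to bookkeeping of which configurations survive $S^\pm_{x_1}S^\mp_{x_2}$ and a single relabeling of the summation index. A fully operator‑theoretic alternative would avoid writing $|n\rangle$ explicitly and instead commute $S^-_{x_2}$ through $(\Delta^{(L)}(S^-))^n$ in $|n\rangle=\tfrac{1}{[n]!}(\Delta^{(L)}(S^-))^n|0\rangle$, using $S^-q^{S^z}=q\,q^{S^z}S^-$, $(S^-)^2=0$ and $[\Delta(S^+),\Delta(S^-)]=(\Delta(q^{2S^z})-\Delta(q^{-2S^z}))/(q-q^{-1})$; this is correct but longer, and in effect re‑derives the same weight structure along the way.
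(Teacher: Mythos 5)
Your proof is correct, but it takes a genuinely different route from the paper. The paper stays entirely at the operator level: it uses the relations (\ref{creation}), $S^+_{x}|n\rangle = q^{(-N-1+2x)/2}(1-n_x)|n-1\rangle$ and $\langle n|S^-_x = q^{(-N-1+2x)/2}\langle n-1|(1-n_x)$, to push both spin operators onto the state and drop to the $(n-1)$-particle sector, obtaining $q^{-N-1+x_1+x_2}\langle n-1|(1-n_{x_1})(1-n_{x_2})|n-1\rangle_{\rm norm}$ up to normalization; the two alternative forms (\ref{Spm-1}) and (\ref{Spm-2}) then come from applying the recursion (\ref{l-corr}) to the factor $(1-n_{x_1})$ or to $(1-n_{x_2})$ to climb back to the $n$-particle sector. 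You instead expand $|n\rangle$ in the configuration basis with the product weights $w(C)\propto q^{\sum_{i\in C} i}$, observe that $S^{\pm}_{x_1}S^{\mp}_{x_2}$ is a partial permutation matrix pairing $C$ with $C'=(C\setminus\{x_1\})\cup\{x_2\}$ (or its inverse), and read off the two forms as the same sum indexed by the source or by the image configuration, using $w(C')=q^{x_2-x_1}w(C)$. Your argument is more transparent and makes the equivalence of the two expressions manifestly a relabelling; its price is that it needs the explicit closed form of the steady-state weights, which this paper never writes down (it only defines $|n\rangle$ via $(\Delta(S^-))^n|0\rangle$) — so you would have to either import it from~\cite{bib:SS94} or carry out the induction you sketch (which is straightforward, and your $n=1$ and adjointness checks are right). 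The paper's proof, by contrast, reuses only machinery already established in its appendices, at the cost of a less illuminating detour through the $(n-1)$-particle sector. Both are valid; note only that, as you correctly flag, the identity requires $x_1\neq x_2$, a restriction the paper leaves implicit.
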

\begin{proof}
Since the spin operators act on an $n$-particle steady state as (\ref{creation}), the following relation holds between an expectation value of the spin operators and that of the particle-counting operators: 
\begin{equation}
\begin{split}
\langle n| S_{x_1}^{\pm} S_{x_2}^{\mp} |n \rangle_{\rm norm} 
&= \begin{bmatrix} N \\ n \end{bmatrix}^{-1} \cdot \langle n| S_{x_1}^{\pm} S_{x_2}^{\mp} |n \rangle \\
&= \begin{bmatrix} N \\ n \end{bmatrix}^{-1} q^{(-N - 1 + x_1 + x_2)} \cdot \langle n-1| (1-n_{x_1}) (1-n_{x_2}) |n \rangle \\
&= \frac{[n]}{[N - n + 1]} q^{-N - 1 + x_1 + x_2} \cdot \langle n-1| (1-n_{x_1}) (1-n_{x_2}) |n \rangle_{\rm norm}. 
\end{split}
\end{equation}
Applying the recursion relation (\ref{l-corr}) to the operator $(1-n_{x_1})$, one obtains the relations (\ref{Spm-1}), while by applying (\ref{l-corr}) to $(1-n_{x_2})$, the relation (\ref{Spm-2}) is obtained. 
\end{proof}

Let us remind that Proposition \ref{prop:reduction} allows us to map the $(j,j)$-element of a $(\ell+1)$-by-$(\ell+1)$ matrix onto an $\ell$-fold tensor product of two-by-two matrices, which is written by a tensor product of two-dimensional vector spaces: 
\begin{equation}
 |\ell; j \rangle_{\rm norm} \otimes \langle \ell; j|
 = |j \rangle_{1,\dots,\ell; \rm norm} \otimes {_{1,\dots,\ell} \langle} j|. 
\end{equation}
Then we obtain the following lemmas. 
\begin{lemma}
Probability to obtain $j$ particles at the $x$th site in an $n$-particle steady state is calculated as 
\begin{equation}
\begin{split}
	&\langle n| 
 \Big( |\ell; j \rangle_{x;\, {\rm norm}} \otimes {_x \langle} \ell; j| \Big) 
 |n \rangle_{\rm norm} \\
	&= \sum_{\{x_1, \dots, x_j\} \in \mathfrak{S}_\ell \setminus \mathfrak{S}_j}
	   \langle n| \prod_{k=1 \atop k \neq x_1,\dots, x_j}^{\ell} (1-n_{\ell (x-1)+k}) \prod_{i=1}^j n_{\ell (x-1)+x_i} |n \rangle_{\rm norm}. 
\end{split}
\end{equation}
\end{lemma}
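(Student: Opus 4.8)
The plan is to carry the claim over to the two-state chain on $\ell N$ sites supplied by Proposition~\ref{prop:basis}, where it becomes a comparison of two orthogonal projectors on a single block. Write $B=\{\ell(x-1)+1,\dots,\ell x\}$ for the block of two-state sites corresponding to the $x$th site of the $(\ell+1)$-state model, and let $Y^{(\ell)}_B$ be the projection operator $q$-symmetrizing the $\ell$ sites of $B$. By Proposition~\ref{prop:reduction} and the reduction displayed just above the lemma, the single-entry operator $|\ell;j\rangle_{x;\mathrm{norm}}\otimes{}_x\langle\ell;j|$, which records that exactly $j$ particles sit at site $x$, is sent to the operator $\Pi^B_j:=|j\rangle_{B;\mathrm{norm}}\otimes{}_B\langle j|$ acting on $B$ alone. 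Since $\langle j|j\rangle_B=\begin{bmatrix}\ell\\ j\end{bmatrix}\neq 0$ for $q>0$ (see (\ref{TL-ASEP})), $\Pi^B_j$ is precisely the orthogonal projector onto the line $\mathbb{C}|j\rangle_B$, tensored with the identity on the remaining sites. On the other hand, the right-hand side of the lemma is $\langle n|P^B_j|n\rangle_{\mathrm{norm}}$, where $P^B_j=\sum_{\{x_1,\dots,x_j\}}\prod_{k\neq x_1,\dots,x_j}(1-n_{\ell(x-1)+k})\prod_i n_{\ell(x-1)+x_i}$ is the orthogonal projector onto the full $j$-particle sector $V^B_j$ of the block. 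Thus it suffices to show $\langle n|\Pi^B_j|n\rangle=\langle n|P^B_j|n\rangle$.

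The crucial input is that $|n\rangle$ is $q$-symmetric on $B$, i.e.\ $Y^{(\ell)}_B|n\rangle=|n\rangle$ (and dually $\langle n|Y^{(\ell)}_B=\langle n|$). This I would get from coassociativity of the coproduct: in $\Delta^{(\ell N)}(S^-)=\sum_{k=1}^{\ell N}q^{S^z_1}\cdots q^{S^z_{k-1}}S^-_k q^{-S^z_{k+1}}\cdots q^{-S^z_{\ell N}}$, group the terms according to whether $k$ lies to the left of $B$, inside $B$, or to the right of $B$; one then reads off that on the tensor factor $B$ each term acts through one of $\Delta^{(\ell)}(S^-)$, $\otimes^\ell q^{S^z}$, or $\otimes^\ell q^{-S^z}$, every one of which commutes with $Y^{(\ell)}_B$. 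Hence $Y^{(\ell)}_B$ commutes with $\Delta^{(\ell N)}(S^-)$ and with all its powers, and since $Y^{(\ell)}_B|0\rangle=|0\rangle$ we obtain $Y^{(\ell)}_B|n\rangle=\tfrac{1}{[n]!}(\Delta^{(\ell N)}(S^-))^n Y^{(\ell)}_B|0\rangle=|n\rangle$; the dual statement is identical using $S^+$.

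It remains to compare the two projectors on $\mathrm{Im}\,Y^{(\ell)}_B$. Since $Y^{(\ell)}_B$ projects onto the $(\ell+1)$-dimensional irreducible subspace of the block, whose weight spaces are one-dimensional, $\mathrm{Im}\,Y^{(\ell)}_B\cap V^B_j=\mathbb{C}|j\rangle_B$; consequently $P^B_j-\Pi^B_j$ is the orthogonal projector onto the orthogonal complement of $\mathbb{C}|j\rangle_B$ in $V^B_j$, which is orthogonal to $\mathrm{Im}\,Y^{(\ell)}_B$. Therefore $(P^B_j-\Pi^B_j)Y^{(\ell)}_B=0$, and using $Y^{(\ell)}_B|n\rangle=|n\rangle$ we get $(P^B_j-\Pi^B_j)|n\rangle=0$, whence $\langle n|\Pi^B_j|n\rangle_{\mathrm{norm}}=\langle n|P^B_j|n\rangle_{\mathrm{norm}}$; expanding $P^B_j$ into the stated sum over $j$-element subsets gives the assertion. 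The only genuinely delicate step is the coassociativity bookkeeping in the middle paragraph — checking that the strings of $q^{\pm S^z}$ flanking $B$ collapse on the block to $\otimes^\ell q^{\pm S^z}$ and that the $k\in B$ terms assemble into $\Delta^{(\ell)}(S^-)$ — after which everything is a routine manipulation of mutually compatible orthogonal projectors.
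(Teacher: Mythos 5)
Your proof is correct, and it takes a genuinely different route from what the paper does. The paper offers no explicit argument for this lemma: it is presented as an immediate consequence of Proposition~\ref{prop:reduction} and the identification $|\ell;j\rangle_{\rm norm}\otimes\langle\ell;j| = |j\rangle_{1,\dots,\ell;\rm norm}\otimes{}_{1,\dots,\ell}\langle j|$, and the only worked instance (the three-state case, equations (\ref{2states:density})--(\ref{2p-corr})) proceeds by brute force — expanding the rank-one block operator in the computational basis and then converting the off-diagonal $S^{\pm}S^{\mp}$ cross terms into particle-counting operators via the steady-state identities (\ref{Spm-1})--(\ref{Spm-2}), after which the $q$-dependent coefficients conspire to give the unweighted sum over subsets. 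Your argument replaces that coefficient bookkeeping with a structural observation: both sides are expectations of orthogonal projectors whose images intersect the block's $j$-particle sector in $\mathbb{C}|j\rangle_B$ versus all of $V^B_j$, and their difference annihilates any state that is $q$-symmetric on the block; the commutation of $Y^{(\ell)}_B$ with $\Delta^{(\ell N)}(S^-)$ (which follows even more directly from $[e_i,\Delta(X)]=0$ than from your explicit regrouping of the coproduct) then gives $Y^{(\ell)}_B|n\rangle=|n\rangle$. This is cleaner, scales to all $\ell$ and $j$ without case analysis, and — importantly — makes explicit a point the paper glosses over: the claimed identity is false as an operator identity (a rank-one projector versus a rank-$\binom{\ell}{j}$ one) and holds only in expectation on block-$q$-symmetric states such as the steady states. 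The minor hypotheses you use implicitly (that $|j\rangle_{B;\rm norm}\otimes\langle j|_B$ and the subset sum are both \emph{orthogonal} projectors, i.e.\ symmetric real matrices for $q>0$) are easily checked from the paper's definitions, so there is no gap.
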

\begin{lemma}
Particle density at the $x$th site of the $(\ell+1)$-state ASEP is expressed by the fundamental representation as follows: 
\begin{equation}
\begin{split}
	\rho^{(\ell)}_n(x) &= \sum_{j=0}^{\ell} j \cdot \langle n|\ell;j \rangle_{x; {\rm norm}} \otimes {_x \langle} \ell;j|n \rangle \\
	&= \sum_{j=0}^{\ell} 
	\sum_{\{x_1, \dots, x_j\} \in \mathfrak{S}_\ell \setminus \mathfrak{S}_j}
	   \langle n| \prod_{k=1 \atop k \neq x_1,\dots, x_j}^{\ell} (1-n_{\ell (x-1)+k}) \prod_{i=1}^j n_{\ell (x-1)+x_i} |n \rangle_{\rm norm} \\
 &= \sum_{p=1}^{\ell} \left(
 \sum_{r=0}^{p-1} (-1)^{r} (p-r) 
 \begin{pmatrix}
  p \\ p-r
 \end{pmatrix}
 \sum_{\{x_1,\dots,x_p\} \in \mathfrak{S}_\ell \setminus \mathfrak{S}_p} {_{\ell N} \langle} n| \prod_{j=1}^{p} n_{\ell (x-1)+x_j} |n \rangle_{\ell N; {\rm norm}}
 \right). 
\end{split}
\end{equation}
\end{lemma}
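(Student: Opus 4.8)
The plan is to prove the three displayed equalities in turn: the first is the spectral decomposition of the local number operator, the second is termwise substitution of the preceding Lemma, and the third is a purely combinatorial resummation.

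For the first equality, observe that the number operator at the $x$th site of the $(\ell+1)$-state chain is the diagonal matrix $\mathrm{diag}(0,1,\dots,\ell)_x$, and that the local $j$-particle weight vectors $|\ell;j\rangle$ are, up to the normalization of (\ref{normal}), the coordinate basis vectors of $\mathbb{C}^{\ell+1}$, each being an eigenvector of this matrix with eigenvalue $j$; likewise $\langle\ell;j|$ is proportional to the corresponding row vector, so $|\ell;j\rangle_{x;{\rm norm}}\otimes{_x\langle}\ell;j|$ is the coordinate projector $P_j$ onto the $j$-particle line at site $x$. Hence $\mathrm{diag}(0,1,\dots,\ell)_x=\sum_{j=0}^{\ell}j\,P_j$, and since $U^{(\ell)}$ is diagonal it commutes with this matrix; taking the expectation in the normalized $n$-particle steady state exactly as in (\ref{def:dense}) and comparing with (\ref{density}) gives the first line.

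For the second equality I would insert, term by term, the identity of the preceding Lemma for the probability $\langle n|\big(|\ell;j\rangle_{x;{\rm norm}}\otimes{_x\langle}\ell;j|\big)|n\rangle_{\rm norm}$; the $j=0$ contribution disappears because of the prefactor $j$. After Proposition \ref{prop:basis} every vector occurring here already lives on $\ell N$ two-state sites, so the right-hand side is literally a sum of two-state correlators, which is the middle line (the subscript $\ell N$ being spelled out only in the last line).

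For the third equality, write $m=\ell(x-1)$ and regard each index set $\{x_1,\dots,x_j\}$ as a subset $A\subseteq\{1,\dots,\ell\}$ with $|A|=j$, so that the middle line reads $\sum_{A}|A|\,\langle n|\prod_{k\notin A}(1-n_{m+k})\prod_{i\in A}n_{m+i}|n\rangle_{\rm norm}$. Expanding $\prod_{k\notin A}(1-n_{m+k})=\sum_{B\subseteq\{1,\dots,\ell\}\setminus A}(-1)^{|B|}\prod_{k\in B}n_{m+k}$ and collecting monomials by the disjoint union $C=A\sqcup B$, the coefficient of $\langle n|\prod_{i\in C}n_{m+i}|n\rangle_{\rm norm}$ with $|C|=p$ becomes $\sum_{A\subseteq C}|A|(-1)^{|C|-|A|}=\sum_{a=0}^{p}a\binom{p}{a}(-1)^{p-a}$; the substitution $r=p-a$ (the $a=0$ term vanishing) turns this into $\sum_{r=0}^{p-1}(-1)^{r}(p-r)\binom{p}{p-r}$, while $p=0$ contributes nothing. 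Summing over all $C$ of size $p$ --- which is precisely the sum over $\{x_1,\dots,x_p\}\in\mathfrak{S}_\ell\setminus\mathfrak{S}_p$ --- and then over $p=1,\dots,\ell$ reproduces the last line. I do not expect any genuine difficulty here; the only points demanding care are the first step's verification that $|\ell;j\rangle_{x;{\rm norm}}\otimes{_x\langle}\ell;j|$ is honestly the rank-one projector onto the $j$-particle subspace at site $x$ (mutual orthogonality and correct dual normalization of the local $U_q(sl_2)$ weight vectors), and keeping the signs and the subset-versus-coset bookkeeping consistent in the resummation; the residual correlators $\langle n|\prod_{i\in C}n_{m+i}|n\rangle_{\rm norm}$ are then left to be evaluated by the $n$-point formula (\ref{n-point}).
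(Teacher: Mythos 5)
Your proposal is correct and follows essentially the same route as the paper: the first line is the spectral decomposition of $\mathrm{diag}(0,1,\dots,\ell)_x$ into the projectors $|\ell;j\rangle_{x;{\rm norm}}\otimes{_x\langle}\ell;j|$ (Proposition \ref{prop:reduction}), the second line is termwise substitution of the preceding lemma, and the third is the inclusion--exclusion resummation over $C=A\sqcup B$, whose coefficient $\sum_{a=0}^{p}a\binom{p}{a}(-1)^{p-a}=\sum_{r=0}^{p-1}(-1)^{r}(p-r)\binom{p}{p-r}$ you compute correctly. You in fact supply more detail than the paper, which states the combinatorial identity without derivation; your bookkeeping (idempotence of the $n_{m+i}$ making the monomial depend only on $C$, and the vanishing of the $a=0$ and $r=p$ terms) is exactly the care required.
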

Using the following relation: 
\begin{equation}
	\sum_{r=0}^{p-1} (-1)^{r} (p-r) 
	 \begin{pmatrix}
	  p \\ p-r
	 \end{pmatrix} = 0
	\qquad p \in \mathbb{Z}_{\ge 2}, 
\end{equation}
we obtain (\ref{density}). From the formula for a one-point function (\ref{1p-corr}) in Appendix~6, we finally obtain an expression for the particle-density profile of the $(\ell+1)$-state ASEP: 
\begin{equation} \label{density*}
 \rho^{(\ell)}_n (x) = 
  \sum_{j = 1}^{\ell} \begin{bmatrix} \ell N \\ n  \end{bmatrix}^{-1}
  \sum_{k = 0}^{n - 1} (-1)^{n - k + 1} \begin{bmatrix} \ell N \\ k \end{bmatrix}
  q^{-(n-k) (\ell N + 1 - 2(\ell (x-1) + j))}. 
\end{equation}

The Fig.~\ref{fig:density} shows the particle-density profiles of steady states of the two, three, and four-state ASEPs. The profiles show the step-function like behaviors with decay lengths inversely proportional to the number of states of the processes. The step-function like behaviors are phenomenologically understood as a result that we chose a bigger hopping rate to the right than to the left. 
Detailed analysis of decay lengths is given later with asymptotic analysis. 
\begin{figure}
\begin{center}
 \includegraphics[scale=0.75]{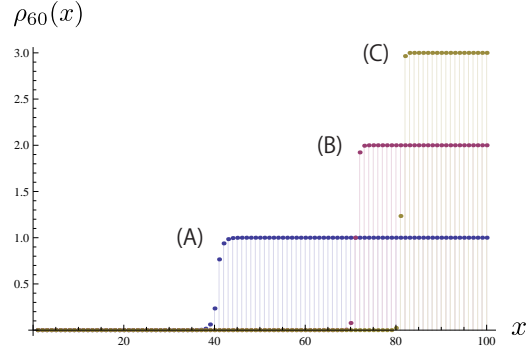}
\caption{Particle-density profiles of the (A) two-state ASEP, (B) three-state ASEP, and (C) four-state ASEPs. The plots are given for a system with size $N=100$ under the presence of $n=60$ particles with $q=2$. The profiles show the step-function like behaviors. If one chooses $q^{-1}=2$, one obtains plots reflected at $x=50$. } \label{fig:density}
\end{center}
\end{figure}
%
\subsection{Currents}
In this subsection, we calculate particle currents on the multi-state ASEPs. 
At the $x$th site of the two-state ASEP, current is defined through the following quantity: 
\begin{equation}
 \begin{split}
  &J(x) := J_R(x) - J_L(x), \\
  &J_R(x) := q \langle n|U^{-1} n_x (1 - n_{x+1}) U|n \rangle_{\rm norm} = q \langle n| n_x (1 - n_{x+1}) |n \rangle_{\rm norm}, \\
  &J_L(x) := q^{-1} \langle n|U^{-1} (1 - n_x) n_{x+1} U|n \rangle_{\rm norm} = q^{-1} \langle n| (1 - n_x) n_{x+1} |n \rangle_{\rm norm}. 
 \end{split}
\end{equation}
By definition, $J_R(x)$ gives an expectation value for the $x$th site being occupied at the same time with the $(x+1)$th site being empty, while $J_L(x)$ gives an expectation value for the $(x+1)$th site being occupied at the same time with the $x$th site being empty up to overall factors (Fig.~\ref{fig:currents}). 
\begin{figure}
\begin{center}
 \includegraphics[scale=0.7]{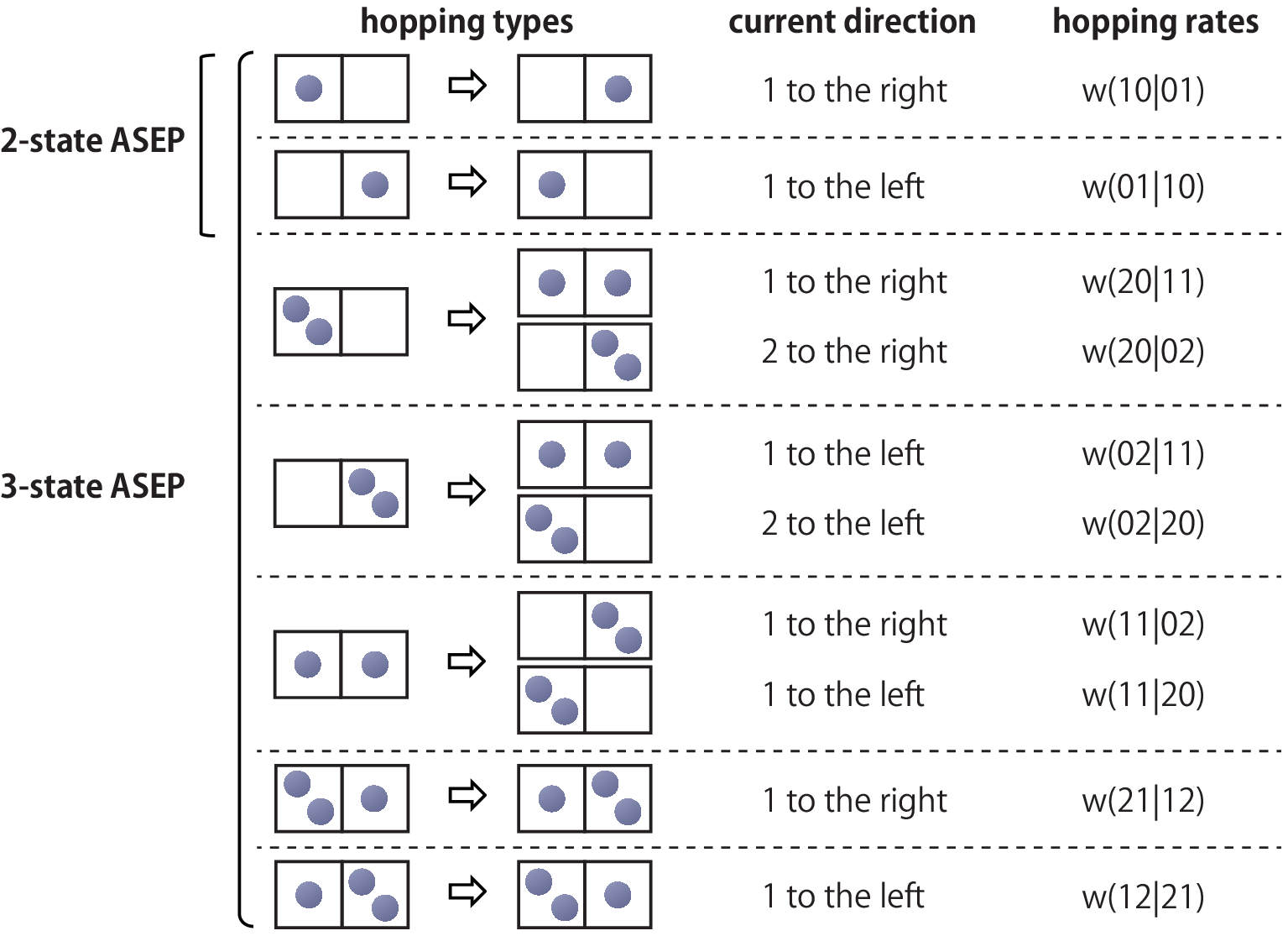}
 \caption{Allowed configurations of particles and hopping types. Five different types of right-hopping exist, while another five types for left-hopping in the three-state model. The number of particles which move together is also denoted. Explicit forms of hopping rates $w(ab | a'b')$ are given in (\ref{hopping_rates}).} \label{fig:currents}
\end{center}
\end{figure}
However, in the multi-state case, there are several possible ways for a particle to move to the right or left; For instance, in the three-state ASEP, there are five types of different hoppings which contribute to right-moving currents and another five to left-moving currents (Fig.~\ref{fig:currents}). Since our Markov matrix of the three-state ASEP is given by (\ref{3state-markov}), we define currents on the $n$-particle steady state of the three-state ASEP by 
\begin{equation} \label{currents4}
\begin{split}
 J_R^{(2)}(x) 
&= 
w(10 | 01) \langle 2;n| \left(\begin{smallmatrix} 0&& \\ &1& \\ &&0 \end{smallmatrix}\right)_{x} \left(\begin{smallmatrix} 1&& \\ &0& \\ &&0 \end{smallmatrix}\right)_{x+1} |2;n \rangle_{\rm norm}
+ w(20 | 11) \langle 2;n| \left(\begin{smallmatrix} 0&& \\ &0& \\ &&1 \end{smallmatrix}\right)_{x} \left(\begin{smallmatrix} 1&& \\ &0& \\ &&0 \end{smallmatrix}\right)_{x+1} |2;n \rangle_{\rm norm} \\
&+ w(11 | 02) \langle 2;n| \left(\begin{smallmatrix} 0&& \\ &1& \\ &&0 \end{smallmatrix}\right)_{x} \left(\begin{smallmatrix} 0&& \\ &1& \\ &&0 \end{smallmatrix}\right)_{x+1} |2;n \rangle_{\rm norm}
+ w(21 | 12) \langle 2;n| \left(\begin{smallmatrix} 0&& \\ &0& \\ &&1 \end{smallmatrix}\right)_{x} \left(\begin{smallmatrix} 0&& \\ &1& \\ &&0 \end{smallmatrix}\right)_{x+1} |2;n \rangle_{\rm norm}
\\
&+ 2 w(20 | 02) \langle 2;n| \left(\begin{smallmatrix} 0&& \\ &0& \\ &&1 \end{smallmatrix}\right)_{x} \left(\begin{smallmatrix} 1&& \\ &0& \\ &&0 \end{smallmatrix}\right)_{x+1} |2;n \rangle_{\rm norm},
 \\
 J_L^{(2)}(x) 
&=
w(01 | 10) \langle 2;n| \left(\begin{smallmatrix} 1&& \\ &0& \\ &&0 \end{smallmatrix}\right)_{x} \left(\begin{smallmatrix} 0&& \\ &1& \\ &&0 \end{smallmatrix}\right)_{x+1} |2;n \rangle_{\rm norm}
+ w(02 | 11) \langle 2;n| \left(\begin{smallmatrix} 1&& \\ &0& \\ &&0 \end{smallmatrix}\right)_{x} \left(\begin{smallmatrix} 0&& \\ &0& \\ &&1 \end{smallmatrix}\right)_{x+1} |2;n \rangle_{\rm norm} \\
&+ w(11 | 20) \langle 2;n| \left(\begin{smallmatrix} 0&& \\ &1& \\ &&0 \end{smallmatrix}\right)_{x} \left(\begin{smallmatrix} 0&& \\ &1& \\ &&0 \end{smallmatrix}\right)_{x+1} |2;n \rangle_{\rm norm}
+ w(12 | 21) \langle 2;n| \left(\begin{smallmatrix} 0&& \\ &1& \\ &&0 \end{smallmatrix}\right)_{x} \left(\begin{smallmatrix} 0&& \\ &0& \\ &&1 \end{smallmatrix}\right)_{x+1} |2;n \rangle_{\rm norm}
\\
&+ 2 w(02 | 20) \langle 2;n| \left(\begin{smallmatrix} 1&& \\ &0& \\ &&0 \end{smallmatrix}\right)_{x} \left(\begin{smallmatrix} 0&& \\ &0& \\ &&1 \end{smallmatrix}\right)_{x+1} |2;n \rangle_{\rm norm}. 
\end{split}
\end{equation}
Here $\beta$ is chosen to satisfy the conditions given by (\ref{cond_comb}). The coefficients $w(ab | a'b')$ are transition rates obtained in the update matrix (\ref{3state-markov}):
\begin{equation} \label{hopping_rates}
\begin{split}
 &w(10 | 01) = \tfrac{q^2}{q + q^{-1}},
 \hspace{29mm}
 w(20 | 11) = q + \beta q^3 (q+q^{-1}), \\
 &w(11 | 02) = \tfrac{q^3 + \beta q (q+q^{-1})}{(q+q^{-1})^2}, 
 \hspace{19mm}
 w(21 | 12) = \tfrac{q^2}{q+q^{-1}}, \\
 &w(20 | 02) = -\beta q^4, 
 \hspace{29mm}
 w(01 | 10) = \tfrac{1}{q^2 (q+q^{-1})}, \\
 &w(02 | 11) = q^{-1} + \beta q^{-3} (q+q^{-1}),
 \qquad
 w(11 | 20) = \tfrac{1 + \beta q^2(q+q^{-1})}{q^3 ( q+q^{-1})^2}, \\
 &w(12 | 21) = \tfrac{1}{q^2(q+q^{-1})}, 
 \hspace{24mm}
 w(02 | 20) = -\beta q^{-4}. 
\end{split}
\end{equation}
%
%
Computation is cumbersome but straightforward. Using formulae (\ref{n-point}), (\ref{Spm-1}), and (\ref{Spm-2}) and a map onto the fundamental representations, one obtains that currents of the three-state ASEP (\ref{currents4}) are expressed by particle-counting operators (Appendix~7). Substituting (\ref{currents3}) into (\ref{currents4}), one obtains 
\begin{equation}
 J(x) = J_R(x) - J_L(x) = 0. 
\end{equation}
Instead of giving explicit forms, we show the plots of right-moving currents and left-moving currents (Fig.~\ref{fig:current}). One obtains non-zero right-moving currents, which are compensated by non-zero left-moving ones, at the surface of the high-density domain, as can be easily expected from particle-density profiles (Fig.~\ref{fig:density}); In the high-density domain, particles are ``frozen'' since each site does not admit more than $\ell$ particles. 
\begin{figure}
\begin{center}
 \includegraphics[scale=0.65]{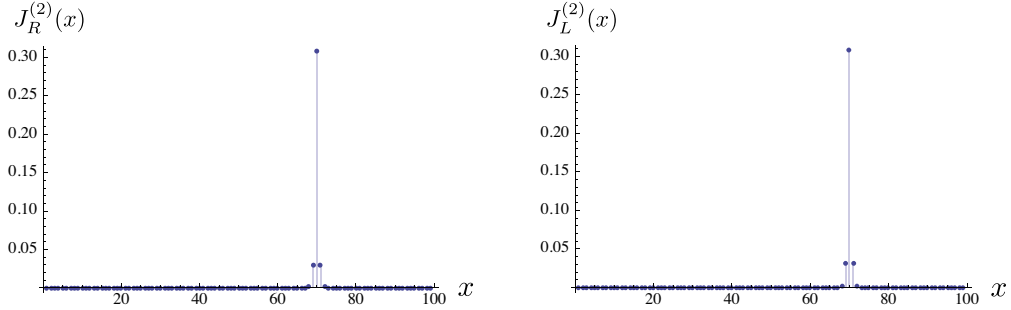}
 \caption{Right-moving and left-moving currents in a steady state of the three-state ASEP are plotted for a system with size $N=100$ with $n=60$ particles. Here we chose a free parameter $\beta$ as $\beta=1/4$ and $q$ as $q=2$. Comparing this plot with Fig.~\ref{fig:density}, one obtains that each current exists only at the surface of the high-density domain with the same amplitudes. }\label{fig:current}
\end{center}
\end{figure}

\section{Large-volume limit}
Representative behaviors of particle densities and currents obtained in Fig.~\ref{fig:density} and \ref{fig:current} are understood from asymptotic analysis of the exact expressions of density profiles and currents. In this section, we analyze these physical quantities in the large-volume limit under the existence of a large enough number of particles. 

\subsection{Asymptotics of density profiles}
Since there are no particles coming in and going out, we expect as typical behaviors of particle-density profiles that a larger hopping rate to the right than to the left $q > 1$ brings a high-density domain for $x > x_H$. At the same time, it is expected that a zero-density domain exists for $x < x_L$. As we obtain below, the threshold values $x_{H,L}$ are determined from a decay length. 
\begin{proposition}
Particle-density profiles of the $n$-particle steady state asymptotically behave as 
\begin{align}
 &\rho^{(\ell)}_n(N - \tfrac{n-1}{\ell} - r) = q^{-2\ell r} + \mathcal{O}(q^{-2\ell r}) \sim \exp [-r/\xi] \label{rising*}\\
 &\rho^{(\ell)}_n(N - \tfrac{n+1}{\ell} + 1 + r) = \ell - q^{-2\ell r} + \mathcal{O}(q^{-2\ell r}) \sim \ell - \exp [-r/\xi], \label{falling*} 
\end{align}
for large enough $N$ and $n$ with a decay length given by 
\begin{equation} \label{decay}
 \xi = \frac{1}{2\ell \ln q}. 
\end{equation}
\end{proposition}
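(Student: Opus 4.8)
The plan is to work directly from the exact formula (\ref{density*}) --- equivalently, from the representation (\ref{density}) of $\rho^{(\ell)}_n(x)$ as the sum of the two-state one-point functions (\ref{1p-corr}) at the $\ell$ sites $\ell(x-1)+1,\dots,\ell x$ of the length-$\ell N$ chain --- and to proceed in three steps: (i) extract the rising edge (\ref{rising*}) by isolating the dominant term of the alternating sum over $k$ when both $n$ and $\ell N-n$ are large; (ii) deduce the falling edge (\ref{falling*}) from a particle--hole/reflection symmetry of the closed chain; (iii) read off the decay length (\ref{decay}).

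For step (i) I would substitute $x=N-\tfrac{n-1}{\ell}-r$ into (\ref{density*}) and put $m=n-k$, so that $m$ runs from $1$ to $n$ with sign $(-1)^{m+1}$. Since $q>1$, the large-argument expansion $[p]=\tfrac{q^{p}}{q-q^{-1}}(1-q^{-2p})$ gives, for $n$ and $\ell N-n$ large, $\begin{bmatrix}\ell N\\ n-m\end{bmatrix}\big/\begin{bmatrix}\ell N\\ n\end{bmatrix}=q^{2mn-m\ell N-m^{2}}\,(1+\mathcal{O}(q^{-2}))$. Combining this with the remaining $q$-power and simplifying, the exponent of $q$ in the $(m,j)$-term collapses to $-m\bigl(m-1+2\ell r+2(\ell-j)\bigr)$, a strictly concave function of $m$ whose maximum over $m\ge1$ and $1\le j\le\ell$ is attained at $m=1$, $j=\ell$ and equals $-2\ell r$. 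The leading term then contributes $q^{-2\ell r}$ with the correct sign, every other $(m,j)$-term is smaller by at least a factor $q^{-2}$, the tail in $j$ is geometric, and the tail in $m$ decays super-geometrically; summing these would give (\ref{rising*}).

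For step (ii) I would use the unitary $V=R\,\omega$ on the length-$\ell N$ chain, where $\omega$ is the global spin flip and $R$ reverses the order of the sites. From the explicit coproduct $\Delta(S^{\pm})=S^{\pm}\otimes q^{-S^{z}}+q^{S^{z}}\otimes S^{\pm}$ one checks that $V$ carries $\Delta^{(\ell N)}(S^{-})$ to $\Delta^{(\ell N)}(S^{+})$ and the all-empty vector to the all-full vector, at fixed $q$; since the all-full vector is the lowest-weight vector of the top $(\ell N+1)$-dimensional irreducible subspace, this yields $V|n\rangle_{\rm norm}=|\ell N-n\rangle_{\rm norm}$ up to a phase. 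Because $V\,n_{y}\,V^{\dagger}=1-n_{\ell N+1-y}$, it follows that $\langle n|n_{y}|n\rangle_{\rm norm}=1-\langle \ell N-n|n_{\ell N+1-y}|\ell N-n\rangle_{\rm norm}$, and summing over the $\ell$ sites of a fused block gives the exact identity $\rho^{(\ell)}_{n}(x)=\ell-\rho^{(\ell)}_{\ell N-n}(N+1-x)$. Evaluating it at $x=N-\tfrac{n+1}{\ell}+1+r$ gives $N+1-x=\tfrac{n+1}{\ell}-r=N-\tfrac{(\ell N-n)-1}{\ell}-r$, which is exactly the rising-edge point of the $(\ell N-n)$-particle profile; inserting (\ref{rising*}) then yields (\ref{falling*}). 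Finally, writing $q^{-2\ell r}=\exp[-2\ell r\ln q]$ identifies $\xi=1/(2\ell\ln q)$, which is (\ref{decay}).

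The step I expect to be the real obstacle is (i). Term by term the estimate is transparent, but individual terms of the alternating $k$-sum are exponentially large in $N$, so turning the heuristic into a rigorous argument requires a uniform-in-$r$ bound (as $N,n\to\infty$) showing that truncating to $m=1$ is legitimate and that the errors introduced by the $q$-binomial asymptotics do not accumulate over the $\mathcal{O}(n)$ terms. By contrast, step (ii) is purely algebraic once the symmetry $V$ has been identified, and step (iii) is immediate.
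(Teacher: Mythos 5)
Your proposal is correct, and it splits into a part that coincides with the paper and a part that is genuinely different. For the rising edge (\ref{rising*}) you do essentially what the paper does: both arguments reduce to locating the dominant term of the alternating $k$-sum in (\ref{density*}) via the quadratic exponent (your $-m\bigl(m-1+2\ell r+2(\ell-j)\bigr)$ in the variable $m=n-k$ is the paper's $f(k)$ restricted to the regime $k_1^*<k_2^*$, where the maximum sits at the endpoint $k=n-1$, i.e.\ $m=1$, $j=\ell$). For the falling edge the routes diverge: the paper stays with the same sum in the regime $k_1^*>k_2^*$, where the interior terms are exponentially large, and argues that they cancel pairwise because $f(k_2^*+k')=f(k_1^*-k')$ while the signs alternate, leaving $k=k_2^*$ (with $f=0$, giving the constant $\ell$) and $k=k_2^*-1$ as the surviving contributions. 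You instead prove the exact particle--hole/reflection identity $\rho^{(\ell)}_n(x)=\ell-\rho^{(\ell)}_{\ell N-n}(N+1-x)$ via the unitary $V=R\,\omega$, check that $V$ intertwines $\Delta^{(\ell N)}(S^{-})$ with $\Delta^{(\ell N)}(S^{+})$ at fixed $q$ and sends $|n\rangle_{\rm norm}$ to $|\ell N-n\rangle_{\rm norm}$ up to a phase, and then transport the rising-edge asymptotics; the point evaluation $N+1-x=\tfrac{n+1}{\ell}-r=N-\tfrac{(\ell N-n)-1}{\ell}-r$ is exactly right. Your route buys something real here: the cancellation of exponentially large alternating terms in the paper is only exact at the level of the exponents (the $q$-binomial prefactors at $k=k_2^*+k'$ and $k=k_1^*-k'$ are not equal), so the paper's falling-edge step is the more heuristic one, whereas your symmetry reduces it to the rising-edge estimate, where the terms are monotonically decreasing in magnitude and the truncation to $m=1$ is unproblematic. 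The uniform-in-$r$ error control you flag in step (i) is indeed the only remaining analytic point, and it is no worse than what the paper itself leaves implicit.
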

\begin{proof}
Remind that the particle-density profile in the $n$-particle steady state of the $(\ell + 1)$-state ASEP is expressed by the formula (\ref{density*}). Since $q$-binomials asymptotically behave as 
\begin{equation}
 \begin{split}
  \begin{bmatrix} \ell N \\ n \end{bmatrix}^{-1} \begin{bmatrix} \ell N \\ k \end{bmatrix}
  &\sim 
  q^{-\frac{1}{2} (\ell N-n)(\ell N-n+1)} 
  q^{-\frac{1}{2} n(n+1)}
  q^{\frac{1}{2} (\ell N-k)(\ell N-k+1)}
  q^{\frac{1}{2} k(k+1)} \\
  &\sim
  q^{(n-k) (n+k-\ell N)}, 
 \end{split}
\end{equation}
density profiles show asymptotic behaviors for $q>1$ as
\begin{equation} \label{density-asym}
\begin{split}
 \rho^{(\ell)}_n(x) 
  &\sim
  \sum_{j=1}^{\ell} \sum_{k=0}^{n-1}
  (-1)^{n-k+1}
  q^{-(n-k) (\ell N+1-2 (\ell (x-1)+j)-n-k+\ell N)} \\
 &= 
  \sum_{j=1}^{\ell} \sum_{k=0}^{n-1}
  (-1)^{n-k+1}
  q^{f(k)}, 
\end{split}
\end{equation}
where 
\begin{equation}
 f(k) = -(n-k) (\ell N+1-2 (\ell (x-1)+j)-n-k+\ell N). 
\end{equation}

Thus, asymptotic behavior of particle density is governed by the maximum value of $f(k)$. The function $f(k)$ is a quadratic function whose zeros are located at $k_1^* = n$ and $k_2^* = 2\ell N - 2(\ell (x-1)+j) - n  + 1$ (Fig.~\ref{fig:quadratic}). 
\begin{figure}
\begin{center}
 \includegraphics[scale=0.65]{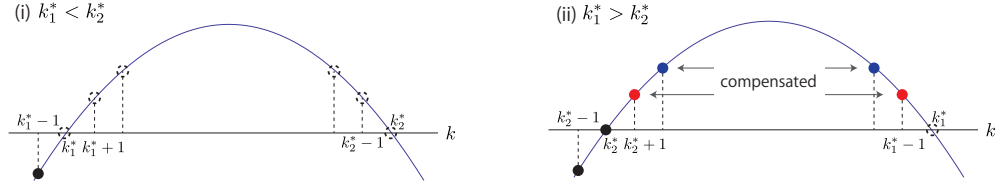}
 \caption{Behaviors of a function $f(k)$ in the case (i) and (ii). While taking the values depicted by filled circles, $k$ is not allowed to take values at dotted circles. The red circles and the blue ones are compensated each other accompanied by opposite signs as is obtained in particle density. } \label{fig:quadratic}
\end{center}
\end{figure}
Then the following two cases are to be considered: 
\begin{enumerate}
 \item[(i)] $k_1^* < k_2^*$ iff $~x < N -\frac{j+n}{\ell} + \frac{1}{2\ell} + 1$ 
 \item[(ii)] $k_1^* > k_2^*$ iff $~x > N -\frac{j+n}{\ell} + \frac{1}{2\ell} + 1$ 
\end{enumerate}
Taking into account that the transition from the zero-particle domain to the high-density domain is assumed to occur around $x \sim \lfloor N - \frac{n}{\ell} \rfloor$, the exponent of the rising edge is analyzed from the case (i), while the falling edge from (ii). 

In the case (i), the maximum value of $f(k)$ is given by $k=k_1^*-1$, since the summation is taken over $k \in \{0, \dots, n-1 = k_1^*-1\}$. The maximum exponent is then derived as 
\begin{equation}
 f(k_1^*-1) = -(k_2^* - k_1^* + 1), 
\end{equation}
which takes an negative value since $k_1^* < k_2^*$ in the case (i). Thus, we have the rising exponent as 
\begin{equation} \label{rising}
\begin{split}
 \rho^{(\ell)}_n(x) &\sim \sum_{j=1}^{\ell} q^{-(k_2^* - k_1^* + 1)} \\
 &\sim q^{-(2\ell N - 2\ell x - 2n + 2)}. 
\end{split}
\end{equation}
On the other hand, in the case (ii), the maximum value of $f(k)$ is given by $k = \lfloor \frac{k_1^* + k_2^*}{2} \rfloor$
and $k = \lceil \frac{k_1^* + k_2^*}{2} \rceil$. 
Nevertheless, $q^{f(\lfloor \frac{k_1^* + k_2^*}{2} \rfloor)}$ and $q^{f(\lceil \frac{k_1^* + k_2^*}{2} \rceil)}$ appear in (\ref{density-asym}) accompanied by different signs, and thus these two terms compensate each other. The same happens for $k \in \{k_2^* + 1, \dots, k_1^* - 1\}$ due to 
\begin{equation}
 f(k_2^* + k') = f(k_1^* - k'), \qquad k' \in \{1, \dots, \lfloor \tfrac{k_1^* + k_2^*}{2} \rfloor\}. 
\end{equation}
Thus, the maximum exponent is given by $k = k_2^*$, 
{\it i.e.} $f(k_2^*) = 0$. The second leading term is then given by $k = k_2^* - 1$, and we obtain the falling exponent as 
\begin{equation} \label{falling}
\begin{split}
 \rho^{(\ell)}_n(x) &\sim 
 \sum_{j=1}^{\ell} \left[
 1 - q^{-(n - k_2^* + 1)}
 \right] \\
 &\sim 
 \ell - q^{-(-2\ell N + 2\ell(x-1) + 2n + 2)}. 
\end{split}
\end{equation}
From (\ref{rising}) and (\ref{falling}), asymptotic behaviors are obtained for particle-density profiles, respectively as in (\ref{rising*}) and (\ref{falling*}). 
\end{proof}

Asymptotic behaviors (\ref{rising*}) and (\ref{falling*}) imply that particle density exponentially decays for $x < x_L$ with $x_L = N - \frac{n+1}{\ell} + 1$, while one obtains the high-density domain for $x > x_H$, {\it i.e.} $x_H = N - \frac{n-1}{\ell}$. Thus we conclude that the transition domain antiproportionally shrinks with respect to the number of states. At the same time, the decay length (\ref{decay}) also indicates that the multi-state system shows faster decay in density profiles than the two-state ASEP. 

\section{Conclusions}
In this paper, we have constructed the multi-state asymmetric simple exclusion processes based on the fusion procedure of the Temperley-Lieb algebra, satisfied by the Markov matrices. Motivated by higher-spin extension of integrable quantum spin chains, we have discussed a new family of integrable stochastic processes. As is in the case of the two-state ASEP, the multi-state processes proposed in this paper have corresponding quantum spin chains, some of whose wave functions are exactly discussed by the Bethe ansatz method. This fact implies that the multi-state ASEP also admits exact calculation of physical quantities through the Bethe ansatz method. 

Existence of the similarity transformation which makes the Markov matrix to satisfy probability conservation is significant, since otherwise the higher-dimensional Temperley-Lieb generators do not describe stochastic processes. It is also important fact that there exists a linear combination of the fused generators which satisfies the positivity condition. Although we only gave conjectures as the conditions satisfied by the coefficient parameters for an arbitrary $(\ell+1)$-state case, the existence of the $(\ell+1)$-state ASEP is ensured since at least one generator $M_{i,i+1}^{(\ell;1)}$ has only positive off-diagonal elements. 
In the totally asymmetric limit, we gave physical interpretation to the coefficient parameters as friction among particles. Each coefficient is independently controllable as long as it satisfies the positivity condition. Especially in the TASEP limit, each coefficient controls each type of hoppings, {\it i.e.} the parameter $b_r^{(\ell)}$ controls the hopping rate of $r$ particles together. 
The controllability of these parameters and the solvability of the model allows us to used the multi-state ASEP in the discussion of various one-dimensional stochastic processes. 
We hope that the multi-state ASEP shed the light on the exact analysis of fields, such as granular matter, which have been far from analytical treatment. 

Exact analysis of the multi-state ASEP is possible due to beautiful algebraic structure of the Temperley-Lieb algebra, which still holds for the extended model. Based on this algebraic structure, we computed particle-density profiles and particle currents on a steady state. Then a characteristic feature of the multi-state process has been obtained in the decay length which strongly depends on the number of states of the system. The decay length was defined from the rising and falling exponent of particle-density profiles obtained from asymptotic analysis. 

Although we focused on the closed model on which the Temperley-Lieb algebraic structure holds for the whole system, it is more interesting to consider general boundary conditions. The open system would be solved via the matrix product ansatz, although we did not find the matrix product steady state~\cite{bib:A13}. 
Usually, matrix product states are constructed based on transition rules of the model and thus requires the knowledge of explicit forms of an update matrix. Although we gave explicit forms of the update matrices for arbitrary $\ell$, the expressions are too cumbersome and the relations among the fused Temperley-Lieb generators are still unclear. Recently, the $T$-systems and the $Y$-systems of the fused generators were found~\cite{bib:MPR14}, which might help to understand algebraic structure of the fused Temperley-Lieb generators. 
After these problems are resolved, we hope that more detailed analysis including general boundary cases would become possible. 

Another interesting open problem is how the multi-state TASEPs relate with combinatorial problems. It has been shown that current fluctuations of TASEP with the step initial condition, which are also considered as fluctuations of the surface growth model, obey the Tracy--Widom (TW) distribution~\cite{bib:S05, bib:FS05, bib:DF06, bib:S07, bib:D07, bib:IS07, bib:BFPS06, bib:TW08, bib:S08, bib:BFS08}. This TW distribution appears in various context of probability theory such as distribution of the longest increasing subsequences~\cite{bib:AD99, bib:BDJ99, bib:B01, bib:GTW01, bib:CG04, bib:HLM06, bib:HL09, bib:BH10, bib:HR10, bib:HT12}. 
Dynamics of the two-state TASEP has been analyzed by using combinatorics, symmetric polynomials, and the random matrix theory, after mapped to the growth of Young diagrams \cite{bib:J00}. We hope that the similar correspondence between multi-state TASEPs and combinatorial objects will be unveiled soon. 
All the above future works are possible only when the process is integrable. Therefore, we regard out work as an initial but important step toward investigating a large class of non-equilibrium systems and their mathematical aspects. 


%
\begin{acknowledgement}
The author is grateful to C. Arita, K. Mallick and K. Motegi for helpful discussions. Especially, C.M. would like to thank C. Arita, N. Crampe, E. Ragoucy, and M. Vanicat for valuable comments for improvement of this paper. The author also would like to thank N. Demni for suggesting interesting future works related to combinatorial problems. 

This work is partially supported by the Aihara Project, the FIRST program from JSPS, initiated by CSTP. 
\end{acknowledgement}
%
\section*{Appendix~1: The $U_q(sl_2)$ algebra} \label{sec:uqsl2}
\addcontentsline{toc}{section}{Appendix}
The $U_q(sl_2)$ algebra is generated by $S^{\pm}$ and $q^{S^z}$ which satisfy the following commutation relations: 
\begin{equation}
 q^{S^z} S^{\pm} q^{-S^z} = q^{\pm} S^{\pm}, 
  \qquad
  [S^+,\, S^-] = \frac{q^{2S^z} - q^{-2S^z}}{q - q^{-1}}. 
\end{equation}
As the Hopf algebra, the following comultiplication holds for the $U_q(sl_2)$ generators $X \in \{S^{\pm}, q^{S^z}\}$: 
\begin{equation}
 ({\bm 1} \otimes \Delta) \circ \Delta (X) = (\Delta \otimes {\bm 1}) \circ \Delta (X). 
\end{equation}
By choosing $\Delta(S^{\pm}) = S^{\pm} \otimes q^{-S^z} + q^{S^z} \otimes S^{\pm}$ and defining $\Delta^{(N)}$ by 
\begin{equation}
\begin{split}
 \Delta^{(N)} &= (\underbrace{{\bm 1} \circ \cdots \circ {\bm 1}}_{N-2} \circ \Delta) \cdots ({\bm 1} \circ \Delta) \Delta, \\
 &= \cdots \\
 &= (\Delta \circ \underbrace{{\bm 1} \circ \cdots \circ {\bm 1}}_{N-2}) \cdots (\Delta \circ {\bm 1}) \Delta, 
\end{split} 
\end{equation}
we have the spacially extended generators: 
\begin{align}
 &\Delta^{(N)}(S^{\pm}) = \sum_{j=1}^N q^{S_1^z} \otimes \cdots \otimes q^{S_{j-1}^z} \otimes S^{\pm}_j \otimes q^{-S_{j+1}^z} \otimes \cdots \otimes q^{-S_N^z}, 
\\
 &\Delta^{(N)}(q^{S^z}) = q^{S_1^z} \otimes \cdots \otimes q^{S_N^z}. 
\end{align}
\section*{Appendix~2: Graphical representations of the TL algebra} \label{sec:TLalgebra}
The basis of the TL algebra is known to be described by the link patterns. The link pattern is made by connecting two distinct sites with non-crossing arches. Each of sites is identified with a space of the Temperley-Lieb algebra. Link patterns with different shapes are orthogonal to each other, which form the basis of the TL algebra. 
On this basis, the identity operator just map the original spaces to themselves which is graphically represented as in the left figure of Fig.~\ref{fig:TL-2dim}. On the other hand, the TL generator mixes two spaces and then its graphical representation is given by the right one of Fig.~\ref{fig:TL-2dim}. Then the TL algebraic relations (\ref{TL-relations}) are graphically given by Fig.~\ref{fig:TLrel}. 
\begin{figure}
\begin{center}
 \includegraphics[scale=0.45]{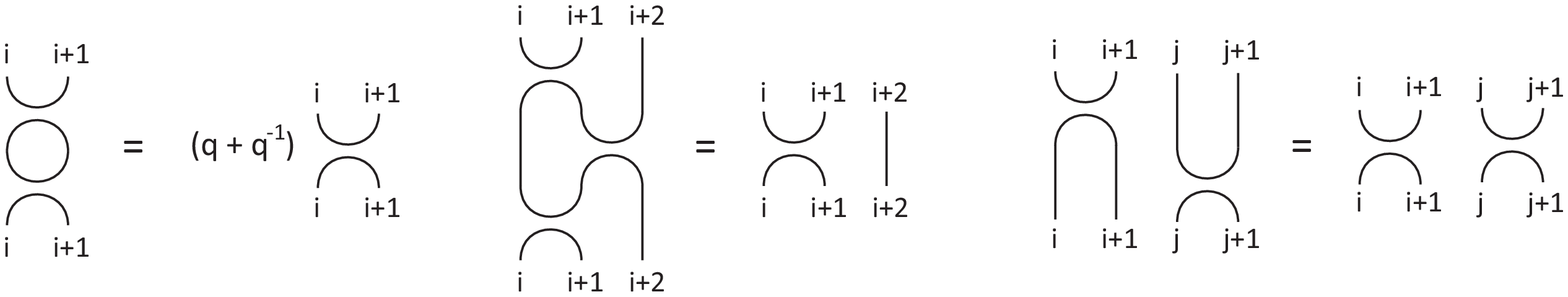}
 \caption{Graphical representations of the algebraic relations (\ref{TL-relations}). If a circle shows up, a weight $(q+q^{-1})$ is added.} \label{fig:TLrel}
\end{center}
\end{figure}

As an illustration, we show the action of $e_2$ on the basis of link patterns in the case of $N=6$ (Fig.~\ref{fig:link_patterns}). 
\begin{figure}
\begin{center}
 \includegraphics[scale=0.5]{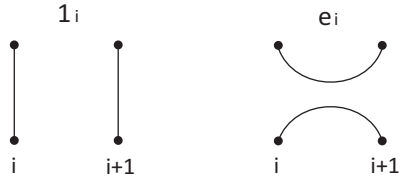}
 \caption{Graphical representations of the identity operator and the TL generator.} \label{fig:TL-2dim}
\end{center}
\end{figure}
\begin{figure}
\begin{center}
 \includegraphics[scale=0.5]{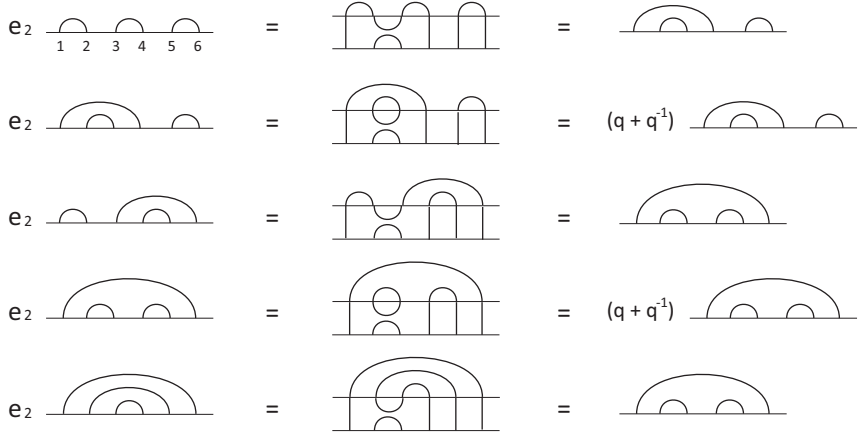}
 \caption{The action of $e_2$ on the basis of link patterns with $N=6$. Each shape can be topologically deformed. A weight $(q+q^{-1})$ is added coming from the first relation of (\ref{TL-relations}).} \label{fig:link_patterns}
\end{center}
\end{figure}

Taking into consideration of the definitions (\ref{TL-3dim1}) and (\ref{TL-3dim2}), it is now naturally obtained that the generators $e_i^{(2;1)}$ and $e_i^{(2;2)}$ are graphically represented as in Fig.~\ref{fig:TL-3dim}. The projection operator $Y^{(2)}$ is denoted by a red circle. 
\begin{figure}
\begin{center}
 \includegraphics[scale=0.5]{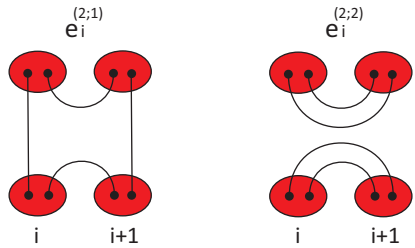}
 \caption{The graphical representations of the three-dimensional fused TL generators $e_i^{(2;1)}$ and $e_i^{(2;2)}$. The projection operator $Y^{(2)}$ is denoted by a red circle.} \label{fig:TL-3dim}
\end{center}
\end{figure}
%

\section*{Appendix~3: $SO(3)$ Birman-Murakami-Wenzl algebra} \label{sec:BMWalgebra}
Here we give algebraic relations between $e_i^{(2;1)}$ and $e_i^{(2;2)}$: 
\begin{align}
 &(e_i^{(2;2)})^2 = (q^2 + 1 + q^{-2}) e_i^{(2;2)}, \\
 &(e_i^{(2;1)})^2 = \frac{q^2 + q^{-2}}{q + q^{-1}} e_i^{(2;1)} + \frac{1}{q^2 + q^{-2}} e_i^{(2;2)}, \\
 &e_i^{(2;1)} e_i^{(2;2)} = \frac{q^2 + 1 + q^{-2}}{q + q^{-1}} e_i^{(2;2)},  
\end{align}
and 
\begin{align}
 &e_i^{(2;2)} e_{i+1}^{(2;2)} e_i^{(2;2)} = e_i^{(2;2)}, \\
 &e_i^{(2;1)} e_{i+1}^{(2;2)} e_i^{(2;1)} = e_{i+1}^{(2;1)} e_i^{(2;2)} e_{i+1}^{(2;1)}, \\
 &e_i^{(2;2)} e_{i+1}^{(2;1)} e_i^{(2;2)} = \tfrac{q^2 + 1 + q^{-2}}{q + q^{-1}} e_i^{(2;2)}, \\
 &e_i^{(2;1)} e_{i+1}^{(2;2)} e_i^{(2;2)} = e_{i+1}^{(2;1) e_i^{(2;2)}}, \\
 &e_i^{(2;1)} e_{i+1}^{(2;1)} e_i^{(2;2)} = \tfrac{q^2 + q^{-2}}{q + q^{-1}} e_{i+1}^{(2;1)} e_i^{(2;2)} + \tfrac{1}{(q + q^{-1})^2} e_i^{(2;2)}, \\
 &e_i^{(2;1)} e_{i+1}^{(2;1)} e_i^{(2;1)} - e_{i+1}^{(2;1)} e_i^{(2;1)} e_{i+1}^{(2;1)} 
 = \tfrac{1}{(q + q^{-1})^2} e_{i+1}^{(2;1)} e_i^{(2;2)} + \tfrac{1}{(q + q^{-1})^2} e_i^{(2;2)} e_{i+1}^{(2;1)} \nonumber \\
 &\hspace{31mm}+ \tfrac{1}{(q + q^{-1})^2} e_i^{(2;1)} - \tfrac{1}{(q + q^{-1})^3} e_i^{(2;2)} - \tfrac{1}{(q + q^{-2})^2} e_i^{(2;1)} e_{i+1}^{(2;2)} \nonumber \\
 &\hspace{31mm}- \tfrac{1}{(q + q^{-1})^2} e_{i+1}^{(2;2)} e_i^{(2;1)} - \tfrac{1}{(q + q^{-1})^3} e_{i+1}^{(2;2)} + \tfrac{1}{(q + q^{-1})^2} e_{i+1}^{(2;1)}. 
\end{align}
These are known as the $SO(3)$ BMW algebra. This kind of algebraic relations have not been found yet for the fused TL generators with $\ell \geq 3$. 

\section*{Appendix~4: Matrix elements of the $e_i^{(\ell;r)}$ generator} \label{sec:fusedTL}
\addcontentsline{toc}{section}{Appendix}
In this appendix, we compute explicit matrix elements of the $e_i^{(\ell;r)}$ generator. First of all, we give $e_i^{(\ell;r)}$ in terms of two-dimensional representations. Let us introduce the following vector notations: 
\begin{equation} \label{double-ket}
\begin{split}
 &|m_1,\dots,m_r \rangle\rangle = |\underset{1}{0}\, \dots \, \underset{m_1}{1}\, \underset{\dots}{\dots}\, \underset{m_r}{1}\, \dots\, \underset{2r}{0} \rangle, 
\end{split}
\end{equation}
with the definitions (\ref{state-def1}) and (\ref{2state-basis}). 
Then, $e_i^{(\ell;r)}$ is expressed as 
\begin{equation} \label{TL-2dim}
\begin{split}
 e_i^{(\ell;r)} = &\underbrace{\bm{1} \otimes \cdots \otimes \bm{1}}_{\ell - r} \otimes
 \sum_{1 \leq m_1 < \dots < m_r \leq 2r \atop 1 \leq n_1 < \dots < n_r \leq 2r} 
 q^{\sum_{k=1}^r \Theta(\ell-m_k)-\frac{1}{2}}
 q^{\sum_{k=1}^r \Theta(\ell-n_k)-\frac{1}{2}}
 (-1)^{{\rm card}(m_j \neq n_k)}
 \\
 &\cdot \prod_{k=1}^r (1 - \delta_{m_{r-k+1}, 2r-m_k+1})\,
 |m_1 \dots m_r \rangle\rangle \langle\langle n_1 \dots n_r|
 \otimes \underbrace{\bm{1} \otimes \cdots \otimes \bm{1}}_{\ell - r}, 
\end{split}
\end{equation}
where 
\begin{equation}
 \delta_{ab} = 
  \begin{cases}
   1 & a = b \\
   0 & a\neq b. 
  \end{cases}
\end{equation}
The $(\ell+1)$-dimensional representation of this generator is obtained by using the two-dimensional vector basis of the $U_q(sl_2)$ algebra. Since the $s$-particle state is expressed in terms of the notation (\ref{double-ket}) by  
\begin{align}
 &|s \rangle_{\rm norm} = \begin{bmatrix} \ell \\ s \end{bmatrix}^{-1}
 q^{-\frac{s}{2} (\ell+1)} \sum_{1 \leq n_1 < \dots < n_s \leq \ell} q^{\sum_{k=1}^s n_k} |n_1 \dots n_s \rangle\rangle, \\
 &\langle s| = 
 q^{-\frac{s}{2} (\ell+1)} \sum_{1 \leq n_1 < \dots < n_s \leq \ell} q^{\sum_{k=1}^s n_k} \langle\langle n_1 \dots n_s|, 
\end{align}
we have
\begin{equation} \label{TL-ldim}
\begin{split}
 &\langle r_L| \otimes \langle s_L|\; e_i^{(\ell;r)}\; |r_R \rangle \otimes |s_R \rangle
 \\
  &= 
 \begin{bmatrix} \ell \\ r_R \end{bmatrix}^{-1}
 \begin{bmatrix} \ell \\ s_R \end{bmatrix}^{-1}
 q^{-\frac{1}{2}(r_L + s_L + r_R + s_R) (\ell+1)}
  \sum_{1 \leq m_1<\cdots<m_{r_L} \leq \ell \atop {1 \leq n_1<\cdots<n_{s_L} \leq \ell \atop {1 \leq m'_1<\cdots<m'_{r_R} \leq \ell \atop 1 \leq n'_1<\cdots<n'_{s_R} \leq \ell}}}
 q^{\sum_{k=1}^{r_L} m_k} q^{\sum_{j=1}^{s_L} n_j} q^{\sum_{k'=1}^{r_R} m'_{k'}} q^{\sum_{j'=1}^{s_R} n'_{j'}}
 \\
 &\hspace{5mm}\cdot
  \langle\langle m_1 \dots m_{r_L}| \otimes \langle\langle n_1 \dots n_{s_L}|\
 e_i^{(\ell;r)}
 \;|m_1' \dots m_{r_R}' \rangle\rangle \otimes |n_1' \dots n_{s_R}' \rangle\rangle. 
\end{split}
\end{equation}
Here we abbreviated the subscription ``norm''. 
If one sets $m_j$ and $n_j$ in (\ref{TL-2dim}) by 
\begin{align}
 &\bm{\alpha} = \{m_1+\ell-r,\dots,m_r+\ell-r\} = \{a_1,\dots,a_r\}, \\
 &\bm{\beta} = \{n_1+\ell-r,\dots,n_r+\ell-r\} = \{b_1,\dots,b_r\},
\end{align}
(\ref{TL-ldim}) has a non-zero value only when $\bm{\alpha} \in \{m_1, \dots, m_{r_L}, n_1+\ell, \dots, n_{s_L}+\ell\}$ and $\bm{\beta} \in \{m'_1, \dots, m'_{r_R}, n'_1+\ell, \dots, n'_{s_R}+\ell\}$. We also introduce the following notations: 
\begin{align}
 &\bm{\bar{\alpha}} = \{m_1, \dots, m_{r_L}, n_1+\ell, \dots, n_{s_L}+\ell\}\; \backslash\; \bm{\alpha}
 = \{\bar{a}_1, \dots, \bar{a}_{r_L+s_L-r}\}, \\
 &\bm{\bar{\beta}} = \{m'_1, \dots, m'_{r_R}, n'_1+\ell, \dots, n'_{s_R}+\ell\}\; \backslash\; \bm{\beta}
 = \{\bar{b}_1, \dots, \bar{b}_{r_R+s_R-r}\}.
\end{align}
Taking into account the following relations: 
\begin{align}
 &\sum_{k=1}^{r_L} m_k = \sum_{k=1}^{r_L - {\rm card}(a_k \leq \ell)} \bar{a}_k + \sum_{k=1}^{{\rm card}(a_k \leq \ell)} a_k, 
 \\
 &\sum_{j=1}^{s_L} n_j = \sum_{j={\rm card}(a_k \leq \ell)+1}^r (a_j-\ell) + \sum_{j={\rm card}(a_k > \ell)+1}^{r_L + s_L - r} (\bar{a}_j - \ell), 
 \\
 &\sum_{k'=1}^{r_R} m'_{k'} = \sum_{k=1}^{r_R - {\rm card}(b_{j} \leq \ell)} \bar{b}_{k} + \sum_{k=1}^{{\rm card}(b_{j} \leq \ell)+1} b_{k}, 
 \\
 &\sum_{j'=1}^{s_R} n'_{j'} = \sum_{j={\rm card}(b_{j} > \ell)+1}^r (b_{j}-\ell) + \sum_{j={\rm card}(b_{j} > \ell)+1}^{r_R + s_R - r} (\bar{b}_{j} - \ell), 
\end{align}
we have
\begin{equation} \label{TL-ldim3}
\begin{split}
 &\langle r_L| \otimes \langle s_L|\; e_i^{(\ell;r)}\; |r_R \rangle \otimes |s_R \rangle
 \\
 &= 
 \begin{bmatrix}
  \ell \\ r_R
 \end{bmatrix}^{-1}
 \begin{bmatrix}
  \ell \\ s_R
 \end{bmatrix}^{-1}
 \delta_{r_L+s_L, r_R+s_R}
 q^{-\frac{1}{2} (r_L + s_L + r_R + s_R) (\ell+1)}
 \sum_{1\leq m_1<\dots<m_r\leq 2r \atop 1\leq n_1<\dots<n_r\leq 2r} 
 q^{\sum_{k=1}^r \Theta(\ell-m_k)-\frac{1}{2}}
 q^{\sum_{k=1}^r \Theta(\ell-n_k)-\frac{1}{2}}
 \\
 &\cdot
 \sum_{\ell-r+1 \leq a_1 < \dots < a_r \leq \ell+r \atop \ell-r+1 \leq b_1 < \dots < b_r \leq \ell+r}
 (-1)^{{\rm card}(m_j \neq n_k)}
 \prod_{k=1}^r (1 - \delta_{a_{r-k+1}-\ell+r, \ell+r-a_k+1}) (1 - \delta_{b_{r-k+1}-\ell+r, \ell+r-b_k+1})
 \\
 &\cdot
 \sum_{1 \leq \bar{a}_1 < \dots < \bar{a}_{r_L - {\rm card}(a_k \leq \ell)} \leq \ell-r
 \atop{\ell+r+1 \leq \bar{a}_{r_L - {\rm card}(a_k \leq \ell)+1 < \dots < \bar{a}_{r_L+s_L-r} \leq 2\ell}}}
 q^{-\sum_{k=1}^{r_L+s_L-r} \bar{a}_k}
 \sum_{1 \leq \bar{b}_1 < \dots < \bar{b}_{r_R - {\rm card}(b_k \leq \ell)} \leq \ell-r
 \atop{\ell+r+1 \leq \bar{b}_{r_R - {\rm card}(b_k \leq \ell)+1 < \dots < \bar{b}_{r_L+s_L-r} \leq 2\ell}}}
 q^{-\sum_{k=1}^{r_R+s_R-r} \bar{b}_k}
 \\
 &\cdot
 \prod_{k=1}^{r_L+s_L-r} \delta_{\bar{a}_k\bar{b}_k} \cdot
 q^{-\ell\cdot{\rm card}(a_k>\ell)} q^{-\ell\cdot{\rm card}(\bar{a}_k>\ell)} q^{-\ell\cdot{\rm card}(b_k>\ell)} q^{-\ell\cdot{\rm card}(\bar{b}_k>\ell)}. 
\end{split}
\end{equation}
Especially, non-zero elements of $e_i^{(\ell;1)}$ are obtained from explicit calculation as 
\begin{align}
 &\langle r| \otimes \langle s|\; e_i^{(\ell;1)} \;|r \rangle \otimes |s \rangle
 = 
 q^{-r+s+\ell+1} \frac{[r] [\ell-s]}{[\ell]^2}
 +q^{-r+s-\ell-1} \frac{[\ell-r] [s]}{[\ell]^2}, \label{elements-1}
 \\
 &\langle r+1| \otimes \langle s-1|\; e_i^{(\ell;1)} \;|r \rangle \otimes |s \rangle
 = 
 -
 q^{-r+s-1} \frac{[\ell-r] [s]}{[\ell]^2}, \label{elements-2}
 \\
 &\langle r-1| \otimes \langle s+1|\; e_i^{(\ell;1)} \;|r \rangle \otimes |s \rangle
 = 
 -
 q^{-r+s+1} \frac{[\ell-r] [s]}{[\ell]^2}. \label{elements-3}
\end{align}
$r$ and $s$ run from $0$ to $\ell$ for (\ref{elements-1}), while they run from $1$ to $\ell-1$ for (\ref{elements-2}) and (\ref{elements-3}). 
The expression (\ref{TL-ldim3}) naturally leads to the particle-conservation law given by $r_L + s_L = r_R + s_R$.

\section*{Appendix~5: Derivation of the norms} \label{sec:norms}
\addcontentsline{toc}{section}{Appendix}
In derivation of the norms of an $n$-particle steady state of the $(\ell+1)$-state ASEP, we use the commutation relation of the $U_q(sl_2)$ generators: 
\begin{equation}
 [\Delta(S^+),\, \Delta(S^-)] = \frac{\Delta(q^{2S^z}) - \Delta(q^{-2S^z})}{q - q^{-1}}. 
\end{equation}
This relation leads to 
\begin{equation}
\begin{split}
 (\Delta(S^+))^n (\Delta(S^-))^n &= (\Delta(S^+))^{n-1} 
 \left\{ \Delta(S^-) \Delta(S^+) + \frac{\Delta(q^{2S^z}) - \Delta(q^{-2S^z})}{q - q^{-1}} \right\}
 (\Delta(S^-))^{n-1} \\
&= (\Delta(S^+))^{n-1} (\Delta(S^-))^n \Delta(S^+) \\
 &+ (\Delta(S^+))^{n-1} (\Delta(S^-))^{n-1} \sum_{j = 1}^n \frac{q^{-2(n-j)} \Delta(q^{2S^z}) - q^{2(n-j)} \Delta(q^{-2S^z})}{q - q^{-1}}. 
\end{split}
\end{equation}
Using the fact that $\Delta(S^+) |0\rangle = 0$ and $\Delta(q^{\pm 2S^z}) | 0\rangle = q^{\pm N}$, we obtain the following recursion relation: 
\begin{equation}
\begin{split}
 &\langle 0| (\Delta^{(\ell N)}(S^+))^n (\Delta^{(\ell N)}(S^-))^n |0 \rangle \\
 &= \sum_{j = 1}^n \frac{q^{\ell N-2(n-j)} - q^{-\ell N+2(n-j)}}{q - q^{-1}} \langle 0| (\Delta^{(\ell N)}(S^+))^{n-1} (\Delta^{(\ell N)}(S^-))^{n-1} |0 \rangle. 
\end{split}
\end{equation}
Taking into account that the initial condition: 
\begin{equation}
\langle 0| \Delta^{(\ell N)}(S^+) \Delta^{(\ell N)}(S^-) |0 \rangle
  = \frac{q^{\ell N} - q^{-\ell N}}{q - q^{-1}}, 
\end{equation}
we obtain the expression (\ref{norm}) for the norm. 

\section*{Appendix~6: Correlation functions} \label{sec:corr_func}
\addcontentsline{toc}{section}{Appendix}
We introduce a particle-counting operator defined on a two-dimensional vector space: 
\begin{equation} \label{count_op}
n_j = S_j^+ S_j^- =  \begin{pmatrix} 0 & \\ & 1 \end{pmatrix}_j, \qquad
1 - n_j = S_j^- S_j^+ = \begin{pmatrix} 1 & \\ & 0 \end{pmatrix}_j. 
\end{equation}
In the case of the two-state ASEP, important physical quantities such as particle densities and particle currents are expressed by $n_j$. 

For instance, an $l$-point correlation function of the two-state ASEP is written by means of particle-counting operators in the following way: 
\begin{equation}
 \langle n|U n_{x_1} n_{x_2} \dots n_{x_l} U^{-1}|n \rangle_{\rm norm}
  = 
  \langle n|n_{x_1} n_{x_2} \dots n_{x_l}|n \rangle_{\rm norm}. 
\end{equation}
Useful formulae have been obtained in~\cite{bib:SS94}: 
The one-point correlation functions is given by 
\begin{equation} \label{1p-corr}
 \langle n|U^{-1} n_x U|n \rangle =
 \langle n| n_x |n \rangle =
   \begin{bmatrix} N \\ n \end{bmatrix}^{-1}
   \sum_{k=0}^{n-1} (-1)^{n-k+1} q^{-(n-k) (N+1-2x)}
    \begin{bmatrix} N \\ k \end{bmatrix}, 
\end{equation}
which was derived using the following relations: 
\begin{equation} \label{creation}
S^+_x |n\rangle = q^{(-N-1+2x)/2} (1-n_x) |n-1 \rangle, \qquad
\langle n| S^-_x = q^{(-N-1+2x)/2} \langle n-1| (1-n_x). 
\end{equation}
The relations (\ref{creation}) lead to a recursion relation for an $l$-point correlation function with respect to $n$: 
\begin{equation} \label{l-corr}
\langle n | n_{x_1} \dots n_{x_l} | n \rangle_{\rm norm} 
= \frac{[n] q^{- N - 1 + 2x}}{[N - n + 1]} 
	\langle n-1 | n_{x_1} \dots n_{x_l-1} (1 - n_{x_l}) | n-1 \rangle_{\rm norm}. 
\end{equation}

In contrast to the recursion relation (\ref{l-corr}), by which one needs to compute correlation functions in basis of different particle-sectors, we found another recursion relation which does not change the number of particles: 
\begin{proposition} \label{prop:n-point}
An $l$-point function is decomposed into one-point functions: 
\begin{equation} \label{n-point}
\langle n| n_{x_1} n_{x_2} \cdots n_{x_{l}} | n\rangle_{\rm norm}
= \sum_{j=1}^{l} \prod_{k=1 \atop k\neq j}^{\ell} \frac{q^{(x_k - x_j)}}{q^{(x_k - x_j)} - q^{-(x_k - x_j)}} \cdot
	\langle n| n_{x_j} |n \rangle_{\rm norm}. 
\end{equation}
\end{proposition}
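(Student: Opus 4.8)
\emph{Proof plan.} Write $C^{(n)}_l(x_1,\dots,x_l)=\langle n|\,n_{x_1}\cdots n_{x_l}\,|n\rangle_{\rm norm}$, assume the sites $x_1,\dots,x_l$ to be pairwise distinct (and $q\neq1$, so that the $z_i$ below are distinct), and set $z_i=q^{2x_i}$,
\begin{equation}
 f_{jk}=\frac{q^{x_k-x_j}}{q^{x_k-x_j}-q^{-(x_k-x_j)}}=\frac{z_k}{z_k-z_j},\qquad A^{(l)}_j=\prod_{k\neq j}f_{jk}\quad(k\in\{1,\dots,l\}\setminus\{j\}),
\end{equation}
so that $A^{(l)}_j$ is exactly the coefficient appearing in (\ref{n-point}). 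I would prove
\begin{equation}\label{pp-claim}
 C^{(n)}_l(x_1,\dots,x_l)=\sum_{j=1}^{l}A^{(l)}_j\,C^{(n)}_1(x_j)
\end{equation}
by induction on $n$, simultaneously for all $l$, using only the particle-lowering recursion (\ref{l-corr}); the closed-form one-point expression (\ref{1p-corr}) is not needed. The base case $n=0$ is immediate since $n_x|0\rangle=0$ makes both sides vanish, and for $l=1$ the identity is tautological ($A^{(1)}_1$ is an empty product). Since $C^{(n)}_l$ is symmetric in its arguments, (\ref{l-corr}) may be applied to strip off whichever factor $n_{x_i}$ is convenient.

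For the inductive step (with $l\ge2$), strip off $n_{x_l}$ via (\ref{l-corr}) and split through $1-n_{x_l}$:
\begin{equation}
 C^{(n)}_l(x_1,\dots,x_l)=\frac{[n]\,q^{-N-1}z_l}{[N-n+1]}\bigl(C^{(n-1)}_{l-1}(x_1,\dots,x_{l-1})-C^{(n-1)}_l(x_1,\dots,x_l)\bigr).
\end{equation}
Inserting the induction hypothesis into the two level-$(n-1)$ correlators and using the factorization $A^{(l)}_j=A^{(l-1)}_j f_{jl}$ for $j\le l-1$ together with $f_{jk}+f_{kj}=1$ (hence $1-f_{jl}=f_{lj}$), the difference collapses to
\begin{equation}
 C^{(n)}_l(x_1,\dots,x_l)=\frac{[n]\,q^{-N-1}z_l}{[N-n+1]}\Bigl(\sum_{j=1}^{l-1}A^{(l-1)}_j f_{lj}\,C^{(n-1)}_1(x_j)-A^{(l)}_l\,C^{(n-1)}_1(x_l)\Bigr).
\end{equation}

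On the other side of (\ref{pp-claim}), substituting the $l=1$ instance of (\ref{l-corr}), namely $C^{(n)}_1(x)=\tfrac{[n]q^{-N-1}z_x}{[N-n+1]}\bigl(1-C^{(n-1)}_1(x)\bigr)$, gives
\begin{equation}
 \sum_{j=1}^{l}A^{(l)}_j\,C^{(n)}_1(x_j)=\frac{[n]\,q^{-N-1}}{[N-n+1]}\Bigl(\sum_{j=1}^{l}A^{(l)}_j z_j-\sum_{j=1}^{l}A^{(l)}_j z_j\,C^{(n-1)}_1(x_j)\Bigr).
\end{equation}
Comparing the two displays term by term, the coefficients of each $C^{(n-1)}_1(x_j)$ agree: for $j\le l-1$ this is the one-site identity $z_l f_{lj}=-z_j f_{jl}$ (immediate from $f_{lj}=z_j/(z_j-z_l)$) combined with $A^{(l)}_j=A^{(l-1)}_j f_{jl}$, and for $j=l$ it is trivial. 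What is left is the vanishing of $\sum_{j=1}^{l}A^{(l)}_j z_j$ for $l\ge2$; since $A^{(l)}_j z_j=\prod_{k}z_k\,/\,\prod_{k\neq j}(z_k-z_j)$, this is equivalent to the classical identity $\sum_{j}\prod_{k\neq j}(z_j-z_k)^{-1}=0$ for $l\ge2$ (it is the leading coefficient of the Lagrange interpolant of a constant function through $l$ distinct nodes). This closes the induction and yields (\ref{n-point}).

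The crux is precisely this last reconciliation: (\ref{l-corr}) is asymmetric, isolating one site $x_l$ and emitting the factor $q^{-N-1}z_l$, whereas (\ref{n-point}) is symmetric in all sites; the content of the proposition is that the $q$-weights $A^{(l)}_j$ are exactly the ones that redistribute this factor over all of $x_1,\dots,x_l$, which is governed by the partial-fraction identities above. I would therefore organize the write-up by first isolating the three identities ($f_{jk}+f_{kj}=1$, $\ z_l f_{lj}=-z_j f_{jl}$, and the sum rule $\sum_j A^{(l)}_j z_j=0$) and only then performing the otherwise routine coefficient bookkeeping. A more computational alternative is to write the steady state $|n\rangle$ explicitly as a $q$-weighted sum over $n$-particle configurations and evaluate the restricted sum defining $C^{(n)}_l$ directly; that too reduces to the same partial-fraction identities, but the recursion argument is shorter and uses only facts already established in the text.
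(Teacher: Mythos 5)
Your proof is correct, but it is organized quite differently from the paper's. The paper inducts on the number of points $l$: it first isolates a two-point lemma, $(q^{x_2-x_1}-q^{-(x_2-x_1)})\langle n|n_{x_1}n_{x_2}|n\rangle_{\rm norm}=q^{x_2-x_1}\langle n|n_{x_1}|n\rangle_{\rm norm}-q^{-(x_2-x_1)}\langle n|n_{x_2}|n\rangle_{\rm norm}$, proved by evaluating $\langle n-1|(1-n_{x_1})(1-n_{x_2})|n-1\rangle_{\rm norm}$ in two ways via (\ref{l-corr}), and then in the inductive step applies the $l$-point decomposition to $n_{x_1}\cdots n_{x_l}$ with the extra factor $n_{x_{l+1}}$ still present, reducing everything to two-point functions which the lemma then splits. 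You instead induct on the particle number $n$, uniformly in $l$, using only (\ref{l-corr}) and three partial-fraction identities in the variables $z_i=q^{2x_i}$, the last of which ($\sum_j A^{(l)}_j z_j=0$) is the Lagrange-interpolation sum rule; I have checked the coefficient matching and it closes correctly. The trade-off: the paper's route is shorter once the two-point lemma is in hand and exhibits the pairwise structure of the decomposition, but its inductive step tacitly applies a statement about $\langle n|\cdots|n\rangle$ matrix elements to a correlator carrying an extra operator insertion $n_{x_{l+1}}$, which strictly speaking requires either an operator-level restatement of the hypothesis or an extra argument; your $n$-induction avoids that subtlety entirely at the cost of making the combinatorics (the redistribution of the asymmetric factor $q^{2x_l}$ over all sites) fully explicit. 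Your standing assumptions — pairwise distinct $x_i$ and $q\neq 1$, so the coefficients are finite — are also implicit in the paper's statement and worth recording either way, since coincident points are handled trivially by $n_x^2=n_x$.
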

\begin{proof}
The proof is given by an induction on $l$. Before starting the proof, let us remark the following lemma:
\begin{lemma}
 A two-point function is decomposed into one-point functions: 
 \begin{equation} \label{2p-lemma}
(q^{x_2-x_1}-q^{-(x_2-x_1)}) \cdot \langle n| n_{x_1} n_{x_2} |n \rangle_{\rm norm}
= q^{x_2-x_1} \cdot \langle n| n_{x_1} |n \rangle_{\rm norm} - q^{-(x_2-x_1)} \cdot \langle n| n_{x_2} |n \rangle_{\rm norm}. 
\end{equation}
\end{lemma}
\begin{proof}
 This lemma can be proved by considering two expressions of the following function; 
\begin{equation}
 \langle n-1| (1 - n_{x_1}) (1 - n_{x_2}) |n-1 \rangle_{\rm norm}. 
\end{equation}
First applying the formula (\ref{l-corr}) to the operator $(1-n_{x_2})$, one obtains 
 \begin{equation} \label{2p-lemma2}
\begin{split}
&\langle n-1| ( 1 - n_{x_1} )( 1 - n_{x_2} ) |n-1 \rangle_{\rm norm} \\
&= \langle n-1| (1 - n_{x_2}) |n-1 \rangle_{\rm norm} - \langle n-1| n_{x_1} (1 - n_{x_2}) |n-1 \rangle_{\rm norm} \\
&= \frac{[N-n+1]}{[n] q^{-N-1+2x_2}} \cdot 
	\langle n| (n_{x_2} - n_{x_1} n_{x_2}) |n \rangle_{\rm norm}, 
\end{split}
\end{equation}
and then applying to the operator $(1-n_{x_1})$, one has 
\begin{equation} \label{2p-lemma3}
\begin{split}
&\langle n-1| ( 1 - n_{x_1} )( 1 - n_{x_2} ) |n-1 \rangle_{\rm norm} \\
&= \langle n-1| (1 - n_{x_1}) |n-1 \rangle_{\rm norm} - \langle n-1| n_{x_2} (1 - n_{x_1}) |n-1 \rangle_{\rm norm} \\
&= \frac{[N-n+1]}{[n] q^{-N-1+2x_1}} \cdot 
	\langle n| (n_{x_1} - n_{x_1} n_{x_2}) |n \rangle_{\rm norm}. 
\end{split}
\end{equation}
From (\ref{2p-lemma2}) and (\ref{2p-lemma3}), decomposition of two-point functions (\ref{2p-lemma}) is obtained. 
\end{proof}
Assume the relation (\ref{n-point}) holds for an $l$-point function. Then an $(l + 1)$-point function is evaluated as
\begin{equation} 
\begin{split}
\langle n| n_{x_1} n_{x_2} \cdots n_{x_{l}} n_{x_{l + 1}} |n \rangle_{\rm norm} 
&= \sum_{j=1}^{l} \prod_{k=1 \atop k\neq j}^{l} \frac{q^{x_k - x_j}}{q^{x_k - x_j} - q^{-(x_k - x_j)}} \cdot 
	\langle n| n_{n_{x_j}} n_{x_{l+1}} |n \rangle_{\rm norm}. 
\end{split}
\end{equation} 
Using the decomposition formula of two-point functions into one-point functions (\ref{2p-lemma}) to the right-hand side, we obtain the relation (\ref{n-point}) for an $(l+1)$-point function. 
\end{proof}

Substituting the expression for the one-point function (\ref{1p-corr}) into (\ref{n-point}), one obtains the expression for the $l$-point function.

\ifx10
Now we show how the $l$-point correlation function can be derived. In principle, the recursion relation (\ref{l-corr}) allows us to compute correlation functions for arbitrary $l$. As the simplest example, we show the derivation of the two-point functions. 

Using the recursion relation (\ref{l-corr}), the two-point function is reduced to a summation of the two-point function and one-point function: 
\begin{equation}
\begin{split}
 \langle n| n_{x_1} n_{x_2} |n \rangle_{\rm norm}
 &= \langle n-1| n_{x_1} |n-1 \rangle_{\rm norm}
 - \langle n-1| n_{x_1} n_{x_2} |n-1 \rangle_{\rm norm}.
\end{split}
\end{equation}
Using the initial condition obtained as 
\begin{equation}
\langle 2| n_x n_{x+1} |2 \rangle_{\rm norm} 
= 
(-1)^N \begin{bmatrix} N \\ n \end{bmatrix}^{-1}
	q^{-n (N+1-2x)}, 
\end{equation}
we have the expression for the two-point function:  
\begin{equation}
\langle n| n_x n_{x+1} |n \rangle_{\rm norm} 
=
\sum_{k=1}^{n-1} \sum_{j=0}^{k-1} (-1)^{n-j} 
	\begin{bmatrix} N \\ n \end{bmatrix}^{-1}
	\begin{bmatrix} N \\ j \end{bmatrix}
	q^{-(n-j) (N+1-2x)}. 
\end{equation}
\fi

\section*{Appendix~7: Currents in terms of the particle-counting operators} \label{sec:currents}
\addcontentsline{toc}{section}{Appendix}
Here we give useful expressions for the seven types of expectation values in (\ref{currents4}) in terms of particle-counting operators. 
\begin{equation} \label{currents3}
\begin{split}
 &\langle 2;n| \left(\begin{smallmatrix} 0&& \\ &1& \\ &&0 \end{smallmatrix}\right)_{x} \left(\begin{smallmatrix} 1&& \\ &0& \\ &&0 \end{smallmatrix}\right)_{x+1} |2;n \rangle_{\rm norm} \\
 &= -\tfrac{q^{-5} (q+q^{-1})}{(q-q^{-1}) (q^2-q^{-2}) (q^3-q^{-3})} \langle n| n_{2x-1} |n \rangle_{\rm norm}
 - \tfrac{q^{-3} (q+q^{-1})}{(q^{-1}-q) (q-q^{-1}) (q^2-q^{-2})} \langle n| n_{2x} |n \rangle_{\rm norm} \\
 &\hspace{4mm} 
 - \tfrac{q^{-1} (q+q^{-1})}{(q^{-2}-q^2) (q^{-1}-q) (q-q^{-1})} \langle n| n_{2x+1} |n \rangle_{\rm norm}
 - \tfrac{q (q+q^{-1})}{(q^{-3}-q^3) (q^{-2}-q^2) (q^{-1}-q)} \langle n| n_{2x+2} |n \rangle_{\rm norm}, 
\\[2ex] 
 &\langle 2;n| \left(\begin{smallmatrix} 0&& \\ &0& \\ &&1 \end{smallmatrix}\right)_{x} \left(\begin{smallmatrix} 0&& \\ &1& \\ &&0 \end{smallmatrix}\right)_{x+1} |2;n \rangle_{\rm norm} \\
 &= -\tfrac{q (q+q^{-1})}{(q-q^{-1}) (q^2-q^{-2}) (q^3-q^{-3})} \langle n| n_{2x-1} |n \rangle_{\rm norm}
 - \tfrac{q^{-1} (q+q^{-1})}{(q^{-1}-q) (q-q^{-1}) (q^2-q^{-2})} \langle n| n_{2x} |n \rangle_{\rm norm} \\
 &\hspace{4mm} 
 - \tfrac{q^{-3} (q+q^{-1})}{(q^{-2}-q^2) (q^{-1}-q) (q-q^{-1})} \langle n| n_{2x+1} |n \rangle_{\rm norm}
 - \tfrac{q^{-5} (q+q^{-1})}{(q^{-3}-q^3) (q^{-2}-q^2) (q^{-1}-q)} \langle n| n_{2x+2} |n \rangle_{\rm norm},  
\\[2ex]
 &\langle 2;n| \left(\begin{smallmatrix} 0&& \\ &0& \\ &&1 \end{smallmatrix}\right)_{x} \left(\begin{smallmatrix} 1&& \\ &0& \\ &&0 \end{smallmatrix}\right)_{x+1} |2;n \rangle_{\rm norm} \\
 &= -\tfrac{q^{-4}}{(q-q^{-1}) (q^2-q^{-2}) (q^3-q^{-3})} \langle n| n_{2x-1} |n \rangle_{\rm norm}
 - \tfrac{q^{-4}}{(q^{-1}-q) (q-q^{-1}) (q^2-q^{-2})} \langle n| n_{2x} |n \rangle_{\rm norm} \\
 &\hspace{4mm} 
 - \tfrac{q^{-4}}{(q^{-2}-q^2) (q^{-1}-q) (q-q^{-1})} \langle n| n_{2x+1} |n \rangle_{\rm norm}
 - \tfrac{q^{-4}}{(q^{-3}-q^3) (q^{-2}-q^2) (q^{-1}-q)} \langle n| n_{2x+2} |n \rangle_{\rm norm}, 
\\[2ex]
 &\langle 2;n| \left(\begin{smallmatrix} 0&& \\ &1& \\ &&0 \end{smallmatrix}\right)_{x} \left(\begin{smallmatrix} 0&& \\ &1& \\ &&0 \end{smallmatrix}\right)_{x+1} |2;n \rangle_{\rm norm} \\
 &= -\tfrac{(q+q^{-1})^2}{(q-q^{-1}) (q^2-q^{-2}) (q^3-q^{-3})} \langle n| n_{2x-1} |n \rangle_{\rm norm}
 - \tfrac{(q+q^{-1})^2}{(q^{-1}-q) (q-q^{-1}) (q^2-q^{-2})} \langle n| n_{2x} |n \rangle_{\rm norm} \\
 &\hspace{4mm} 
 - \tfrac{(q+q^{-1})^2}{(q^{-2}-q^2) (q^{-1}-q) (q-q^{-1})} \langle n| n_{2x+1} |n \rangle_{\rm norm}
 - \tfrac{(q+q^{-1})^2}{(q^{-3}-q^3) (q^{-2}-q^2) (q^{-1}-q)} \langle n| n_{2x+2} |n \rangle_{\rm norm}, 
\\[2ex]
 &\langle 2;n| \left(\begin{smallmatrix} 1&& \\ &0& \\ &&0 \end{smallmatrix}\right)_{x} \left(\begin{smallmatrix} 0&& \\ &1& \\ &&0 \end{smallmatrix}\right)_{x+1} |2;n \rangle_{\rm norm} \\
 &= -\tfrac{q^{-1} (q+q^{-1})}{(q-q^{-1}) (q^2-q^{-2}) (q^3-q^{-3})} \langle n| n_{2x-1} |n \rangle_{\rm norm}
 - \tfrac{q (q+q^{-1})}{(q^{-1}-q) (q-q^{-1}) (q^2-q^{-2})} \langle n| n_{2x} |n \rangle_{\rm norm} \\
 &\hspace{4mm} 
 - \tfrac{q^3 (q+q^{-1})}{(q^{-2}-q^2) (q^{-1}-q) (q-q^{-1})} \langle n| n_{2x+1} |n \rangle_{\rm norm}
 - \tfrac{q^5 (q+q^{-1})}{(q^{-3}-q^3) (q^{-2}-q^2) (q^{-1}-q)} \langle n| n_{2x+2} |n \rangle_{\rm norm}, 
\\[2ex]
 &\langle 2;n| \left(\begin{smallmatrix} 0&& \\ &1& \\ &&0 \end{smallmatrix}\right)_{x} \left(\begin{smallmatrix} 0&& \\ &0& \\ &&1 \end{smallmatrix}\right)_{x+1} |2;n \rangle_{\rm norm} \\
 &= -\tfrac{q^5 (q+q^{-1})}{(q-q^{-1}) (q^2-q^{-2}) (q^3-q^{-3})} \langle n| n_{2x-1} |n \rangle_{\rm norm}
 - \tfrac{q^3 (q+q^{-1})}{(q^{-1}-q) (q-q^{-1}) (q^2-q^{-2})} \langle n| n_{2x} |n \rangle_{\rm norm} \\
 &\hspace{4mm} 
 - \tfrac{q (q+q^{-1})}{(q^{-2}-q^2) (q^{-1}-q) (q-q^{-1})} \langle n| n_{2x+1} |n \rangle_{\rm norm}
 - \tfrac{q^{-1} (q+q^{-1})}{(q^{-3}-q^3) (q^{-2}-q^2) (q^{-1}-q)} \langle n| n_{2x+2} |n \rangle_{\rm norm}, 
\\[2ex]
 &\langle 2;n| \left(\begin{smallmatrix} 1&& \\ &0& \\ &&0 \end{smallmatrix}\right)_{x} \left(\begin{smallmatrix} 0&& \\ &0& \\ &&1 \end{smallmatrix}\right)_{x+1} |2;n \rangle_{\rm norm} \\
 &= -\tfrac{q^4}{(q-q^{-1}) (q^2-q^{-2}) (q^3-q^{-3})} \langle n| n_{2x-1} |n \rangle_{\rm norm}
 - \tfrac{q^4}{(q^{-1}-q) (q-q^{-1}) (q^2-q^{-2})} \langle n| n_{2x} |n \rangle_{\rm norm} \\
 &\hspace{4mm} 
 - \tfrac{q^4}{(q^{-2}-q^2) (q^{-1}-q) (q-q^{-1})} \langle n| n_{2x+1} |n \rangle_{\rm norm}
 - \tfrac{q^4}{(q^{-3}-q^3) (q^{-2}-q^2) (q^{-1}-q)} \langle n| n_{2x+2} |n \rangle_{\rm norm}. 
\end{split}
\end{equation}

\bibliographystyle{plain}
\bibliography{reference}
\end{document}